\documentclass[11pt]{article}

\usepackage{times}
\usepackage{amsmath}
\usepackage{algorithm}
\usepackage{amsthm}
\usepackage{graphicx,epsfig}
\usepackage{caption}
\usepackage[noend]{algpseudocode}
\pagenumbering{gobble}

\pagenumbering{gobble}

\usepackage{fullpage}

\newtheorem{theorem}{Theorem}[section]
\newtheorem{corollary}[theorem]{Corollary}
\newtheorem{lemma}[theorem]{Lemma}
\newtheorem{observation}[theorem]{Observation}

\newtheorem{claim}[theorem]{Claim}

\newtheorem{assumption}[theorem]{Assumption}

\newtheorem{invariant}[theorem]{Invariant}
\newtheorem{property}[theorem]{Property}

\usepackage{hyperref}


\newenvironment{lp*}{\begin{equation*}  \begin{array}{lll}}{\end{array}\end{equation*}}

\begin{document}

\title{Deterministically Maintaining a $(2+\epsilon)$-Approximate Minimum Vertex Cover  in $O(1/\epsilon^2)$ Amortized Update Time}
\author{Sayan Bhattacharya\thanks{University of Warwick, Coventry, UK. {\tt Email:} S.Bhattacharya@warwick.ac.uk} \and Janardhan Kulkarni\thanks{Microsoft Research, Redmond, USA. {\tt Email:} janardhan.kulkarni@gmail.com}}

\begin{titlepage}
	\maketitle
	\pagenumbering{roman}
\begin{abstract}
We consider the problem of maintaining an (approximately) minimum vertex cover in an $n$-node  graph $G = (V, E)$ that is getting updated dynamically via a sequence of edge insertions/deletions. We show how to maintain a $(2+\epsilon)$-approximate minimum vertex cover, {\em deterministically}, in this setting in $O(1/\epsilon^2)$ amortized update time. 

Prior to our work, the best known deterministic algorithm for maintaining a $(2+\epsilon)$-approximate minimum vertex cover was due to Bhattacharya, Henzinger and Italiano [SODA 2015]. Their algorithm has an update time of $O(\log n/\epsilon^2)$. Recently, Bhattacharya, Chakrabarty, Henzinger [IPCO 2017] and Gupta, Krishnaswamy, Kumar, Panigrahi [STOC 2017] showed how to maintain an $O(1)$-approximation in $O(1)$-amortized update time for the same problem.  Our result gives an {\em exponential} improvement over the update time of Bhattacharya et al. [SODA 2015], and nearly matches the performance of the {\em randomized} algorithm of Solomon [FOCS 2016] who gets an approximation ratio of $2$ and an expected amortized update time of $O(1)$.

We derive our result by analyzing, via a novel technique, a variant of the algorithm by Bhattacharya et al. We consider an idealized setting where the update time of an algorithm can take any arbitrary fractional value, and use  insights from this setting to come up with an appropriate potential function. Conceptually, this framework mimics the idea of an LP-relaxation for an optimization problem. The difference is that instead of relaxing an integral objective function, we relax the update time of an algorithm itself. We believe that this technique will find further applications in the analysis of dynamic algorithms.
\end{abstract}
	\newpage
	\setcounter{tocdepth}{2}
	\tableofcontents
\end{titlepage}

\newpage

\newcommand{\E}{\mathcal{E}}

\part{Extended Abstract}
\label{part:main}
\pagenumbering{gobble}

\newpage
\pagenumbering{arabic}

\section{Introduction}
\label{sec:intro}

Consider an undirected graph $G = (V, E)$ with $n= |V|$ nodes, and suppose that we have to compute an (approximately) minimum vertex cover\footnote{A vertex cover in $G$ is a subset of nodes $S \subseteq V$ such that every edge $(u, v) \in E$ has at least one endpoint in $S$.}  in $G$. This problem is well-understood in the static setting. There is a simple linear time greedy algorithm that returns a {\em maximal matching}\footnote{A matching in $G$ is a subset of edges $M \subseteq E$ such that no two edges in $M$ share a common endpoint. A matching $M$ is maximal if for every edge $(u, v) \in E \setminus M$, either $u$ or $v$ is matched in $M$.} in $G$. Let $V(M) \subseteq V$ denote the set of nodes that are matched in $M$. Using the duality between maximum matching and minimum vertex cover, it is easy to show that the set $V(M)$ forms a $2$-approximate minimum vertex cover in $G$. Accordingly, we can compute a $2$-approximate minimum vertex cover  in linear time. In contrast, under the Unique Games Conjecture~\cite{KhotR03}, there is no polynomial time  $(2-\epsilon)$-approximation algorithm for minimum vertex cover for any $\epsilon > 0$. In this paper, we consider the problem of maintaining an (approximately) minimum vertex cover in a {\em dynamic} graph, which gets {\em updated} via a sequence of edge insertions/deletions. The time taken to handle the insertion or deletion of an edge is called the {\em update time} of the algorithm. The goal is to design a dynamic algorithm with small approximation ratio whose update time is significantly faster than the trivial approach of recomputing the solution {\em from scratch} after every update.

A naive approach for this problem will be to maintain a maximal matching $M$ and the set of matched nodes $V(M)$ as follows. When an edge $(u, v)$ gets inserted into $G$, add the edge $(u, v)$ to the matching iff both of its endpoints $u, v$ are currently unmatched. In contrast, when an edge $(u, v)$ gets deleted from $G$, first check if the edge $(u, v)$ was  matched in $M$ just before this deletion. If yes, then remove the edge $(u, v)$ from $M$ and try to {\em rematch} its endpoints $x \in \{u, v\}$. Specifically, for every endpoint $x \in \{u, v\}$, scan through all the edges $(x, y) \in E$ incident on $x$ till an edge $(x, y)$ is found  whose other endpoint $y$ is currently unmatched, and at that point  add the edge $(x, y)$ to the matching $M$ and stop the scan. Since a node $x$ can have $\Theta(n)$ neighbors, this  approach leads to an update time of $\Theta(n)$.

Our main result is stated in Theorem~\ref{th:main}. Note that the amortized update time\footnote{Following the standard convention in dynamic algorithms literature,  an algorithm has $O(\alpha)$ amortized update time  if starting from a graph $G = (V, E)$ where $E = \emptyset$, it takes $O(t \cdot \alpha)$ time overall to handle any sequence of $t$ edge insertions/deletions in $G$.} of our algorithm is independent of $n$. As an aside, our algorithm  also maintains a $(2+\epsilon)$-approximate maximum {\em fractional} matching as a  dual certificate, deterministically, in $O(\epsilon^{-2})$ amortized update time.
\begin{theorem}
\label{th:main}
For any $0 < \epsilon < 1$, we can maintain a $(2+\epsilon)$-approximate minimum vertex cover in a dynamic graph, deterministically, in $O(\epsilon^{-2})$ amortized update time. 
\end{theorem}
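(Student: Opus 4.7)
The plan is to build on the hierarchical partition framework used by Bhattacharya, Henzinger, and Italiano, and then obtain the improved bound through a new potential function that effectively ``relaxes'' the discrete level structure.

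First, I would maintain a partition of $V$ into $L = \lceil \log_{1+\epsilon} n \rceil + O(1)$ levels, with $\ell(v) \in \{0, 1, \dots, L\}$ for each vertex $v$. Each edge $e = (u,v)$ is assigned a fractional weight $w_e = (1+\epsilon)^{-\max\{\ell(u), \ell(v)\}}$, and for each vertex I define $W_v = \sum_{e \ni v} w_e$. The algorithm maintains two invariants: (i) $W_v \le 1$ for every vertex, and (ii) if $\ell(v) \ge 1$ then $W_v \ge (1+\epsilon)^{-2}$ (that is, every non-ground-level vertex is nearly tight). The output vertex cover consists of all vertices with $W_v \ge (1+\epsilon)^{-2}$.

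Next, I would verify the $(2+\epsilon)$-approximation guarantee via LP duality. The weights $\{w_e\}$ form a feasible fractional matching by invariant (i), so $\sum_e w_e$ is a lower bound on the optimal fractional vertex cover, and hence on half of the optimal integral vertex cover. On the other hand, every vertex in our cover contributes at least $(1+\epsilon)^{-2}$ to $2\sum_e w_e = \sum_v W_v$, so the size of our cover is at most $2(1+\epsilon)^2 \sum_e w_e \le (2+O(\epsilon)) \cdot \mathrm{OPT}$. Rescaling $\epsilon$ yields the $(2+\epsilon)$ ratio.

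The update algorithm is standard: when an edge insertion or deletion violates (i) at some vertex $v$, I raise $\ell(v)$ until the invariant is restored (this may alter $w_e$ for edges incident to $v$, which can in turn violate invariants at neighbors, triggering a cascade); symmetrically, violations of (ii) lower $\ell(v)$. The per-update work is proportional to the total number of edge-weight changes, and a single level-change at a vertex $v$ moving to level $\ell'$ touches $O((1+\epsilon)^{\ell'})$ edges (the new tight degree).

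The main obstacle, as the abstract emphasizes, is the amortized analysis: a naive token argument charges $\log_{1+\epsilon} n$ levels per unit of reorganization work and gives the known $O(\log n / \epsilon^2)$ bound. To remove the $\log n$, I would introduce a potential $\Phi$ that assigns to each vertex an ``ideal'' continuous target level determined by its current degree in $G$, and measures the deviation of the algorithm's integer level from this target. The key claims to establish are: (a) a single edge insertion or deletion can change the sum of ideal levels by only $O(1/\epsilon)$, so the adversarial increase in $\Phi$ per update is $O(1/\epsilon^2)$; and (b) whenever the algorithm performs $\Omega((1+\epsilon)^{\ell})$ work at level $\ell$ to fix an invariant, the gap between the integer and fractional levels at $v$ (and its neighbors) shrinks enough that $\Phi$ drops by at least a matching amount. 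Designing $\Phi$ so that both (a) and (b) hold simultaneously, without a $\log n$ scaling in either direction, is the technical heart of the proof; once it is in place, a standard amortized accounting yields $O(1/\epsilon^2)$ amortized update time.
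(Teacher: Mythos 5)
Your algorithmic framework is the right one (a variant of BHI15 with modified tight/slack conditions), and your approximation argument via LP duality is correct. You have also correctly identified the paper's high-level strategy: the continuous ``thought experiment'' is meant to guide the design of a potential function that escapes the $O(\log n)$ token-per-level accounting. But there is a genuine gap at the technical heart. The potential you sketch measures the deviation of each vertex's integer level from an ``ideal continuous level determined by its current degree in $G$.'' As stated this cannot satisfy your claims (a) and (b) together. First, a vertex's ideal level in the maximal fractional matching is not a function of its degree alone: it is set by a global equilibrium across the whole levelled neighborhood structure, which shifts under every cascade, so ``determined by degree'' is already off. Second, the units do not match. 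A single up-jump of a node at level $k$ costs $\Theta\bigl(|N_x(0,k)|\bigr)$ data-structure time, which can be $\Theta\bigl((1+\epsilon)^k/\epsilon\bigr)$, while freeing only one unit of level deviation; if instead you scale the deviation by the level-dependent cost to make claim (b) hold, then inserting an edge at a high level $k$ can raise the ideal level of $u$ by $\Theta(1/\epsilon)$ levels and thus raise $\Phi$ by $\Theta\bigl((1+\epsilon)^k/\epsilon\bigr)$, destroying claim (a). There is no obvious choice of scaling that fixes both directions.

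The paper resolves this by tracking \emph{weight} deviation, not level deviation, and by indexing the potential by $(\text{vertex},\text{level})$ pairs. Concretely, an up-dirty node $x$ stores $\Phi(x,k)=W_{x\to k}-W_{x\to k+1}$ at each level $k$ from $\ell(x)$ up to the highest level where $W_{x\to k}\ge 1$; a down-dirty node stores $1-W_x$ at its current level; and energy at level $k$ is $\Phi(x,k)\cdot(1+\epsilon)^k\epsilon^{-1}$. The two claims you need then take a very specific combinatorial form: (i) an edge insertion/deletion injects only $O(1/\epsilon)$ energy, which rests on showing that the new top level $\ell(x,\uparrow)$ stays within $O(1)$ hops of $\ell(x)$ because $x$ was passive beforehand; and (ii) when a node jumps and releases $\delta^*$ units of potential of some type $(k,\alpha)$, the total potential absorbed across all nodes is at most $\delta^*$ and lands only at strictly smaller types in a total order $\succ$ on $\{0,\dots,L\}\times\{\uparrow,\downarrow\}$. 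Since the conversion rate between potential and energy decays geometrically under $\succ$, the cascade amplifies energy by only an $O(1/\epsilon)$ factor. Your proposal names the target but does not construct a potential with this cascade-decay property nor prove the two lemmas; what you call ``standard amortized accounting'' is precisely the novel content of the paper, so the proof as written is incomplete.
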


\subsection{Perspective} The first major result on maintaining a small  vertex cover and a large matching in a dynamic graph appeared in STOC 2010~\cite{OnakR10}. By now, there is a large body of work devoted to this topic: both on general graphs~\cite{AddankiS18,ArarCCSW18,BaswanaGS11,BernsteinS16,BernsteinS15,BhattacharyaHI15,BhattacharyaHN17,BhattacharyaHN16,CharikarS18,Gupta,GuptaP13,KopelowitzKPS14,NeimanS13,OnakR10,PelegS16} and on graphs with bounded arboricity~\cite{BernsteinS16,BernsteinS15,KopelowitzKPS14,PelegS16}. These results give interesting tradeoffs between various parameters such as (a) approximation ratio, (b) whether the algorithm is deterministic or randomized, and (c) whether the update time is amortized or worst case. In this paper, our focus is on aspects (a) and (b).  {\em We want to design a dynamic algorithm for minimum vertex cover that is deterministic and has (near) optimal approximation ratio and amortized update time.} In particular, we are {\em not} concerned with  the worst case update time of our algorithm. From this specific point of view, the literature on dynamic vertex cover can be summarized as follows.

\smallskip
\noindent {\bf Randomized algorithms.} Onak and Rubinfeld~\cite{OnakR10} presented a randomized algorithm for maintaining a $O(1)$-approximate minimum vertex cover in $O(\log^2 n)$ expected update time. This bound was improved upon by Baswana, Gupta, Sen~\cite{BaswanaGS11}, and subsequently by Solomon~\cite{Solomon16}, who obtained a $2$-approximation in $O(1)$ expected update time.

\smallskip
\noindent {\bf Deterministic algorithms.} Bhattacharya, Henzinger and Italiano~\cite{b} showed how to deterministically maintain a $(2+\epsilon)$-approximate minimum vertex cover in $O(\log n/\epsilon^2)$ update time. Subsequently,  Bhattacharya, Chakrabarty and Henzinger~\cite{BhattacharyaCH17} and Gupta, Krishnaswamy, Kumar and Panigrahy~\cite{Gupta} gave deterministic dynamic algorithms for this problem with an approximation ratio of $O(1)$ and update time of $O(1)$. The algorithms designed in these two papers~\cite{BhattacharyaCH17,Gupta} extend to the more general problem of dynamic set cover.

\begin{table}[h]
\begin{small}
\begin{center}
\begin{tabular}{||c|c|c|c||}
\hline

{\textbf{Approximation Ratio}} & {\textbf{Amortized Update Time}} & {\textbf{Algorithm}}  & {\textbf{Reference}} \\	\hline	 
          	
\hline\hline
$(2+\epsilon)$ & $O(\log n/\epsilon^2)$  & deterministic &  Bhattacharya et al.~\cite{b} \\ 
\hline
$O(1)$ & $O(1)$ & deterministic & Gupta et al.~\cite{Gupta} and \\
            &             &                       & Bhattacharya et al.~\cite{BhattacharyaCH17} \\
\hline
$2$ & $O(1)$ & randomized & Solomon~\cite{Solomon16} \\
\hline
\end{tabular}
\end{center}
\end{small}
\vspace{-5mm}
\caption{State of the art on dynamic algorithms with fast amortized update times for  minimum vertex cover. }
\label{table:perspective} 
\end{table}

Thus, from our perspective the state of the art results on dynamic vertex cover prior to our paper are summarized in Table~\ref{table:perspective}. Note that the results stated in Table~\ref{table:perspective} are mutually incomparable. Specifically, the algorithms in~\cite{b,Gupta,BhattacharyaCH17} are all deterministic, but the paper~\cite{b} gives near-optimal (under Unique Games Conjecture) approximation ratio whereas the papers~\cite{Gupta,BhattacharyaCH17} give optimal update time. In contrast, the paper~\cite{Solomon16} gives optimal approximation ratio and update time, but the algorithm there is randomized. Our main result as stated in Theorem~\ref{th:main} combines the best of the both worlds, by showing that there is a dynamic algorithm for minimum vertex cover that is simultaneously (a) deterministic, and has (b) near-optimal approximation ratio and (c) optimal update time for constant $0 < \epsilon < 1$. In other words, we get an {\em exponential} improvement in the update time bound in~\cite{b}, without increasing the approximation ratio or using randomization. 

Most of the randomized dynamic algorithms in the literature, including the ones~\cite{AddankiS18,BaswanaGS11,OnakR10,Solomon16} that are relevant to this paper, assume that the adversary is {\em oblivious}. Specifically, this means that the future edge insertions/deletions in the input graph do not depend on the current solution being maintained  by the algorithm. A deterministic dynamic algorithm does not require this assumption, and hence designing deterministic dynamic algorithms for fundamental optimization problems such as minimum vertex cover is an important research agenda in itself. Our result should be seen as being part of this research agenda.

\smallskip
\noindent {\bf Our technique.} A novel and interesting aspect of our techniques is that we {\em relax the notion of update time of an algorithm}. We consider an idealized, continuous world where the update time of an algorithm can take any fractional, arbitrarily small value. We first study the behavior of a natural dynamic algorithm for minimum vertex cover in this idealized world. Using insights from this study, we  design an appropriate potential function for analyzing the update time of a minor variant of the algorithm from~\cite{b} in the {\em real-world}. Conceptually, this framework mimics the idea of an LP-relaxation for an optimization problem. The difference is that instead of relaxing an integral objective function, we relax the update time of an algorithm itself. We believe that this technique will find further applications in the analysis of dynamic algorithms.

\medskip
\noindent {\bf Organization of the rest of the paper.}  In Section~\ref{sec:old:soda15}, we present a summary of the algorithm from~\cite{b}.  A reader already familiar with the algorithm will be able to quickly skim through this section. In Section~\ref{sec:continuous}, we analyze the update time of the algorithm in an idealized, continuous setting.  This sets up the stage for the analysis of our actual algorithm in the real-world. Note that the first ten pages  consists of Section~\ref{sec:intro} -- Section~\ref{sec:continuous}.  For the reader  motivated enough to further explore the paper, we present an overview of our algorithm and analysis in the ``real-world" in Section~\ref{sec:algorithm:new}. The full version of our algorithm, along with a complete analysis of its update time, appears in Part~\ref{part:full}.


\section{The framework of Bhattacharya, Henzinger and Italiano~\cite{b}}
\label{sec:old:soda15}
We henceforth refer to the dynamic algorithm developed in~\cite{b}  as the BHI15 algorithm. In this section, we give a brief overview of the BHI15 algorithm, which is based on a primal-dual approach that simultaneously maintains a fractional matching\footnote{A fractional matching in $G = (V, E)$ assigns a weight $0 \leq w(e) \leq 1$ to each edge $e \in E$, subject to the constraint that the total weight received by any node from its incident edges is at most one. The size of the fractional matching is given by $\sum_{e \in E} w(e)$. It is known that the maximum matching problem is the dual of the minimum vertex cover problem. Specifically, it is known that the size of the maximum fractional matching is at most the size of the minimum vertex cover.} and a vertex cover whose sizes are within a factor of $(2+\epsilon)$ each other. 

\medskip
\noindent {\bf Notations.} Let $0 \leq w(e) \leq 1$ be the weight assigned to an edge $e \in E$. Let $W_y = \sum_{x \in N_y} w(x,y)$ be the total weight received by a node $y \in V$ from its incident edges, where $N_y = \{ x \in V : (x, y) \in E\}$ denotes the set of neighbors of $y$. The BHI15 algorithm maintains a partition of the node-set $V$ into $L+1$ {\em levels}, where $L = \log_{(1+\epsilon)} n$. Let $\ell(y) \in \{0, \ldots, L\}$ denote the level of a node $y \in V$. For any two integers $i, j \in [0, L]$ and any node $x \in V$, let $N_y(i,j) = \{ x \in N_y : i \leq \ell(x) \leq j \}$ denote the set of neighbors of $y$ that lie in between level $i$ and level $j$.  The level of an edge $(x,y) \in E$ is denoted by $\ell(x,y)$, and it is defined to be equal to the maximum level among its two endpoints, that is, we have $\ell(x,y) = \max(\ell(x), \ell(y))$. In the BHI15 framework, the weight of an edge is completely determined by its level. In particular, we have $w(x,y) = (1+\epsilon)^{-\ell(x,y)}$, that is, the weight $w(x, y)$  decreases exponentially with the level  $\ell(x,y)$.

\medskip
\noindent {\bf A static primal-dual algorithm.}
To get some intuition behind the BHI15 framework, consider the following static primal-dual algorithm. The algorithm proceeds in rounds. Initially, before the first round, every node $y \in V$ is at level $\ell(y) = L$ and every edge $(x, y) \in E$ has weight $w(x,y) = (1+\epsilon)^{-\ell(x,y)} = (1+\epsilon)^{-L} = 1/n$.  Since each node has at most $n-1$ neighbors in an $n$-node graph, it follows that $W_y = (1/n) \cdot |N_y|  \leq (n-1)/n$ for all nodes $y \in V$ at this point in time. Thus, we have $0 \leq W_y < 1$ for all nodes $y \in V$, so that the edge-weights $\{w(e)\}$ form a valid fractional matching in $G$ at this stage. We say that a node $y$ is {\em tight} if $W_y \geq 1/(1+\epsilon)$ and {\em slack} otherwise. In each subsequent round, we identify the set of tight nodes $T = \{y \in V : W_y \geq 1/(1+\epsilon) \}$, set $\ell(y) = \ell(y) -1$ for all  $y \in V \setminus T$, and then raise the weights of the edges in the subgraph induced by $V \setminus T$  by a factor of $(1+\epsilon)$. As we only raise the weights of the edges whose both endpoints are slack,  the edge-weights $\{w(e)\}$ continue to be a valid fractional matching in $G$. The algorithm stops when every edge has at least one tight endpoint, so that we are no longer left with any more edges whose weights can be raised.

Clearly, the above algorithm guarantees that the weight of an edge $(x, y) \in E$ is given by $w(x, y) = (1+\epsilon)^{-\ell(x,y)}$. It is also easy to check that the algorithm does not run for more than $L$ rounds, for the following reason. If after starting from $1/n = (1+\epsilon)^{-L}$ we increase the weight of an edge $(x, y)$ more than $L$ times by a factor of $(1+\epsilon)$,  then we would end up having $w(x, y) \geq (1+\epsilon) > 1$, and this would mean that the edge-weights $\{w(e)\}$ no longer form a valid fractional matching. Thus, we conclude that $\ell(y) \in \{0, \ldots, L \}$ for all $y \in V$ at the end of this algorithm. Furthermore, at that time every node $y \in V$ at a nonzero level $\ell(y) > 0$  is tight. The following invariant, therefore, is satisfied. 
\begin{invariant}
\label{inv:mock}
For every node $y$, we have $1/(1+\epsilon) \leq W_y \leq 1$ if $\ell(y) > 0$, and $0 \leq W_y \leq 1$ if $\ell(y) = 0$.
\end{invariant}

Every edge $(x, y) \in E$ has at least one tight endpoint under Invariant~\ref{inv:mock}. To see why this is true, note that if  the edge has some endpoint $z \in \{x, y\}$ at level $\ell(z) > 0$, then Invariant~\ref{inv:mock} implies that the node $z$ is tight. On the other hand, if $\ell(x) = \ell(y) = 0$, then the edge $(x, y)$ has weight $w(x, y) = (1+\epsilon)^{-0} = 1$ and  both its endpoints are tight, for we have $W_x, W_y \geq w(x,y) = 1$.  In other words, the set of tight nodes  constitute a valid vertex cover of the graph $G$. Since every tight node $y$ has weight $1 \geq W_y \geq 1/(1+\epsilon)$, and since every edge $(x,y)$ contributes its own weight $w(x, y)$ towards both $W_x$ and $W_y$, a simple counting argument implies that the number of tight nodes is within a factor $2(1+\epsilon)$ of the sum of the weights of  the edges in $G$. Hence, we have a valid vertex cover and a valid fractional matching whose sizes are within a factor $2(1+\epsilon)$ of each other.  It follows that the set of  tight nodes form  a $2(1+\epsilon)$-approximate minimum vertex cover and that the edge-weights $\{w(e)\}$ form a $2(1+\epsilon)$-approximate maximum fractional matching.

\medskip
\noindent {\bf Making the algorithm dynamic.} In the dynamic setting, all we need to ensure is that we maintain a partition of the node-set $V$ into levels $\{0, \ldots, L\}$ that satisfies Invariant~\ref{inv:mock}. By induction hypothesis, suppose that Invariant~\ref{inv:mock} is satisfied by every node until this point in time. Now, an edge $(u, v)$ is either inserted into or deleted from the graph. The former event  increases the weight $W_x$ of each node $x \in \{u, v\}$ by $(1+\epsilon)^{-\max(\ell(u),\ell(v))}$, whereas the latter event  decreases the weight $W_x$ of each node $x \in \{u, v\}$ by $(1+\epsilon)^{-\max(\ell(u),\ell(v))}$. As a result, one or both of the endpoints $\{u, v \}$ might now violate Invariant~\ref{inv:mock}. For ease of presentation, we say that a node is {\em dirty} if it violates Invariant~\ref{inv:mock}. To be more specific, a node $v$ is {\em dirty} if either (a)  $W_v > 1$ or (b) \{$\ell(v) > 0$ and $W_v < 1/(1+\epsilon)$\}.  In case (a), we say that the node $v$ is {\em up-dirty}, whereas in case (b) we say that the node $v$ is {\em down-dirty}.  To continue with our discussion, we noted that the insertion or deletion of an edge might make one or both of its endpoints dirty.  In such a scenario, we call the subroutine described in Figure~\ref{main:fig:dirty:bhi15}. Intuitively, this subroutine keeps changing the levels of the dirty nodes in a greedy manner till there is no dirty node  (equivalently, till Invariant~\ref{inv:mock} is satisfied). 

In a bit more details, suppose that a node $x$ at level $\ell(x) = i$ is up-dirty. If our goal is to make this node satisfy Invariant~\ref{inv:mock}, then we have to decrease its weight $W_x$. A greedy way to achieve this outcome is to increase its level $\ell(x)$ by one, by setting $\ell(x) = i+1$, without changing the level of any other node. This decreases the weights of all the edges $(x, y) \in E$ incident on $x$ whose other endpoints $y$ lie at levels $\ell(y) \leq i$. The weight of every other edge remains unchanged. Hence, this decreases the weight $W_x$. Note that this step changes the weights of the neighbors $y \in N_x(0,i)$ of $x$ that lie at level $i$ or below. These neighbors, therefore, might now become dirty. Such neighbors will be handled in some future iteration of the {\sc While} loop. Furthermore, it might be the case that the node $x$ itself remains dirty even after this step, since the weight $W_x$ has not decreased by a sufficient amount. In such an event, the node $x$ itself will be handled again in a future iteration of the {\sc While} loop. Next, suppose that the node $x$  is down-dirty. By an analogous argument, we need to increase the weight $W_x$ if we want to make the node $x$ satisfy Invariant~\ref{inv:mock}. Accordingly, we decrease its level $\ell(x)$ in step 5 of Figure~\ref{main:fig:dirty:bhi15}. As in the previous case, this step might lead to some neighbors of $x$ becoming dirty, who will be handled in future iterations of the {\sc While} loop. If the node $x$ itself remains dirty after this step, it will also be handled in some future iteration of the {\sc While} loop. 

To summarize, there is no dirty node when the {\sc While} loop terminates, and hence Invariant~\ref{inv:mock} {\em is} satisfied. But due to the {\em cascading effect} (whereby a given iteration of the {\sc While} loop might create additional dirty nodes), it is not clear why  this simple algorithm will have a small update time. In fact, it is by no means obvious that the {\sc While} loop in Figure~\ref{main:fig:dirty:bhi15} is even guaranteed to terminate. The main result in~\cite{b} was that (a slight variant of) this algorithm actually has an amortized update time of $O(\log n/\epsilon^2)$. Before proceeding any further, however, we ought to highlight the data structures used to implement this algorithm.

\begin{figure}[h!]
	\centerline{\framebox{
			\begin{minipage}{5.5in}
				\begin{tabbing}
					1. \=    {\sc While}  there exists some dirty node $x$: \\
					2.  \= \ \ \ \ \ \ \= {\sc If} the node $x$ is up-dirty, {\sc then} \qquad \qquad \qquad \qquad // We have $W_x > 1$. \\
					3. \> \> \ \ \ \ \ \ \ \= Move it up by one level by setting $\ell(x) \leftarrow \ell(x) + 1$. \\
					4. \> \> {\sc Else if} the node $x$ is down-dirty, {\sc then} \qquad \qquad \ \ \ \ \  // We have $\ell(x) > 0$ and $W_x < 1/(1+\epsilon)$.  \\
					5. \> \> \> Move it down one level by setting $\ell(x) \leftarrow \ell(x) - 1$.
					\end{tabbing}
			\end{minipage}
	}}
	\caption{\label{main:fig:dirty:bhi15} Subroutine:  FIX(.) is called after the insertion/deletion of an edge.}
\end{figure}

\medskip
\noindent {\bf Data structures.} Each node $x \in V$ maintains its  weight $W_x$ and level $\ell(x)$. This  information is sufficient for a node to detect when it becomes dirty. In addition, each node $x \in V$ maintains the following doubly linked lists: For every level $i > \ell(x)$, it maintains the list $E_x(i) = \{ (x, y) \in E : \ell(y) = i \}$ of edges incident on $x$ whose other endpoints lie at level $i$. Thus, every edge $(x, y) \in E_x(i)$ has a weight $w(x,y) = (1+\epsilon)^{-i}$. The node $x$ also maintains the list $E_x^- = \{ (x, y) \in E : \ell(y) \leq \ell(x) \}$ of edges whose other endpoints are at a level that is at most the level of $x$. Thus, every edge $(x, y) \in E_x^-$ has a weight of $w(x, y) = (1+\epsilon)^{-\ell(x)}$. We refer to these lists as the {\em neighborhood lists} of $x$. Intuitively, there is one neighborhood list for each nonempty subset of edges incident on $x$ that have the same weight.  For each edge $(x, y) \in E$, the node $x$ maintains a pointer to its own position in the neighborhood list of $y$ it appears in, and vice versa. Using these pointers, a node can be inserted into or deleted from a neighborhood list in $O(1)$ time. We now bound the time required  to update these data structures during one iteration of the {\sc While} loop.

\begin{claim}
\label{main:cl:ds:up}
Consider a node $x$ that moves from a level $i$ to level $i+1$ during an iteration of the {\sc While} loop in Figure~\ref{main:fig:dirty:bhi15}. Then it takes  $O(|N_x(0, i)|)$ time to update the relevant data structures during that iteration, where $N_x(0, i) = \{ y \in N_x : 0 \leq \ell(y) \leq i \}$ is the set of neighbors of $x$ that lie on or below level $i$. 
\end{claim}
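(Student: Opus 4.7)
The plan is to trace through exactly which pieces of the data structures change when $x$ moves from level $i$ to level $i+1$, and to show that each change can be done in $O(1)$ amortized per element of $N_x(0,i)$, with all other work being $O(1)$ total.

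First I would observe that the only edges whose weights actually change are those in $E_x^-$ just before the move, i.e., the edges to neighbors in $N_x(0,i)$. For each such edge $(x,y)$, the level $\ell(x,y) = \max(\ell(x),\ell(y))$ goes from $i$ to $i+1$, so its weight drops from $(1+\epsilon)^{-i}$ to $(1+\epsilon)^{-(i+1)}$; all other incident edges of $x$ (those to neighbors at level $\geq i+1$) keep weight $(1+\epsilon)^{-\ell(y)}$ unchanged. This lets me enumerate exactly the affected edges by walking through $E_x^-$, which takes $O(|N_x(0,i)|)$ time, and along the way update $W_x$ and each $W_y$ by the appropriate additive amount in $O(1)$ per edge.

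Next I would account for the neighborhood-list maintenance. On $x$'s side, after the move, the list $E_x(i+1)$ becomes part of the new $E_x^-$, because $x$ no longer needs a separate bucket for neighbors at its own level. Since each $E_x(j)$ and $E_x^-$ are doubly linked lists with head/tail pointers, concatenating the old $E_x^-$ with the old $E_x(i+1)$ takes $O(1)$ time. On the side of each neighbor $y \in N_x(0,i)$, the edge $(x,y)$ must migrate from the list $E_y(i)$ (where it sat because $\ell(x)=i$) to the list $E_y(i+1)$; using the cross-pointer stored with the edge, this splice is $O(1)$ per edge, hence $O(|N_x(0,i)|)$ in total. Crucially, for neighbors $y$ at levels $\geq i+1$, the edge $(x,y)$ is in $E_y^-$ both before and after the move (since $\ell(x) \leq \ell(y)$ in both cases), so no update is required on those sides, which is what saves us from paying $O(|N_x|)$.

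Summing the contributions — $O(|N_x(0,i)|)$ to re-scan $E_x^-$ and update $W_x$ and each $W_y$, plus $O(|N_x(0,i)|)$ to splice each affected edge between $E_y(i)$ and $E_y(i+1)$, plus $O(1)$ overhead to update $\ell(x)$ and merge $E_x(i+1)$ into $E_x^-$ — yields the stated bound $O(|N_x(0,i)|)$. The main conceptual obstacle, and the point I would emphasize in the write-up, is establishing that neighbors at levels strictly above $i$ require no list updates; once that is clear, the rest is a bookkeeping exercise relying on the cross-pointers and the doubly-linked-list representation.
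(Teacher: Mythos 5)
Your proposal is correct and takes essentially the same approach as the paper's (sketch-level) proof: both hinge on the observation that only neighbors $y \in N_x(0,i)$ require any list maintenance, while $x$'s own lists are fixed by a single $O(1)$ concatenation of $E_x^-$ with the old $E_x(i+1)$. One minor imprecision worth flagging: a neighbor $y$ with $\ell(y) = i$ had the edge $(x,y)$ in $E_y^-$ (not in $E_y(i)$) before the move, since $\ell(x) = \ell(y) = i$; the splice into $E_y(i+1)$ is still $O(1)$ via the cross-pointer, so the bound is unaffected.
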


\begin{proof}(Sketch) Consider the event where the node $x$ moves up from level $i$ to level $i+1$. The key observation is this. If the node $x$ has to change its own position in the neighborhood list of another node $y$ due to this event, then we must have $y \in N_x(0,i)$. And as far as changing the neighborhood lists of $x$ itself is concerned, all we need to do is to merge the list $E_x^-$ with the list $E_x(i+1)$, which takes $O(1)$ time.
\end{proof}

\begin{claim}
\label{main:cl:ds:down}
Consider a node $x$ that moves from a level $i$ to level $i-1$ during an iteration of the {\sc While} loop in Figure~\ref{main:fig:dirty:bhi15}. Then it takes  $O(|N_x(0, i)|)$ time to update the relevant data structures during that iteration, where $N_x(0, i) = \{ y \in N_x : 0 \leq \ell(y) \leq i \}$ is the set of neighbors of $x$ that lie on or below level $i$. 
\end{claim}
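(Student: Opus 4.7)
The plan is to mirror the argument sketched for Claim~\ref{main:cl:ds:up}, adapting it to the downward move. I would first determine which neighbors of $x$ require updates to \emph{their own} neighborhood lists. For a neighbor $y$ with $\ell(y) \geq i$, we have $\ell(x) = i \leq \ell(y)$ before the move and $\ell(x) = i-1 \leq \ell(y)$ after, so $x$ lives in $E_y^-$ in both cases and no action is needed. Hence only the neighbors in $N_x(0,i-1) \subseteq N_x(0,i)$ are affected: for each such $y$, we follow the cross-pointer stored with the edge $(x,y)$ to locate $x$'s entry in $y$'s old list $E_y(i)$ (it was there because $\ell(y) < i = \ell(x)$ before the move) and relocate it in $O(1)$ time into either $E_y^-$ (when $\ell(y) = i-1$) or $E_y(i-1)$ (when $\ell(y) < i-1$).

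Next I would update the neighborhood lists owned by $x$ itself. Before the move, $E_x^- = \{(x,y) \in E : \ell(y) \leq i\}$ and all of these edges share the weight $(1+\epsilon)^{-i}$. After $x$ drops to level $i-1$, the edges with $\ell(y) = i$ must be carved out into a new list $E_x(i)$ (well-defined since $i > \ell(x) = i-1$), while the remaining edges with $\ell(y) \leq i-1$ become the new $E_x^-$ (with updated weight $(1+\epsilon)^{-(i-1)}$). Producing this partition requires a single scan through the old $E_x^-$, whose size is $|N_x(0,i)|$, and hence takes $O(|N_x(0,i)|)$ time. Every other list $E_x(j)$ for $j \geq i+1$ is untouched, since the set of neighbors at each such level $j$ is unchanged and the weights of those edges depend only on $\ell(y) = j$.

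Summing the two contributions yields the claimed $O(|N_x(0,i)|)$ bound. The only delicate point is the observation that higher-level neighbors (those at levels $\geq i$) contribute no work to the update; without this, the bound could degrade to the total degree of $x$. Once this is noted, everything else is routine doubly-linked-list bookkeeping using the pre-existing cross-pointers between $x$ and each of its neighbors' lists, in direct analogy with the proof of Claim~\ref{main:cl:ds:up}.
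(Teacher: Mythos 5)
Your proposal is correct and follows essentially the same two-part decomposition as the paper's own argument: (i) only the neighbors in $N_x(0,i-1)$ require relocating $x$'s entry in their lists, and (ii) producing the new partition of $E_x^-$ into $E_x(i)$ and the new $E_x^-$ requires a full scan of the old $E_x^-$, which has size $|N_x(0,i)|$. You spell out the observation that neighbors at levels $\geq i$ need no updates a bit more explicitly than the paper's sketch, but the underlying accounting is identical.
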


\begin{proof}(Sketch)
Consider the event where the node $x$ moves down from level $i$ to level $i-1$.  If the node $x$ has to change its own position in the neighborhood list of another node $y$ due to this event, then we must have $y \in N_x(0,i-1)$. On the other hand, in order to  update the neighborhood lists of $x$ itself, we have to visit all the nodes $y \in E_x^-$ one after the other and check their levels. For each such node $y$, if we find that $\ell(y) = i$, then we have to move $y$ from the list $E_x^-$ to the list $E_x(i)$. Thus, the total time spent during this iteration is $O(|N_x(0,i-1)| + |N_x(0,i)|) = O(|N_x(0,1)|)$. The last equality holds since $N_x(0,i-1) \subseteq N_x(0,i)$.
\end{proof}

\subsection{The main technical challenge: Can we bring  down the update time to $O(1)$?}
\label{sec:challenge}

 As we mentioned previously, it was shown in~\cite{b} that the dynamic algorithm described above has an amortized update time of $O(\log n/\epsilon^2)$. In order to prove this bound, the authors in~\cite{b} had to use a complicated potential function. Can we show that (a slight variant of) the same algorithm actually has an update time of $O(1)$ for every fixed $\epsilon$? This seems to be quite a challenging goal, for the following reasons. For now, assume that $\epsilon$ is some small constant.  

In the potential function developed in~\cite{b}, whenever an edge $(u, v)$ is inserted into the graph, we create $O(1) \cdot (L - \max(\ell(u), \ell(v)))$ many {\em tokens}. For each endpoint $x \in \{u, v\}$ and each level $\max(\ell(u), \ell(v)) < i \leq L$, we store $O(1)$ tokens for the node $x$ at level $i$. These tokens are used to account for the time spent on updating the data structures when a node $x$ moves up from a lower level to a higher level, that is, in dealing with up-dirty nodes. It immediately follows that if we only restrict ourselves to the time spent in dealing with up-dirty nodes, then we get an amortized update time of $O(\log n)$. This is because of the following simple accounting: Insertion of an edge $(u, v)$ creates at most $O(L - \max(\ell(u), \ell(v))) = O(\log n)$ many tokens, and each of these tokens is used to pay for one unit of computation performed by our algorithm while dealing with the up-dirty nodes. Next, it is also shown in~\cite{b} that, roughly speaking, over a sufficiently long time horizon the time spent in  dealing with the down-dirty nodes is dominated by the time spent in dealing with the up-dirty nodes. This gives us an overall amortized update time of $O(\log n)$. From this very high level description of the potential function based analysis in~\cite{b}, it seems intrinsically challenging to overcome the $O(\log n)$ barrier. This is because nothing is preventing an edge $(u, v)$ from moving up $\Omega(\log n)$ levels after getting inserted, and according to~\cite{b} the only way we can bound this type of {\em work} performed by the algorithm is by {\em charging} it to the insertion of the edge $(u, v)$ itself. In recent years, attempts were made to overcome this $O(\log n)$ barrier. The papers~\cite{BhattacharyaCH17,Gupta}, for example, managed to improve the amortized update time to $O(1)$, but only at the cost of increasing the approximation ratio from $(2+\epsilon)$ to some unspecified constant $\Theta(1)$. The question of getting $(2+\epsilon)$-approximation in $O(1)$ time, however, remained wide open.

It seems unlikely that we will just  stumble upon a suitable potential function that proves the amortized bound of  $O(1)$ by  trial and error: There are  way too many options to choose from! What we instead need to look for is a systematic {\em meta-method} for finding the suitable potential function -- something that will allow us to prove the optimal possible bound for the given algorithm. This is elaborated upon in Section~\ref{sec:continuous}.

\section{Our technique: A thought experiment with a continuous setting}
\label{sec:continuous}

In order to search for a suitable potential function, we  consider an idealized  setting where the level of a node or an edge  can take {\em any}  (not necessarily integral) value in the {\em continuous} interval $[0, L]$, where $L = \log_{e} n$.  To ease notations, here we assume that the weight of an edge $(x, y)$ is given by $w(x, y) = e^{-\ell(x,y)}$, instead of being equal to $(1+\epsilon)^{-\ell(x,y)}$. This makes it possible to assign each node to a (possibly fractional) level in such a way that the edge-weights $\{w(e)\}$ form a {\em maximal}  fractional matching, and the nodes $y \in V$ with weights $W_y = 1$ form a $2$-approximate minimum vertex cover. We use the notations introduced in the beginning of Section~\ref{sec:old:soda15}. In the idealized setting, we ensure that the following invariant is satisfied.

\begin{invariant}
\label{inv:continuous}
For every node $y \in V$, we have $W_y = 1$ if $\ell(y) > 0$, and $W_y \leq 1$ if $\ell(y) = 0$.
\end{invariant}

\noindent {\bf A static primal-dual algorithm.} As in Section~\ref{sec:old:soda15}, under Invariant~\ref{inv:continuous} the levels of the nodes have a natural primal-dual interpretation. To see this, consider the following static algorithm. We initiate a continuous process  at  time $t = -\log n$. At this stage, we set $w(e) = e^{t} = 1/n$ for every edge $e \in E$. We say that a node $y$ is tight iff $W_y = 1$. Since the maximum degree of a node is at most $n-1$, no node is tight at time $t = -\infty$. With the passage of time, the edge-weights keep increasing exponentially with $t$.   During this process, whenever a node $y$ becomes tight we {\em freeze} (i.e., stop raising) the weights of all its incident edges. The process stops at time $t = 0$. The level of a node $y$ is defined as $\ell(y) = -t_y$, where $t_y$ is the time when it becomes tight during this process. If the node does not become tight till the very end, then  we define $\ell(y) = t_y = 0$. When the process ends at time $t = 0$, it is easy to check that  $w(x, y) = e^{- \ell(x,y)}$ for every edge $(x, y) \in E$ and that Invariant~\ref{inv:continuous} is satisfied.

We claim that under Invariant~\ref{inv:continuous}  the set of tight nodes form a $2$-approximate minimum vertex cover in $G$. To see why this is true, suppose that there is an edge $(x, y)$ between two  nodes $x$ and $y$ with $W_x, W_y < 1$. According to Invariant~\ref{inv:continuous}, both the  nodes $x, y$ are at level $0$. But this implies that $w(x,y) = e^{-0} = 1$ and hence $W_x, W_y \geq w(x,y) = 1$, which leads to a contradiction.  Thus, 
 the set of tight nodes must be a vertex cover in $G$. Since $W_v = 1$ for every tight node $v \in V$, and since every edge $(u, v) \in E$ contributes the weight $w(x, y)$ to both $W_x$ and $W_y$, a simple counting argument implies that the edge-weights $\{w(e) \}$ forms a fractional matching in $G$ whose size is at least $(1/2)$ times the number of tight nodes. The claim now follows from the duality between maximum fractional matching and minimum vertex cover. 

\medskip
We will now describe how Invariant~\ref{inv:continuous} can be maintained in a dynamic setting -- when edges are getting inserted into or deleted from the graph. For ease of exposition, we will use Assumption~\ref{assume:different}. Since the level of a node can take {\em any} value in the {\em continuous} interval $[0, L]$, this  does not appear to be too restrictive.

\begin{assumption}
\label{assume:different}
For any two nodes $x \neq y$, if $\ell(y), \ell(y) > 0$, then we have $\ell(x) \neq \ell(y)$. 
\end{assumption}

\noindent 
{\bf Notations.} Let $N^+_x = \{ y \in V : (x, y) \in E \text{ and } \ell(y) > \ell(x) \}$ denote the set of {\em up-neighbors} of a node $x \in V$, and let $N^-_x = \{ y \in V : (x, y) \in E \text{ and } \ell(y) < \ell(x) \}$ denote the set of {\em down-neighbors} of $x$.  Assumption~\ref{assume:different} implies that $N_x = N^+_x \cup N^-_x$. Finally, let $W_x^+$ (resp. $W_x^-$) denote the total weight of the edges incident on $x$ whose other endpoints are in $N_x^+$ (resp. $N_x^-$). We thus have $W_x = W_x^+ + W_x^-$ and $W_x^- = |N_x^-| \cdot e^{-\ell(x)}$. We will use these notations throughout the rest of this section.

\medskip
\noindent {\bf Insertion or deletion of an edge $(u, v)$.}  We focus only on the case of an edge insertion, as the case of edge deletion can be handled in an analogous manner. Consider the {\em event} where an edge $(u, v)$ is inserted into the graph. By induction hypothesis Invariant~\ref{inv:continuous} is satisfied just before this event, and without any loss of generality suppose that $\ell(v) = i \geq \ell(u) = j$ at that time. For ease of exposition, we assume that $i, j > 0$: the other case can be dealt with using similar ideas. Then, we have $W_u = W_v = 1$ just before the event (by Invariant~\ref{inv:continuous}) and $W_u = W_v = 1+e^{-i}$ just after the event (since $i \geq j$). So the nodes $u$ and $v$ violate Invariant~\ref{inv:continuous} just after the event. We now explain the process by which the nodes change their levels so as to ensure that Invariant~\ref{inv:continuous} becomes satisfied again. This process consists of two {\em phases} -- one for each endpoint. $x \in \{u, v\}$.  We now describe each of these phases.

\medskip
\noindent {\bf Phase I:} {\em This phase is defined by a continuous process which is driven by the node $v$.} Specifically, in this phase the node $v$ continuously increases its level so as to decrease its weight $W_v$. The process stops when the weight $W_v$ becomes equal to $1$.  During the same process, every other node $x \neq v$ continuously changes its level so as to ensure that its weight $W_x$ remains fixed.\footnote{To be precise, this statement does not apply to the nodes at level $0$. A node $x$ with $\ell(x) = 0$ remains at level $0$ as long as $W_x < 1$, and starts moving upward only when its weight $W_x$ is about to exceed $1$. But, morally speaking, this  does not add any new perspective to our discussion, and henceforth we ignore this case.} This creates a cascading effect which leads to a long chain of interdependent movements of nodes. To see this, consider an infinitesimal time-interval  $[t, t+dt]$ during which the node $v$ increases its level from $\ell(v)$ to $\ell(v) + d \ell(v)$. The weight of every edge $(v, x) \in E$ with $x \in N_v^-$ {\em decreases} during this interval, whereas the weight of every other edge remains unchanged. Thus, during this interval, the upward movement of the node $v$ leads to a {\em decrease} in the weight $W_x$ of every neighbor $x \in N_v^-$. Each such node $x \in N_v^-$ wants to {\em nullify} this effect and ensure that $W_x$ remains fixed. Accordingly,  each such node $x \in N_v^-$ {\em decreases} its level during the same infinitesimal time-interval $[t, t+dt]$ from $\ell(x)$ to $\ell(x) + d \ell(x)$, where $d \ell(x) < 0$. The value of $d \ell(x)$ is such that $W_x$ actually remains unchanged during the time-interval $[t, t+dt]$. Now,  the weights of the neighbors $y \in N_x^-$ of $x$ also get affected as $x$ changes its level, and as a result each such node $y$  also changes its level so as to preserve its own weight $W_y$, and so on and so forth. We emphasize that all these movements of different nodes occur {\em simultaneously}, and in a continuous fashion.  Intuitively, the set of nodes form a  self-adjusting system --  akin to a spring. Each node moves in a way which ensure that its weight becomes (or, remains equal to) a ``critical value". For the node $u$ this critical value is $1+e^{-i}$, and for every other node (at a nonzero level) this critical value is equal to $1$. Thus, every node other than $u$ satisfies Invariant~\ref{inv:continuous} when Phase I ends. At this point, we initiate Phase II described below.

\medskip
\noindent {\bf Phase II:} {\em This phase is defined by a continuous process which is driven by the node $u$.} Specifically, in this phase the node $u$ continuously increases its level so as to decrease its weight $W_u$. The process stops when $W_u$ becomes equal to $1$. As in Phase I, during the same process every other node $x \neq u$ continuously changes its level so as to ensure that $W_x$ remains fixed. Clearly, Invariant~\ref{inv:continuous} is satisfied when Phase II ends.

\medskip
\noindent {\bf ``Work": A proxy for update time.} We cannot implement the above continuous process using any data structure, and hence we cannot meaningfully talk about the {\em update time} of our algorithm in the idealized, continuous setting. To address this issue, we introduce the notion of {\em work}, which is defined as follows. We say that our algorithm performs $\delta \geq 0$ work  whenever it changes the level $\ell(x, y)$ of an edge $(x, y)$ by $\delta$.  Note that $\delta$ can take any arbitrary fractional value. To see how the notion of work relates to the notion of update time from Section~\ref{sec:old:soda15}, recall Claim~\ref{main:cl:ds:up} and Claim~\ref{main:cl:ds:down}. They state that whenever a node $x$ at level $\ell(x) = k$ moves up or down one level, it takes $O(|N_x(0,k)|)$ time to update the relevant data structures. A moment's thought will reveal that in the former case (when the node moves up) the total work done is equal to $|N_x(0, k)|$, and in the latter case (when the node moves down) the work done is equal to $|N_x(0,k-1)|$. Since $N_x(0,k-1) \subseteq N_x(0,k)$, we have  $|N_x(0,k-1)| \leq |N_x(0,k)|$. Thus, the work done by the algorithm is upper bounded by (and, closely related to) the time spent to update the date structures.  In light of this observation, we now focus on analyzing the work done by our algorithm in the continuous setting.

\subsection{Work done in handling the insertion or deletion of an edge $(u, v)$}
\label{sec:thought:exp} 

We focus on the case of an edge-insertion. The case of an edge-deletion can be analyzed using similar ideas. Accordingly, suppose that an edge $(u, v)$, where $\ell(v) \geq \ell(u)$, gets inserted into the graph. We first analyze the work done in Phase I, which is driven by the movement of $v$. Without any loss of generality, we assume that $v$ is changing its level in such a way that its weight $W_v$ is decreasing at unit-rate. Every other node $x$ at a nonzero level wants to preserve its weight $W_x$ at its current value. Thus, we have:
\begin{eqnarray}
\label{eq:lambda:rate:v}
\frac{d W_v}{dt} & = & -1 \\ 
\label{eq:lambda:rate:x} 
\frac{d W_x}{dt} & = & 0 \text{ for all nodes } x \neq v \text{ with } \ell(x) > 0.
\end{eqnarray} 
\noindent {\bf A note on how the sets $N^-_x$ and $N_x^+$ and the weights $W_x^-$ and $W_x^+$ change with time:} We will soon write down a few differential equations, which capture the behavior of the continuous process unfolding in Phase I  during an infinitesimally small time-interval $[t, t+dt]$. Before embarking on this task, however, we need to clarify the following important issue. Under Assumption~\ref{assume:different}, at time $t$ the (nonzero) levels of the nodes take distinct, finite values. Thus,  we have $\ell_t(x) \neq \ell_t(y)$ for any two nodes $x \neq y$ with $\ell_t(x), \ell_t(y) > 0$, where $\ell_t(z)$ denotes the level of a node $z$ at time $t$. The level of a node can only change by an infinitesimally small amount during the time-interval $[t+dt]$. This implies that if $\ell_t(x) > \ell_t(y)$ for any two nodes $x, y$, then we also have $\ell_{t+dt}(x) > \ell_{t+dt}(y)$. In words, while writing down a differential equation we can assume that the sets $N_x^+$ and $N_x^-$ remain unchanged throughout the infinitesimally small time-interval $[t, t+dt]$.\footnote{The sets $N_x^+$ and $N_x^-$ will indeed change over a sufficiently long, {\em finite} time-interval. The key observation is that we can ignore this change while writing down a differential equation for an infinitesimally small time-interval.} But, this observation does not apply to the weights $W_x^-$ and $W_x^+$, for  the weight $w(x, y) = e^{-\max(\ell(x), \ell(y))}$  of an edge will change if we move the level of its higher endpoint by an infinitesimally small amount.

\medskip
Let $s_x = \frac{d \ell(x)}{dt}$ denote the {\em speed} of a node $x \in V$. It is the rate at which the node $x$ is changing its level. Let $f(x, x')$ denote the rate at which the weight $w(x, x')$ of an edge $(x, x') \in E$ is changing. Note that:
\begin{equation}
\label{eq:lambda:3}
\text{If } \ell(x) > \ell(x'), \text{ then } f(x, x')  = \frac{d w(x, x')}{dt} = \frac{d e^{-\ell(x)}}{dt} = \frac{d \ell(x)}{dt} \cdot \frac{d e^{-\ell(x)}}{d\ell(x)}= - s_{x} \cdot w(x, x').
\end{equation}

\noindent Consider a node $x \neq v$ with $\ell(x) > 0$. By~(\ref{eq:lambda:rate:x}), we have $\frac{d W_x}{dt} = 0$. Hence, we derive that:
$$0 = \frac{d W_x}{dt} = \sum_{x' \in N_x^-} \frac{d w(x, x')}{dt} + \sum_{x' \in N_x^+} \frac{d w(x, x')}{dt} =  \sum_{x' \in N_x^-} f(x,x') + \sum_{x' \in N_x^+} f(x,x').$$
Rearranging the terms in the above equality, we get:
\begin{equation}
\label{eq:lambda:4}
\sum_{x' \in N^-_x} f(x, x') = - \sum_{x' \in N^+_x} f(x,x') \text{ for every node } x \in V \setminus \{v\} \text{ with } \ell(x) > 0.
\end{equation}
Now, consider the node $v$. By~(\ref{eq:lambda:rate:v}), we have $\frac{d W_v}{dt} = \sum_{x' \in N^-_v} \frac{d w(x', v)}{dt} = -1$. Hence, we get:
\begin{equation}
\label{eq:lambda:v}
\sum_{x' \in N^-_v} f(x', v)  = \frac{d W_v}{dt} = -1.
\end{equation}
Conditions~(\ref{eq:lambda:4}) and~(\ref{eq:lambda:v}) are reminiscent of a flow constraint. Indeed, the entire process can be visualized as follows. Let $|f(x, y)|$ be the {\em flow} passing through an edge $(x, y) \in E$. We {\em pump}  $1$ unit of flow {\em into} the node $v$ (follows from~(\ref{eq:lambda:rate:v})). This flow then splits up evenly among all the edges $(x, v) \in E$ with $x \in N_v^-$ (follows from~(\ref{eq:lambda:v}) and~(\ref{eq:lambda:3})). As we sweep across the system down to lower and lower levels, we see the same phenomenon: The flow coming into a node $x$ from its neighbors  $y \in N_x^+$ splits up evenly among its neighbors $y \in N_x^-$ (follows from~(\ref{eq:lambda:4}) and~(\ref{eq:lambda:3})). 
Our goal is to analyze the work done by our algorithm. Towards this end, let $P_{(x, x')}$ denote the {\em power} of an edge $(x, x') \in E$. This is the amount of  work being done by the algorithm on the edge $(x, x')$ per time unit. Thus, from~(\ref{eq:lambda:3}), we get:
\begin{equation}
\label{eq:lambda:5} 
\text{If } \ell(x) > \ell(x'), \text{ then } P_{(x, x')} = |s_x| = \frac{|f(x, x')|}{w(x, x')} = |f(x, x')| \cdot e^{\ell(x)}.
\end{equation}
Let $P_x$ denote the {\em power} of a node $x \in V$. We define it to be the amount of work being done by the algorithm for changing the level of  $x$ per time unit. This is the sum of the powers of the edges whose levels change due to the node $x$ changing its own level. Let $f^-(x) = \sum_{x' \in N_x^-} f(x, x')$.  From~(\ref{eq:lambda:3}), it follows that either $f(x', x) \geq 0$ for all $x' \in N_x^-$, or $f(x',x) \leq 0$ for all $x' \in N_x^-$. In other words, every term in the sum $\sum_{x' \in N_x^-} f(x, x')$ has the same sign. So the quantity $|f^-(x)|$ denotes the total flow moving from the node $x$ to its neighbors $x' \in N_x^-$, and we derive that:
\begin{equation}
\label{eq:lambda:6}
P_x = \sum_{x' \in N^-_x} P_{(x, x')} = e^{\ell(x)} \cdot \sum_{x' \in N_x^-} \left| f(x, x') \right| = e^{\ell(x)} \cdot \left| f^-(x) \right|
\end{equation}
The total work done by the algorithm per time unit in Phase I is equal to $\sum_{(x, x') \in E} P_{(x, x')} = \sum_{x \in V} P_x$. We will like to upper bound this sum.
We now make the following important observations. First, since the flow only moves downward after getting pumped into the node $v$ at unit rate,  conditions~(\ref{eq:lambda:4}) and~(\ref{eq:lambda:v}) imply that:
\begin{equation}
\label{eq:flow}
\sum_{x : \ell(x) = k} \left| f^-(x) \right| \leq 1 \text{ at every level }  k \leq \ell(v).
\end{equation}
Now, suppose that we get extremely lucky, and we end up in a situation where the levels of all the nodes are integers (this was the case in Section~\ref{sec:old:soda15}). In this situation, as the flow moves down the system to lower and lower levels, the powers of the  nodes  decrease geometrically as per condition~(\ref{eq:lambda:6}). Hence, applying~(\ref{eq:flow}) we can upper bound the sum $\sum_{x} P_x$ by the geometric series $\sum_{k = 0}^{\ell(v)} e^{k}$. This holds since:
\begin{equation}
\label{eq:flow:2}
\sum_{x \in V : \ell(x) = k} P_x = \sum_{x \in V : \ell(x) = k} |f^-(x)| \cdot e^k \leq e^k \text{ at every level } k \leq \ell(v).
\end{equation}
Thus, in Phase I the algorithm performs $\sum_{k = 0}^{\ell(v)} e^k = O\left(e^{\ell(v)}\right)$ units of work per time unit. Recall that Phase I was initiated after the insertion of the edge $(u, v)$, which increased the weight $W_v$ by (say) $\eta_v$.  During this phase the node $v$ decreases its weight $W_v$ at unit rate, and the process stops when $W_v$ becomes equal to $1$. Thus, from the discussion so far we expect Phase I to last for $\eta_v$ time-units. Accordingly, we also expect the total work done by the algorithm in Phase I  to be at most $O\left(e^{\ell(v)}\right) \cdot \eta_v$. Since $\eta_v = e^{-\max(\ell(u), \ell(v))} \leq e^{-\ell(v)}$, we expect that the algorithm will do at most $O\left(e^{\ell(v)}\right) \cdot e^{-\ell(v)} = O(1)$ units of work in Phase I. A similar reasoning applies for Phase II as well. This gives an intuitive explanation as to why an appropriately chosen variant of the BHI15 algorithm~\cite{b} should have $O(1)$ update time for every constant $\epsilon > 0$.

\subsection{Towards analyzing the ``real-world", discretized setting}

Towards the end of Section~\ref{sec:thought:exp} we made a crucial assumption, namely, that the levels of the nodes are integers. It turns out that if we want to enforce this condition, then we can no longer maintain an {\em exact} maximal fractional matching and get an approximation ratio of $2$. Instead, we will have to satisfied with a fractional matching that is approximately maximal, and the corresponding vertex cover we get will be a $(2+\epsilon)$-approximate minimum vertex cover. Furthermore,  in the idealized continuous setting we could get away with moving the level of a node $x$, whose weight $W_x$ has only slightly deviated from $1$, by any arbitrarily small amount and thereby doing arbitrarily small amount of work on the node at any given time. This is why the intuition we got out of the above discussion also suggests that the overall update time should be $O(1)$ in the {\em worst case}. This will no longer be possible in the {\em real-world}, where the levels of the nodes need to be integers. In the real-world, a node $x$ can move to a different integral level only after its weight $W_x$ has changed by a sufficiently large amount, and the work done to move the node to a different level can be quite significant. This is why our analysis in the discretized, real-world  gets an {\em amortized} (instead of worst-case) upper bound of $O(\epsilon^{-2})$ on the update time of the algorithm.

Coming back to the continuous world, suppose that we pump in an infinitesimally small  $\delta$ amount of weight into a node $x$ at a level $\ell(x) = k > 0$ at unit rate. The process, therefore, lasts for  $\delta$ time units. During this process, the level of the node $x$  increases by an infinitesimally small amount so as to ensure that its weight $W_x$ remains equal to $1$. The work done per time unit on the node $x$ is equal to $P_x$. Hence, the total work done on the node $x$ during this event is given by: 
$$\delta \cdot P_x = \delta \cdot \sum_{y \in N^-_x} P_y = \delta \cdot \sum_{y \in N^-_x} |f(x, y)| \cdot e^{\ell(x)} = \delta \cdot e^{\ell(x)} \cdot \left|\sum_{y \in N^-_x} f(x,y)\right| = \delta \cdot e^{\ell(x)}.$$
In this  derivation, the first two steps follow from~(\ref{eq:lambda:5}) and~(\ref{eq:lambda:6}). The third step holds since $f(x, y) < 0$ for all $y \in N^-_x$, as the node  $x$   moves up to a higher level. The fourth step follows from~(\ref{eq:lambda:v}). Thus, we note that:
\begin{observation}
\label{ob:main}
To change the weight $W_x$ by $\delta$, we need to perform  $\delta \cdot e^{\ell(x)}$ units of work  on the node $x$. 
\end{observation}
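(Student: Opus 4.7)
The plan is to formalize the four-step computation already sketched in the paragraph preceding the observation, using the flow/power framework built up in Section~\ref{sec:thought:exp}. The setup is as follows: we pump $\delta$ units of weight into node $x$ at unit rate, so the process takes $\delta$ time units, and during this interval $x$ moves upward continuously so that $W_x$ remains pinned (at $1$, by Invariant~\ref{inv:continuous}). All other nodes with positive level adjust their levels to preserve their own weights, exactly as in Phase~I of Section~\ref{sec:continuous}, but now with $x$ playing the role that $v$ played there.

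The first step is to observe that, since the work done on node $x$ is by definition the rate $P_x$ integrated over time, and since $\ell(x)$ changes only infinitesimally over the interval, we have (total work on $x$) $= \delta \cdot P_x$ to leading order in $\delta$. The second step expands $P_x$ using equation~(\ref{eq:lambda:6}): $P_x = e^{\ell(x)} \cdot |f^-(x)|$, where $|f^-(x)| = \left|\sum_{y \in N^-_x} f(x,y)\right|$. The third step uses the sign information from equation~(\ref{eq:lambda:3}): because $x$ is moving strictly upward, $s_x > 0$, and hence $f(x,y) = -s_x \cdot w(x,y) < 0$ for every $y \in N^-_x$. Thus every term in $\sum_{y \in N^-_x} f(x,y)$ has the same sign, and the absolute value can be pulled outside the sum consistently. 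The final step invokes the flow conservation equation~(\ref{eq:lambda:v}) applied at $x$ in place of $v$, giving $\sum_{y \in N^-_x} f(x,y) = -1$ since one unit of weight per time unit is being pumped into $x$. Chaining the four steps yields $\delta \cdot e^{\ell(x)}$.

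The main subtlety, and the reason this deserves a distinct observation rather than an offhand remark, is to pin down what ``work on the node $x$'' means and to argue that we are counting only the local contribution at $x$ itself and not the cascade work triggered at its down-neighbors. According to the definition $P_x = \sum_{x' \in N^-_x} P_{(x, x')}$ given just above equation~(\ref{eq:lambda:6}), only the edges whose weight changes because of $x$'s own level change contribute to $P_x$; the further work done at each $y \in N^-_x$ to preserve $W_y$ is accounted for in $P_y$, not $P_x$. This localization is exactly what makes the observation useful later: when we charge the total work $\sum_{x} P_x$, we avoid double counting, and we can combine Observation~\ref{ob:main} with the per-level flow bound~(\ref{eq:flow}) to recover the geometric-series argument used to bound the overall work performed in Phase~I. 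I would make this accounting explicit before running the calculation, since it is the one step that is not purely mechanical.
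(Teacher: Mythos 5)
Your proof is correct and follows essentially the same four-step chain the paper uses: express the total work as $\delta \cdot P_x$, expand $P_x$ via~(\ref{eq:lambda:6}) and~(\ref{eq:lambda:5}), pull the absolute value through the sum using the uniform sign of $f(x,y)$ for $y \in N^-_x$ from~(\ref{eq:lambda:3}), and close with the flow-conservation identity~(\ref{eq:lambda:v}) applied at $x$. The only addition is your explicit remark that $P_x$ counts just the edges incident to $x$ whose levels change because $x$ moves (so the cascade at each $y \in N^-_x$ lives in $P_y$, not $P_x$); this is already implicit in the paper's definition of $P_x$ but is a reasonable thing to spell out.
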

The intuition derived from Observation~\ref{ob:main} will guide us while we design a potential function for bounding  the amortized update time in the  ``real-world". This is shown in Section~\ref{sec:algorithm:new}. 

\section{An overview of our algorithm and the analysis in the ``real-world"}
\label{sec:algorithm:new}

To keep the presentation as modular as possible, we describe the algorithm itself in Section~\ref{sec:describe}, which happens to be almost the same as the BHI15 algorithm from Section~\ref{sec:old:soda15}, with one crucial twist. Accordingly, our main goal in Section~\ref{sec:describe} is to point out the  difference between the new algorithm and the old one. We also explain why this difference does not impact in any significant manner the approximation ratio of $(2+\epsilon)$ derived in Section~\ref{sec:old:soda15}. Moving forward, in Section~\ref{sec:analyze} we present a very high level overview of our new potential function based analysis of the algorithm from Section~\ref{sec:describe}, which gives the desired  bound of $O(1/\epsilon^2)$ on the amortized update time. See Part~\ref{part:full} for the full version of the algorithm and its analysis.

\subsection{The algorithm}
\label{sec:describe}

We start by setting up the necessary notations.  We use all the notations introduced in the beginning of Section~\ref{sec:old:soda15}. In addition, for every node $x \in V$ and every level $0 \leq i \leq L$, we let $W_{x \rightarrow i} = \sum_{y \in N_x} (1+\epsilon)^{-\max(\ell(y), i)}$ denote what  the weight of  $x$ would have been if we were to place  $x$ at level $i$, without changing the level of any other node. Note that $W_{x \rightarrow i}$ is a monotonically (weakly) decreasing function of $i$, for the following reason. As we increase the level of $x$ (say) from $i$ to $(i+1)$, all its incident edges $(x, y) \in E$ with $y \in N_x(0, i)$ decrease their weights, and the weights of all its other incident edges  remain unchanged. 

\medskip
\noindent {\bf Up-dirty and down-dirty nodes.} We use the same definition of a {\em down-dirty} node as in Section~\ref{sec:old:soda15} (see  the second paragraph after Invariant~\ref{inv:mock}) -- a node $x$ is {\em down-dirty} iff $\ell(x) > 0$ and $W_x < 1/(1+\epsilon)$.  But we slightly change the definition of an {\em up-dirty} node. Specifically, here we say that a node $x \in V$  is {\em up-dirty} iff $W_x > 1$ {\em and} $W_{x \rightarrow \ell(x)+1} > 1$. As before, we say that a node is {\em dirty} iff it is either up-dirty or down-dirty. 

\medskip
\noindent {\bf Handling the insertion or deletion of an edge.} The pseudocode for handling the insertion or deletion of an edge $(u, v)$ remains the same as in Figure~\ref{main:fig:dirty:bhi15} -- although  the conditions which specify when a node is up-dirty have changed. As far as the time spent in implementing the subroutine in Figure~\ref{main:fig:dirty:bhi15} is concerned, it is not difficult to come up with suitable data structures so that Claim~\ref{main:cl:ds:up} and Claim~\ref{main:cl:ds:down} continue to hold. 

\medskip
\noindent {\bf Approximation ratio.} Clearly, this new algorithm ensures that there is no dirty node when it is done with handling the insertion or deletion of an edge. We can no longer claim, however, that Invariant~\ref{inv:mock} is satisfied. This is because we have changed the definition of an up-dirty node. To address this issue, we make the following key observation: If a node $x$ with $W_x > 1$ is {\em not} up-dirty according to the new definition, then we must have $W_x \leq (1+\epsilon)$. To see why this true, suppose that we have a node $x$ with $W_x > (1+\epsilon)$ that is {\em not} up-dirty. If we move this node up by one level, then every edge incident on $x$ will decrease its weight by at most a factor of $(1+\epsilon)$, and hence the weight $W_x$ will also decrease by a factor of at most $(1+\epsilon)$. Therefore, we infer that $W_{x \rightarrow \ell(x)+1} \geq W_x/(1+\epsilon) > 1$, and the node $x$ {\em is} in fact up-dirty. This leads to a contradiction. Hence, it must be the case that if a node $x$ with $W_x > 1$ is up-dirty, then $W_x \leq (1+\epsilon)$. This observation implies that if there is no dirty node, then the following conditions are satisfied. (1) $W_x \leq (1+\epsilon)$ for all nodes $x \in V$. (2) $W_x \geq 1/(1+\epsilon)$ for all nodes $x \in V$ at levels $\ell(x) > 0$. Accordingly, we get a valid fractional matching if we scale down the edge-weights by factor of $(1+\epsilon)$. As before, the set of nodes $x$ with $W_x \geq 1/(1+\epsilon)$ forms a valid vertex cover. A simple counting argument (see the paragraph after Invariant~\ref{inv:mock}) implies that the size of this fractional matching is within a  $2(1+\epsilon)^2$ factor of the size of this vertex cover. Hence, we get an approximation ratio of $2(1+\epsilon)^2$. Basically, the approximation ratio degrades only by a factor of $(1+\epsilon)$ compared to the analysis in Section~\ref{sec:old:soda15}.

\subsection{Bounding the amortized update time}
\label{sec:analyze}

Our first task is to find a discrete, real-world analogue of Observation~\ref{ob:main} (which holds only in the continuous setting). This is done in Claim~\ref{cl:up:brand:new} below. This relates the time required to move up a node $x$ from level $k$ to level $k+1$ with the change in its weight $W_x$ due to the same event.

\begin{claim}
\label{cl:up:brand:new}
Suppose that a node $x$ is moving up from level $k$ to level $k+1$ during an iteration of the {\sc While} loop in Figure~\ref{main:fig:dirty:bhi15}. Then it takes $O\left((W_{x \rightarrow k} - W_{x\rightarrow k+1}) \cdot \epsilon^{-1} \cdot (1+\epsilon)^k \right)$  time to update the relevant data structures during this iteration.
\end{claim}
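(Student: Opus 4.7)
\medskip
\noindent \textbf{Proof plan for Claim~\ref{cl:up:brand:new}.}

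The plan is to reduce the claim, via Claim~\ref{main:cl:ds:up}, to a purely algebraic identity relating $|N_x(0,k)|$ to the drop $W_{x\rightarrow k} - W_{x\rightarrow k+1}$, and then to verify that identity by a direct computation. By Claim~\ref{main:cl:ds:up}, the time spent on the data structure updates when $x$ moves from level $k$ to level $k+1$ is $O(|N_x(0,k)|)$, so it suffices to prove that
\[
|N_x(0,k)| \;=\; O\!\left((W_{x\rightarrow k} - W_{x\rightarrow k+1})\cdot \epsilon^{-1}\cdot (1+\epsilon)^{k}\right).
\]

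First I would split the sum $W_{x\rightarrow i} = \sum_{y \in N_x} (1+\epsilon)^{-\max(\ell(y),i)}$ according to whether $y \in N_x(0,k)$ or $y \in N_x \setminus N_x(0,k)$, and compare the two cases $i=k$ and $i=k+1$. For any neighbor $y$ with $\ell(y) \geq k+1$ we have $\max(\ell(y),k) = \max(\ell(y),k+1) = \ell(y)$, so such neighbors contribute $0$ to the difference $W_{x\rightarrow k} - W_{x\rightarrow k+1}$. For any neighbor $y \in N_x(0,k)$, $\max(\ell(y),k) = k$ and $\max(\ell(y),k+1) = k+1$, so it contributes exactly
\[
(1+\epsilon)^{-k} - (1+\epsilon)^{-(k+1)} \;=\; \epsilon\cdot (1+\epsilon)^{-(k+1)}
\]
to the difference. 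Summing over all $y \in N_x(0,k)$ gives
\[
W_{x\rightarrow k} - W_{x\rightarrow k+1} \;=\; |N_x(0,k)|\cdot \epsilon\cdot (1+\epsilon)^{-(k+1)}.
\]

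Rearranging yields $|N_x(0,k)| = (W_{x\rightarrow k}-W_{x\rightarrow k+1})\cdot \epsilon^{-1}\cdot (1+\epsilon)^{k+1}$, which is $O\!\left((W_{x\rightarrow k}-W_{x\rightarrow k+1})\cdot \epsilon^{-1}\cdot (1+\epsilon)^{k}\right)$ since $1+\epsilon \leq 2$. Combined with the $O(|N_x(0,k)|)$ bound from Claim~\ref{main:cl:ds:up}, this gives the claimed time bound. There is no real obstacle here: the content is entirely in the bookkeeping of which neighbors contribute to the weight drop, and the bound in fact matches (up to a constant factor) the continuous-world intuition of Observation~\ref{ob:main}, with $\delta = W_{x\rightarrow k} - W_{x\rightarrow k+1}$ and the extra factor $\epsilon^{-1}$ reflecting the discretization of levels.
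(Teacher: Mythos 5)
Your proof is correct and follows essentially the same route as the paper: apply Claim~\ref{main:cl:ds:up} to reduce to bounding $|N_x(0,k)|$, observe that only the $|N_x(0,k)|$ neighbors at or below level $k$ contribute to the drop $W_{x\rightarrow k}-W_{x\rightarrow k+1}$, each by exactly $\epsilon\cdot(1+\epsilon)^{-(k+1)}$, and rearrange. The extra observation that the discrepancy $(1+\epsilon)^{k+1}$ vs.\ $(1+\epsilon)^{k}$ is absorbed into the $O(\cdot)$ because $1+\epsilon\le 2$ is the same constant-factor step the paper uses implicitly.
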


\begin{proof}
As the node $x$ moves up from level $k$ to level $k+1$, the weight of every edge $(x, y) \in E$ with $y \in N_x(0,k)$ decreases from $(1+\epsilon)^{-k}$ to $(1+\epsilon)^{-(k+1)}$, whereas the weight of every other edge remains unchanged. Hence, it follows that: 
$$W_{x \rightarrow k} - W_{x \rightarrow k+1} = \left| N_x(0,k) \right| \cdot \left( (1+\epsilon)^{-k} - (1+\epsilon)^{-(k+1)}\right) =  \left| N_x(0,k) \right| \cdot \epsilon \cdot (1+\epsilon)^{-(k+1)}.$$
Rearranging the terms in the above equality, we get:
$$\left| N_x(0,k) \right| = \left(W_{x \rightarrow k} - W_{x \rightarrow k+1}\right) \cdot \epsilon^{-1} \cdot (1+\epsilon)^{k+1} = O\left((W_{x \rightarrow k} - W_{x \rightarrow k+1}) \cdot \epsilon^{-1} \cdot (1+\epsilon)^{k}\right).$$
The desired proof now follows from Claim~\ref{main:cl:ds:up}.
\end{proof}

Next, consider a node $x$ that is moving {\em down} from level $k$ to level $k-1$. We use a different accounting scheme to bound the time spent during this event. This is because the work done on the node $x$ during such an event is equal to $|N_x(0,k-1)|$, but it takes $O\left(| N_x(0,k)|\right)$ time to update the relevant data structures (see Claim~\ref{main:cl:ds:down} and the last paragraph before Section~\ref{sec:thought:exp}). Note that $N_x(0,k-1) \subseteq N_x(0,k)$. Thus, although it is possible to bound the work done during this event in a manner analogous to Claim~\ref{cl:up:brand:new}, the bound obtained in that manner might be significantly less than the actual time spent in updating the data structures during this event. Instead, we bound the time spent during this event as specified in Claim~\ref{cl:up:brand:new} below.

\begin{claim}
\label{cl:down:brand:new}
Consider a node $x$  moving down from level $k$ to level $k-1$ during an iteration of the {\sc While} loop in Figure~\ref{main:fig:dirty:bhi15}. Then it takes $O\left((1+\epsilon)^{k-1} \right)$  time to update the relevant data structures during this iteration.
\end{claim}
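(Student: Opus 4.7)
The plan is to reduce the claim directly to Claim~\ref{main:cl:ds:down}, which already tells us that the time spent updating the data structures when $x$ moves down from level $k$ to $k-1$ is $O(|N_x(0,k)|)$. So the only thing I really need to do is bound $|N_x(0,k)|$ by $O((1+\epsilon)^{k-1})$.

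The key observation is that the node $x$ is being moved down precisely because it is down-dirty at the start of the iteration, i.e.\ $\ell(x) = k > 0$ and $W_x < 1/(1+\epsilon)$. Now I would do the following weight accounting: every neighbor $y \in N_x(0,k)$ has $\ell(y) \leq k = \ell(x)$, so the edge $(x,y)$ has weight exactly $(1+\epsilon)^{-\max(\ell(x),\ell(y))} = (1+\epsilon)^{-k}$. This gives
\[
W_x \;\geq\; \sum_{y \in N_x(0,k)} (1+\epsilon)^{-k} \;=\; |N_x(0,k)| \cdot (1+\epsilon)^{-k}.
\]
Combining this inequality with $W_x < 1/(1+\epsilon)$ and rearranging yields $|N_x(0,k)| < (1+\epsilon)^{k-1}$.

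Plugging this bound into Claim~\ref{main:cl:ds:down} immediately gives the desired $O((1+\epsilon)^{k-1})$ time bound. There is no real obstacle here: the entire argument is a one-line application of the down-dirty threshold, and the only subtlety worth flagging is making sure I invoke the state of $W_x$ at the moment the iteration begins (i.e.\ before the level change), since that is precisely when the down-dirty condition is known to hold.
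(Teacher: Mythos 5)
Your proof is correct and follows essentially the same approach as the paper: both reduce to Claim~\ref{main:cl:ds:down} and then bound $|N_x(0,k)|$ via the down-dirty condition $W_x < 1/(1+\epsilon)$ together with the fact that every edge to $N_x(0,k)$ has weight exactly $(1+\epsilon)^{-k}$. (The paper phrases the bound in terms of $W_{x\to k}$, but since $\ell(x)=k$ at the start of the iteration this equals your $W_x$.)
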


\begin{proof}
By Claim~\ref{main:cl:ds:down}, it takes $O\left(|N_x(0,k)|\right)$ time to update the relevant data structures when the node $x$ moves down from level $k$ to level $k-1$. We will now show that $|N_x(0,k)| = O\left( (1+\epsilon)^k\right)$. To see why this is true, first note that the node $x$ moves down from level $k$ only if it is down-dirty at that level (see step 4 in Figure~\ref{main:fig:dirty:bhi15}). Hence, we get: $W_{x \rightarrow k} < (1+\epsilon)^{-1}$. When the node is at level $k$, every edge $(x, y) \in E$ with $y \in N_x(0,k)$ has a weight $w(x, y) = (1+\epsilon)^{-k}$. It follows that $(1+\epsilon)^{-k} \cdot |N_x(0,k)| \leq W_{x \rightarrow k} < (1+\epsilon)^{-1}$. Rearranging the terms in the resulting inequality, we get: $|N_x(0,k)| < (1+\epsilon)^{k-1}$.
\end{proof}

\noindent {\bf Node potentials and energy.}  In order to bound the amortized update time, we introduce the notions of potentials and energy of nodes. Every node $x \in V$ stores nonnegative potentials $\Phi^{\uparrow}(x, k)$, $\Phi^{\downarrow}(x, k)$ and energies $\E^{\uparrow}(x, k)$, $\E^{\downarrow}(x,k)$ at every level $0 \leq k \leq L$. The potential $\Phi^{\uparrow}(x, k)$ and the energy $\E^{\uparrow}(x, k)$ are used to account for the time spent in moving the node $x$ {\em up} from level $k$ to level $k+1$. Similarly, the potential $\Phi^{\downarrow}(x, k)$ and the energy $\E^{\downarrow}(x, k)$ are used to account for the time spent in moving the node $x$ {\em down} from level $k$ to level $k-1$.  Each unit of potential at level $k$  results in $(1+\epsilon)^k \cdot \epsilon^{-1}$ units of energy. Accordingly, we refer to this quantity $(1+\epsilon)^k \cdot \epsilon^{-1}$ as the {\em conversion rate} between potential and energy at level $k$. 
\begin{equation}
\label{main:eq:energy:new}
\text{We have } \E^{\gamma}(x, k) = \Phi^{\gamma}(x, k) \cdot (1+\epsilon)^k \cdot \epsilon^{-1} \text{ for all } x \in V, \text{ all }  k \in [0, L], \text{ and all } \gamma \in \{ \uparrow, \downarrow\}. 
 \end{equation}
The  potentials stored by a node $x$ across different levels depends on its weight $W_x$. To be more specific, it depends on whether $W_x > 1$ or $W_x \leq 1$. We first define the potentials  stored by a node $x$ with weight $W_x > 1$. Throughout the following discussion, we crucially rely upon the fact that $W_{x \rightarrow k}$ is a monotonically (weakly) decreasing function of $k$.  For any node $x \in V$ with $W_x > 1$, let $\ell^{\uparrow}(x)$  be the maximum level $k \in \{\ell(x), \ldots, L\}$ where $W_{x \rightarrow k} \geq 1$. The potentials $\Phi^{\uparrow}(x, k), \Phi^{\downarrow}(x,k)$  are then defined as follows.
\begin{eqnarray}
\label{main:eq:potential:up:new}
\text{If a node } x  \text{ has  } W_x > 1, \text{ then }   \Phi^{\uparrow}(x, k) & = & \begin{cases} 0 & \text{ for all } \ell^{\uparrow}(x) < k \leq L; \\
W_{x \rightarrow k} - 1 & \text{ for } k = \ell^{\uparrow}(x); \\
W_{x \rightarrow k} - W_{x \rightarrow k+1} & \text{ for all }  \ell(x) \leq  k < \ell^{\uparrow}(x); \\
0 & \text{ for all } 0 \leq k < \ell(x).
\end{cases}  \\ 
\Phi^{\downarrow}(x, k) & = & 0 \qquad  \text{ for all } 0 \leq k \leq L. \label{main:eq:potential:up:new:1}
\end{eqnarray}
It is easy to check that under~(\ref{main:eq:potential:up:new}) we have $\Phi^{\uparrow}(x, k) \geq 0$ at every level $k$. Summing over all the levels, the total potential associated with this node $x$ is given by: $\sum_{k=1}^L \Phi^{\uparrow}(x, k) = W_{x \rightarrow \ell(x)} - 1 = W_x - 1$. 

We now give some intuitions behind~(\ref{main:eq:potential:up:new}) and~(\ref{main:eq:potential:up:new:1}). Although these equations might seem daunting at first glance, they in fact follow quite  naturally  from Claim~\ref{cl:up:brand:new}. To see this, first note that as long as $W_x > 1$, the node $x$ never has to decrease its level by moving downward. Hence, if $W_x > 1$, then it is natural to set $\Phi^{\downarrow}(x, k) = 0$ at evert level $k$. Next, consider the ``interesting" scenario when the node $x$ is moving up from its current level $\ell(x) = i$ to level $i+1$. According to step 2 in Figure~\ref{main:fig:dirty:bhi15}, this means that the node $x$ is up-dirty at level $i$. From the new definition of a up-dirty node introduced in this section, it follows that $W_{x \rightarrow i} \geq W_{x \rightarrow i+1} > 1$. Just {\em before} the node $x$ moves up from level $i$, we have $\Phi^{\uparrow}(x, i) = W_{x \rightarrow i} - W_{x \rightarrow i+1}$, and just {\em after} the node $x$ moves to level $i+1$ we have $\Phi^{\uparrow}(x, i)  = 0$. Accordingly, we say that the node $x$ {\em releases} $(W_{x \rightarrow i} - W_{x \rightarrow i+1})$ units of potential at level $i$ during this event. As per our conversion ratio between potential and energy defined in~(\ref{main:eq:energy:new}), the node $x$ also releases  $(W_{x \rightarrow i} - W_{x \rightarrow i+1}) \cdot (1+\epsilon)^i \cdot \epsilon^{-1}$ units of energy during this event. Claim~\ref{cl:up:brand:new} now implies that the time spent in updating the data structures during this event is at most the energy released by the node $x$ during the same event. Note that this event does not affect the potentials of the node $x$ at any other level $j \neq i$.

Below, we  define the potentials  of a  node $x$ with weight $W_x \leq 1$.
\begin{eqnarray}
\label{main:eq:potential:down:new}
\text{If a node } x  \text{ has } W_x \leq 1, \text{ then }   \Phi^{\downarrow}(x, k) & = & \begin{cases}  0 & \text{ for all } \ell(x) < k \leq L; \\
 1 - W_{x \rightarrow k}  & \text{ for }  k = \ell(x); \\
0 & \text{ for all } 0 \leq k  < \ell(x).
\end{cases} \\
\Phi^{\uparrow}(x, k) & = & 0 \qquad \text{ for all } 0 \leq k \leq L. \label{main:eq:potential:down:new:1}
\end{eqnarray}
As one might expect, the above equations should be seen as naturally following from Claim~\ref{cl:down:brand:new}. To see this, first note that the node $x$ does not need to move up from its current level as long as $W_x \leq 1$. Hence, if $W_x \leq 1$, then it makes sense to define $\Phi^{\uparrow}(x, k) = 0$ at every level $0 \leq k \leq L$. Next, consider the event where the node $x$ is moving down from level $i$ to level $i-1$. Then step 4 in Figure~\ref{main:fig:dirty:bhi15} ensures that the node $x$ is down-dirty at level $i$, so that $W_{x \rightarrow i} < (1+\epsilon)^{-1}$. Thus, we have $\Phi^{\downarrow}(x, i) = 1 - W_{x \rightarrow i} > \epsilon \cdot (1+\epsilon)^{-1}$ just {\em before} the event during which the node $x$ moves down from level $i$ to level $i-1$, whereas we have $\Phi^{\downarrow}(x, i) = 0$ just {\em after} the same event. Accordingly, we say that the node $x$ {\em releases} at least $\epsilon \cdot (1+\epsilon)^{-1}$ units of potential at level $k$ during this event. As per the conversion ratio between potential and energy defined in~(\ref{main:eq:energy:new}), the node $x$ also releases at least $\epsilon \cdot (1+\epsilon)^{-1} \cdot (1+\epsilon)^i \cdot \epsilon^{-1} = (1+\epsilon)^{i-1}$ units of energy at level $i$ during this event. On the other hand, Claim~\ref{cl:down:brand:new} states that the time spent in updating the data structures during this event is at most $O\left((1+\epsilon)^{i-1}\right)$.  So the time spent during this event is at most the energy released by the node $x$ at level $i$. However, in contrast with the discussion following~(\ref{main:eq:potential:up:new}) and~(\ref{main:eq:potential:up:new:1}), here $1-W_{x \rightarrow i-1}$ units of potential and $(1- W_{x \rightarrow i-1}) \cdot (1+\epsilon)^{i-1} \cdot \epsilon^{-1}$ are {\em created} when the node $x$ moves down to level $i-1$. The energy released by the node at level $i$ only accounts for the time spent in updating the data structures. We need to delve into a deeper analysis of the entire framework in order to bound this new energy that gets created  {\em as a result of} moving the node $x$ down to a lower level, without which our proof for the bound on the amortized update time will remain incomplete.  Before embarking on this task, however, we  formally clarify the way we are going to use  two phrases: {\em potential (resp. energy) absorbed by a node}, and {\em potential (resp. energy) released by a node}. This is explained below.

\medskip

Fix a node $x \in V$ and a level $k \in[0, L]$. Consider an event which (possibly) changes the potentials $\Phi^{\uparrow}(x, k)$ and $\Phi^{\downarrow}(x, k)$. Let $\Phi^{\uparrow}_0(x, k)$ and $\Phi^{\uparrow}_1(x, k)$ respectively denote the value of $\Phi^{\uparrow}(x, k)$ before and after this event. Let $\Delta^{\uparrow} = \Phi^{\uparrow}_{1}(x, k) - \Phi^{\uparrow}_0(x, k)$. Similarly, let $\Phi^{\downarrow}_0(x, k)$ and $\Phi^{\downarrow}_1(x, k)$ respectively denote the value of $\Phi^{\downarrow}(x, k)$ before and after this event. Let $\Delta^{\downarrow} = \Phi^{\downarrow}_{1}(x, k) - \Phi^{\downarrow}_0(x, k)$. We now consider four cases.

\medskip
\noindent {\em Case 1. $\Delta^{\uparrow} \geq 0$ and $\Delta^{\downarrow} \geq 0$.} In this case, we say that during this event the node $x$ {\em absorbs} $(\Delta^{\uparrow}+ \Delta^{\downarrow})$ units of potentials at level $k$.

\medskip
\noindent {\em Case 2. $\Delta^{\uparrow} < 0$ and $\Delta^{\downarrow} < 0$.} In this case, we say that during this event the node $x$ {\em releases} $-(\Delta^{\uparrow} +\Delta^{\downarrow})$ units of potential at level $k$.

\medskip
\noindent {\em Case 3. $\Delta^{\uparrow} \geq  0$ and $\Delta^{\downarrow} < 0$.} In this case, we say that during this event the node $x$ {\em absorbs} $\Delta^{\uparrow}$ units of potential at level $k$ and {\em releases} $-\Delta^{\downarrow}$ units of potential at level $k$.

\medskip
\noindent {\em Case 4. $\Delta^{\uparrow} <  0$ and $\Delta^{\downarrow} \geq 0$.} In this case, we say that during this event the node $x$ {\em releases} $-\Delta^{\uparrow}$ units of potential at level $k$ and {\em absorbs} $\Delta^{\downarrow}$ units of potential at level $k$.

\medskip
\noindent In all the above  four cases,  the {\em energy} absorbed (resp. released) by the node $x$ at level $k$ is equal to $(1+\epsilon)^{k} \cdot \epsilon^{-1}$ times the potential absorbed (resp. released) by the node $x$ at level $k$. Thus, as a matter of convention, we never allow the potential (resp. energy) released or absorbed by a node to be negative. Furthermore, during any given event, we define the  potential (resp. energy) absorbed  by a node $x$ to be the sum of the potentials (resp. energies) absorbed  by $x$ at all the levels $0 \leq k \leq L$. Similarly, the  potential (resp. energy) released by  $x$ is defined to be the sum of the potentials (resp. energies) released by $x$ at all levels $0 \leq k \leq L$. From the discussion following~(\ref{main:eq:potential:up:new})~--~(\ref{main:eq:potential:up:new:1}) and~(\ref{main:eq:potential:down:new})~--~(\ref{main:eq:potential:down:new:1}), we get the following lemma.

\begin{lemma}
\label{main:lm:work:jump}
Consider an iteration of the {\sc While} loop in Figure~\ref{main:fig:dirty:bhi15} where a node $x$ changes its level. During this iteration, the time spent in updating the data structures is at most  the energy released by the node  $x$.
\end{lemma}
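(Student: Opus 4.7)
The plan is to dispatch on whether the iteration moves $x$ up one level (step 3 of Figure~\ref{main:fig:dirty:bhi15}) or down one level (step 5). In each case I will track the change in $\Phi^{\uparrow}(x,\cdot)$ and $\Phi^{\downarrow}(x,\cdot)$ at the source level $k = \ell(x)$, and show that the energy released at that level alone, as converted by~(\ref{main:eq:energy:new}), already dominates the data-structure update time supplied by Claim~\ref{cl:up:brand:new} or Claim~\ref{cl:down:brand:new}. Contributions at other levels can only add to the total released energy, so it suffices to analyze what happens at level $k$.

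For the upward iteration, the node $x$ at level $k$ is up-dirty, so $W_x = W_{x\rightarrow k} > 1$ and $W_{x\rightarrow k+1} > 1$; the latter forces $\ell^{\uparrow}(x) \ge k+1$. Just before the move,~(\ref{main:eq:potential:up:new}) falls into the case $\ell(x) \le k < \ell^{\uparrow}(x)$ and yields $\Phi^{\uparrow}(x,k) = W_{x\rightarrow k} - W_{x\rightarrow k+1}$, while~(\ref{main:eq:potential:up:new:1}) gives $\Phi^{\downarrow}(x,k) = 0$. Just after the move the new level is $k+1$, so $k < \ell(x)$ and both potentials at level $k$ collapse to $0$, regardless of whether the new weight $W_{x\rightarrow k+1}$ still exceeds $1$ or not. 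The release at level $k$ is therefore exactly $W_{x\rightarrow k} - W_{x\rightarrow k+1}$, and the corresponding energy $(W_{x\rightarrow k} - W_{x\rightarrow k+1}) \cdot \epsilon^{-1} \cdot (1+\epsilon)^k$ is precisely the $O(\cdot)$ expression from Claim~\ref{cl:up:brand:new}.

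For the downward iteration, the node $x$ at level $k$ is down-dirty, so $W_x = W_{x\rightarrow k} < (1+\epsilon)^{-1}$. Just before the move,~(\ref{main:eq:potential:down:new}) gives $\Phi^{\downarrow}(x,k) = 1 - W_{x\rightarrow k} > \epsilon/(1+\epsilon)$ and~(\ref{main:eq:potential:down:new:1}) gives $\Phi^{\uparrow}(x,k) = 0$. After the move, $\ell(x) = k-1 < k$. If $W_{x\rightarrow k-1} \le 1$, both potentials at level $k$ are $0$ by~(\ref{main:eq:potential:down:new}) and~(\ref{main:eq:potential:down:new:1}). If instead $W_{x\rightarrow k-1} > 1$, then since $W_{x\rightarrow k} < 1$ the monotonicity of $W_{x\rightarrow\cdot}$ pins $\ell^{\uparrow}(x)$ exactly at $k-1$; hence the case $k > \ell^{\uparrow}(x)$ of~(\ref{main:eq:potential:up:new}) again forces $\Phi^{\uparrow}(x,k) = 0$, while~(\ref{main:eq:potential:up:new:1}) gives $\Phi^{\downarrow}(x,k) = 0$. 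Either way the release at level $k$ exceeds $\epsilon/(1+\epsilon)$, yielding energy of at least $(\epsilon/(1+\epsilon)) \cdot \epsilon^{-1} \cdot (1+\epsilon)^k = (1+\epsilon)^{k-1}$, which matches the bound from Claim~\ref{cl:down:brand:new}.

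The only point that needs real care is the second sub-case of the downward step: one must verify that pushing $x$ down does not, via the switch from the $W_x \le 1$ regime of~(\ref{main:eq:potential:down:new}) to the $W_x > 1$ regime of~(\ref{main:eq:potential:up:new}), resurrect a positive $\Phi^{\uparrow}(x,k)$ at the source level. The strict inequality $W_{x\rightarrow k} < (1+\epsilon)^{-1} < 1$ is what rules this out, because it pins $\ell^{\uparrow}(x)$ strictly below $k$ in the post-move configuration, regardless of the new value of $W_x$. Everything else is routine bookkeeping: no level other than $k$ and $k \pm 1$ sees its potential change during the iteration, and absorption at the destination level is immaterial to the statement of the lemma, which only asks that the released energy bound the update time.
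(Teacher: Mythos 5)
Your proof is correct and follows essentially the same route as the paper, which establishes the lemma via the discussion immediately after~(\ref{main:eq:potential:up:new})--(\ref{main:eq:potential:up:new:1}) and~(\ref{main:eq:potential:down:new})--(\ref{main:eq:potential:down:new:1}): the potential released at the source level $k$ is $W_{x\rightarrow k}-W_{x\rightarrow k+1}$ for an up-move and at least $\epsilon/(1+\epsilon)$ for a down-move, and converting via~(\ref{main:eq:energy:new}) matches Claim~\ref{cl:up:brand:new} and Claim~\ref{cl:down:brand:new} respectively. Your explicit verification that a down-move cannot resurrect a positive $\Phi^{\uparrow}(x,k)$ when $W_{x\rightarrow k-1}>1$ is a detail the paper leaves implicit, and it checks out.
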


Our main result is summarized in the theorem below.

\begin{theorem}
\label{main:th:amortized:work}
Starting from an empty graph, our algorithm spends $O(\tau/\epsilon^2)$ total time to handle  any sequence of $\tau$ updates (edge insertions/deletions) in $G$.   This implies an amortized update time of $O(1/\epsilon^2)$. 
\end{theorem}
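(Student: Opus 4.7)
The plan is a standard amortized potential argument. Define the total potential
\[
\Phi \;=\; \sum_{x \in V,\ 0 \leq k \leq L,\ \gamma \in \{\uparrow,\downarrow\}} \E^{\gamma}(x, k),
\]
which is nonnegative and equals $0$ on the initial empty graph. Since $\Phi_{\text{init}} = 0$ and $\Phi_{\text{final}} \geq 0$, the telescoping identity gives $\sum_t (\text{actual time}_t) \leq \sum_t (\text{actual time}_t + \Delta\Phi_t)$, so it suffices to prove that the amortized cost (actual time plus $\Delta\Phi$) of handling each of the $\tau$ updates is $O(\epsilon^{-2})$.

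I would split each update into two pieces: (i) the $O(1)$-time edge insertion/deletion itself, which shifts $W_u, W_v$ by $(1+\epsilon)^{-\max(\ell(u),\ell(v))}$ and thereby changes $\Phi$ without triggering any \textsc{While} iteration, and (ii) the subsequent execution of the loop in Figure~\ref{main:fig:dirty:bhi15}. For (i), writing $j = \min(\ell(u),\ell(v))$, the shift in $W_{x \to k}$ at each endpoint $x$ is at most $(1+\epsilon)^{-\max(j,k)}$. By the telescoping form of~(\ref{main:eq:potential:up:new})~--~(\ref{main:eq:potential:down:new:1}) this perturbation changes the per-level potential by at most an amount of the same order, and so changes the per-level energy by $O((1+\epsilon)^{k-\max(j,k)}/\epsilon)$. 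Summing the geometric tail over $0 \leq k \leq L$ yields $O(\epsilon^{-2})$ energy change at each endpoint; a case split covering the events where $W_x$ crosses the thresholds $1/(1+\epsilon)$ or $1$ contributes only a further $O(\epsilon^{-1})$.

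For (ii), I would argue that each iteration of the \textsc{While} loop has amortized cost $\leq 0$. By Lemma~\ref{main:lm:work:jump} the data structure update at the moving node $x$ is already paid for by the energy $x$ releases at its own levels. What remains is to bound the net energy \emph{absorbed} by (a) the neighbors of $x$, whose edges to $x$ change weight, and (b) $x$ itself at its new level. For an up-move (analyzed via Claim~\ref{cl:up:brand:new}), every affected neighbor $y \in N_x(0,k)$ has $W_y$ strictly decreasing, so its $\Phi^{\downarrow}$ can only be released and its $\Phi^{\uparrow}$ changes in a way that telescopes cleanly against~(\ref{main:eq:potential:up:new}). For a down-move (analyzed via Claim~\ref{cl:down:brand:new}), every such $W_y$ strictly increases, which similarly only releases $\Phi^{\downarrow}$; the dangerous term is that $x$ itself, now at the new level $k-1$, absorbs $1 - W_{x \to k-1}$ units of $\Phi^{\downarrow}$ at that level.

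The main obstacle is exactly this last absorption, flagged by the authors just before Lemma~\ref{main:lm:work:jump}: at level $k$ the release is at least $(\epsilon/(1+\epsilon)) \cdot (1+\epsilon)^k/\epsilon = (1+\epsilon)^{k-1}$, while the absorption at level $k-1$ can be as large as $(1+\epsilon)^{k-1}/\epsilon$, a naive deficit of a factor $\epsilon^{-1}$. To close this I would combine the identity $W_{x\to k-1} - W_{x\to k} = \epsilon(1+\epsilon)^{-k}|N_x(0,k-1)|$ with the down-dirty condition $W_{x\to k} < 1/(1+\epsilon)$ and the bound $|N_x(0,k)| < (1+\epsilon)^{k-1}$ derived in the proof of Claim~\ref{cl:down:brand:new}; together these imply $1 - W_{x\to k-1} \leq (1 - W_{x\to k}) - O(\epsilon)$ times the appropriate scaling, so the released energy at level $k$ dominates both the absorbed energy at level $k-1$ and the $O((1+\epsilon)^{k-1})$ data-structure cost. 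Combining (i) and (ii), every update has amortized cost $O(\epsilon^{-2})$, and the theorem follows by telescoping.
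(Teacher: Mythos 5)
There is a genuine gap in your proposal: the claim that \emph{each iteration of the \textsc{While} loop has amortized cost $\leq 0$} is false, and the paper does not prove such a per-iteration bound. Consider an up-move of $x$ from level $k$ to $k+1$. The energy $x$ releases at level $k$ is $E = (W_{x\to k} - W_{x\to k+1}) (1+\epsilon)^k \epsilon^{-1}$, and the data-structure time is already comparable to $E$. You then assert that every affected neighbor $y \in N_x(0,k)$ ``has $W_y$ strictly decreasing, so its $\Phi^\downarrow$ can only be released'' --- but this is backwards: since $\Phi^\downarrow(y,\ell(y)) = 1 - W_{y\to\ell(y)}$, a decrease in $W_y$ \emph{increases} $\Phi^\downarrow$, so the neighbors \emph{absorb} potential. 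In fact a neighbor $y$ at level exactly $k$ absorbs $\Phi^\downarrow$ at the \emph{same} conversion rate $(1+\epsilon)^k\epsilon^{-1}$, and the total potential absorbed can be as large as the potential released. So the iteration can have \emph{time} $+$ (\emph{energy absorbed}) roughly equal to $2E$ against only $E$ of energy released; the amortized cost is strictly positive. A symmetric problem occurs for down-moves: taking $N_x(0,k-1) = \emptyset$, your claimed inequality reduces to $\epsilon(1-W_{x\to k})(1+\epsilon)^{k-1} \geq c(1+\epsilon)^{k-1}$ for the data-structure constant $c$, which fails for small $\epsilon$. The identity you invoke therefore does not close the gap.

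What the paper actually does is not a per-iteration argument but a \emph{global} cascading argument. It defines a total order $\succ$ on potential types $(k,\uparrow), (k,\downarrow)$ and proves two key facts (Properties~\ref{main:prop} and~\ref{main:ob:key}): potential released at type $\gamma^*$ is conservatively reabsorbed only at strictly lower types $\gamma \prec \gamma^*$, and the conversion rate between potential and energy drops by a factor $(1+\epsilon)$ every \emph{two} hops down $\succ$. This yields a geometric relay: energy absorbed in scenario (b) is at most $O(1/\epsilon)$ times the energy injected exogenously in scenario (a), which is $O(\tau/\epsilon)$ by Lemma~\ref{lm:insertion:deletion}. The time is then controlled because (total time) $\leq$ (total energy released) $\leq$ (total energy absorbed). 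Your proposal for part (i) also misses the crucial observation underlying Lemma~\ref{lm:insertion:deletion}: because the endpoint was \emph{passive} just before the update, the gap $\ell^\uparrow(x) - \ell(x)$ after the insertion is $O(1)$ in the exponent of $(1+\epsilon)$, which is what makes the per-update bound $O(1/\epsilon)$ rather than $O(L/\epsilon)$; your ``geometric tail'' over all $k \leq L$ does not decay and would not give a bound independent of $L = \Theta(\log n / \epsilon)$.
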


We devote the rest of this section towarding giving a high level overview of the proof of the above theorem. We begin with the crucial observation that according to Lemma~\ref{main:lm:work:jump},  the energy released by the nodes  is an upper bound on the total update time of  our algorithm. Hence, in order to prove Theorem~\ref{main:th:amortized:work}, it suffices to upper bound the total energy released by the nodes during the sequence of $\tau$ updates. Note that the total energy stored at the nodes is zero when the input graph is empty.  Furthermore, the node-potentials (and energies)  are always nonnegative. Thus, during the course of our algorithm the total energy released by the nodes is at most the total energy absorbed by the nodes, and hence it suffices to upper bound the latter quantity.  We  will show that overall the nodes absorb $O(\tau/\epsilon^2)$ units of energy while our algorithm handles $\tau$ updates starting from an empty graph. This  implies Theorem~\ref{main:th:amortized:work}.  

Note that the nodes might absorb energy under two possible scenarios: 

\begin{itemize}
\item(a) An edge $(u, v)$ gets inserted into or deleted from the graph. Under this scenario, one or both the endpoints $\{u, v\}$ might absorb some energy. No node, however, changes its level under this scenario.
\item (b) The subroutine in Figure~\ref{main:fig:dirty:bhi15} is called after the insertion or deletion of an edge (scenario (a)), and a node $x$ moves up or down one level during an iteration of the {\sc While} loop in Figure~\ref{main:fig:dirty:bhi15}. Under this scenario (b), one or more nodes in $N_x \cup \{x\}$ might absorb some energy. 
\end{itemize} 
Theorem~\ref{main:th:amortized:work} now follows from Claim~\ref{main:eq:th:0} and Claim~\ref{main:eq:th:1} stated below. In the remainder of this section, we will present  high-level, intuitive justifications for each of these claims.
\begin{claim}
\label{main:eq:th:0}
The total energy absorbed by the nodes under scenario (a) is at most  $O(\tau/\epsilon)$. 
\end{claim}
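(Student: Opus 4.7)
The plan is to show that each edge insertion or deletion causes $O(1/\epsilon)$ units of energy to be absorbed, whence the claimed $O(\tau/\epsilon)$ bound follows by summing over the $\tau$ updates. Fix one such event; by symmetry consider the insertion of $(u, v)$ with $\ell(u) \leq \ell(v)$, and focus on the energy absorbed by $v$ (the bound for $u$ is entirely analogous, and deletion is handled symmetrically to insertion). Crucially, under scenario~(a) no node changes level, so only the weights $W_x$ and $W_{x \to k}$ for $x \in \{u, v\}$ are affected; the potentials of third-party nodes $y \notin \{u, v\}$ are therefore untouched and they neither absorb nor release any energy. So it suffices to bound the energy absorbed by $v$ alone, per event, by $O(1/\epsilon)$.

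The guiding intuition is Observation~\ref{ob:main}: in the continuous model, changing $W_x$ by $\delta$ requires $\delta \cdot (1+\epsilon)^{\ell(x)}$ units of work. Translated through the conversion rate $(1+\epsilon)^k \cdot \epsilon^{-1}$ between potential and energy at level $k$, this predicts an energy absorption of order $\delta \cdot (1+\epsilon)^{\ell(x)}/\epsilon$, which equals $1/\epsilon$ when we plug in the actual weight change $\delta = (1+\epsilon)^{-\ell(v)}$ caused by a single edge event. The work in the proof is to realize this intuition rigorously for the discrete real-world potentials~(\ref{main:eq:potential:up:new})--(\ref{main:eq:potential:down:new:1}).

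I would proceed by a case analysis on whether $W_v > 1$ (the up-potential is active) or $W_v \leq 1$ (the down-potential is active) before and after the event. When $W_v \leq 1$ remains throughout, only $\Phi^{\downarrow}(v, \ell(v)) = 1 - W_{v \to \ell(v)}$ changes; an insertion only shrinks it, yielding a release and no absorption. When $W_v > 1$ remains throughout, summation by parts applied to~(\ref{main:eq:potential:up:new}) yields the clean identity
\[
\E^{\uparrow}(v) \;=\; \tfrac{1}{\epsilon}\bigl(W_v\,(1+\epsilon)^{\ell(v)} - (1+\epsilon)^{\ell^{\uparrow}(v)}\bigr) \;+\; \sum_{k=\ell(v)+1}^{\ell^{\uparrow}(v)} W_{v \to k}\,(1+\epsilon)^{k-1},
\]
from which $\Delta \E^{\uparrow}(v)$ can be controlled by $O(1/\epsilon)$ after plugging in $\Delta W_{v \to k} = (1+\epsilon)^{-\max(\ell(u),k)}$ and using the pre-event condition $W_{v \to \ell(v)+1} \leq 1$, which holds because $v$ was not up-dirty when the \textsc{While} loop of Figure~\ref{main:fig:dirty:bhi15} last terminated. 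The boundary case $W_v^{\mathrm{old}} \leq 1 < W_v^{\mathrm{new}}$ combines a release of the small remaining down-potential with an absorption of the freshly created up-potential, and is handled by the same telescoping. Summing the per-event $O(1/\epsilon)$ bound over the endpoints $u, v$ and over the $\tau$ events then establishes Claim~\ref{main:eq:th:0}.

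The main technical obstacle I foresee is the transition regime $W_v^{\mathrm{old}} \leq 1 < W_v^{\mathrm{new}}$ together with the fact that the up-potential is spread across the levels $[\ell(v), \ell^{\uparrow}(v)]$, which could, a priori, span many levels. The telescoping identity above, combined with the pre-event flatness constraint $W_{v \to k} \leq 1$ for $k > \ell(v)$ and the elementary monotonicity $W_{v \to k} \leq (1+\epsilon)\,W_{v \to k+1}$, must be wielded carefully to confirm that the weighted sum of potential changes (i.e., the absorbed energy) remains $O(1/\epsilon)$ regardless of how large $\ell^{\uparrow}(v) - \ell(v)$ is.
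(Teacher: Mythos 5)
Your overall plan -- bound the per-event energy absorption of each endpoint by $O(1/\epsilon)$ via a case analysis on whether $W_x \gtrless 1$ -- is the same plan the paper follows (Lemma~\ref{lm:insertion:deletion} in the full version), and your summation-by-parts identity for $\E^{\uparrow}(v)$ is algebraically correct and is a reasonable alternative to the paper's cleaner accounting (total potential absorbed equals the change in $W_x$, by Corollary~\ref{cor:total:potential:up:new}, and all of it lives at levels $\leq \ell^{\uparrow}(x)$, so multiply by the single conversion rate $(1+\epsilon)^{\ell^{\uparrow}(x)}\epsilon^{-1}$).

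However, there is a genuine gap at exactly the point you flag as an ``obstacle,'' and the tools you propose to close it do not suffice. Carrying your telescoping through and invoking the pre-event flatness $W_{v\to k}\leq 1$ for $k>\ell(v)$ yields
\[
\Delta\E^{\uparrow}(v)\;=\;\frac{1}{\epsilon}\;+\;\sum_{k=\ell(v)+1}^{\ell^{\uparrow}(v)}\bigl(W_{v\to k}^{\mathrm{new}}-1\bigr)(1+\epsilon)^{k-1}\;\leq\;\frac{1}{\epsilon}\;+\;\frac{\ell^{\uparrow}(v)-\ell(v)}{1+\epsilon},
\]
since each summand is at most $(1+\epsilon)^{-k}\cdot(1+\epsilon)^{k-1}=(1+\epsilon)^{-1}$. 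This is $O(1/\epsilon)$ \emph{only if} $\ell^{\uparrow}(v)-\ell(v)=O(1/\epsilon)$, equivalently $(1+\epsilon)^{\ell^{\uparrow}(v)-\ell(v)}=O(1)$; it does not hold ``regardless of how large $\ell^{\uparrow}(v)-\ell(v)$ is,'' and neither flatness nor the monotonicity $W_{v\to k}\leq(1+\epsilon)W_{v\to k+1}$ yields that bound by themselves. The missing ingredient is the paper's geometric/``thought-experiment'' step: since $v$ was passive (not up-dirty) before the insertion, $W_{v\to\ell(v)}>1\geq W_{v\to\ell(v)+1}$ forces $N_v(0,\ell(v))\neq\emptyset$; hypothetically raising $v$ from $\ell(v)+1$ to $\ell^{\uparrow}(v)$ then costs at least $(1+\epsilon)^{-\ell(v)-1}-(1+\epsilon)^{-\ell^{\uparrow}(v)}$ from that one edge, and combining $W_{v\to\ell^{\uparrow}(v)}\geq 1$ after the insertion with $W_{v\to\ell^{\uparrow}(v)}^{\mathrm{new}}\leq 1-(1+\epsilon)^{-\ell(v)-1}+2(1+\epsilon)^{-\ell^{\uparrow}(v)}$ gives $(1+\epsilon)^{\ell^{\uparrow}(v)-\ell(v)}\leq 2(1+\epsilon)$. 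Without this step the proof is incomplete; a secondary (minor) issue is that deletion is not actually symmetric to insertion -- for deletion the up-potentials only shrink, so no geometric bound is needed there at all.
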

\begin{claim}
\label{main:eq:th:1}
The total energy absorbed by the nodes under scenario (b) is at most  $O(\tau/\epsilon^2)$.
\end{claim}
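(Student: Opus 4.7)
I plan to prove Claim~\ref{main:eq:th:1} by coupling a per-iteration analysis of the energy absorbed during each iteration of the \textsc{While} loop in Figure~\ref{main:fig:dirty:bhi15} with a global amortization. For each iteration I would account separately for the energy absorbed at the moving node $x$ and at its neighbors, and show that this absorption is essentially covered by the energy released at $x$ in the same iteration (Lemma~\ref{main:lm:work:jump}) up to a small residual that can be charged either globally or against the scenario-(a) bound from Claim~\ref{main:eq:th:0}.

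Consider first an iteration where $x$ moves up from level $k$ to $k+1$. Because $W_{x \to j}$ is unchanged for $j \geq k+1$ and the up-dirty condition guarantees $W_{x \to k+1} > 1$, the moving node $x$ absorbs no new energy: it only releases $(W_{x \to k} - W_{x \to k+1}) \cdot (1+\epsilon)^k/\epsilon = |N_x(0,k)|/(1+\epsilon)$ energy at level $k$. Each neighbor $y \in N_x(0,k)$ sees $W_y$ decrease by $\delta = \epsilon(1+\epsilon)^{-(k+1)}$; a case analysis on $W_y \leq 1$ versus $W_y > 1$ (with a further sub-case on whether the threshold level $\ell^{\uparrow}(y)$ drops) shows that $y$ absorbs at most $(1+\epsilon)^{\ell(y)-k-1}$ units of energy into its $\Phi^{\downarrow}$ or $\Phi^{\uparrow}$ potentials. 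Summing over neighbors, the total energy absorbed during an up-move is at most $|N_x(0,k)|/(1+\epsilon)$, i.e., at most the energy released by $x$ in the same iteration.

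Next, consider a down-move of $x$ from level $k$ to $k-1$. The moving node $x$ creates a new $\Phi^{\downarrow}(x, k-1) \leq 1$ potential at its new level, absorbing up to $(1+\epsilon)^{k-1}/\epsilon$ energy. Using $W_{x \to k-1} \geq W_{x \to k}$ together with the down-dirty bound $W_{x \to k} < (1+\epsilon)^{-1}$, the ratio of $x$'s absorption at level $k-1$ to its release at level $k$ simplifies to $(1-W_{x \to k-1})/[(1+\epsilon)(1-W_{x\to k})] \leq 1/(1+\epsilon)$, so the net energy change at $x$ is negative, with absolute value at least $\epsilon/(1+\epsilon) \cdot R_x$. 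Each neighbor $y \in N_x(0,k-1)$ sees $W_y$ increase by $\epsilon(1+\epsilon)^{-k}$, contributing $\pm (1+\epsilon)^{\ell(y)-k}$ to the energy balance depending on whether $W_y > 1$. Using the bound $|N_x(0,k-1)| < (1+\epsilon)^{k-1}$ derived in the proof of Claim~\ref{cl:down:brand:new}, the total neighbor absorption in a down-move is at most $(1+\epsilon)^{k-2}$, which is a constant factor less than $x$'s release.

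For the global amortization I would define $\Psi$ to be the total energy stored at the nodes; by construction $\Psi \geq 0$ and $\Psi$ starts at zero. The per-iteration bounds above yield $A^{(b)} \leq O(1) \cdot R^{(b,\text{mover})} + O(\tau)$, where $R^{(b,\text{mover})}$ is the total energy released by moving nodes across all iterations. Since $R^{(b,\text{mover})} \leq$ total energy released $\leq$ total energy absorbed $= A^{(a)} + A^{(b)}$, and $A^{(a)} = O(\tau/\epsilon)$ by Claim~\ref{main:eq:th:0}, an inequality of the form $A^{(b)} \leq c \cdot (A^{(a)} + A^{(b)}) + O(\tau/\epsilon)$ with $c < 1$ would yield $A^{(b)} = O(\tau/\epsilon^2)$ after rearrangement, the extra $1/\epsilon$ factor arising from the fact that a down-move's $\Phi^{\downarrow}$-creation can reach $(1+\epsilon)^{k-1}/\epsilon$ even though its release is only $(1+\epsilon)^{k-1}$.

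\textbf{The main obstacle.} The delicate sub-case is an up-move where a neighbor $y$ has $W_y > 1$ with $\ell^{\uparrow}(y) \leq k$: here potential is redistributed across levels, and a naive bound suggests that up to $\epsilon$ units of potential (equivalently $(1+\epsilon)^{k-1}$ units of energy) can be freshly created at a level strictly below where it was destroyed. Controlling this ``cross-level shift'' requires a separate argument that bounds how many times $\ell^{\uparrow}(y)$ can drop throughout the algorithm's execution by charging each drop to a prior event (an edge insertion at $y$, or a down-move of a neighbor of $y$) that contributed to the increase of $W_y$ past the relevant threshold; folding this into the global amortization is where the precise $O(\tau/\epsilon^2)$ bound is nailed down.
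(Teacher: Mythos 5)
Your per-iteration accounting has a gap that prevents the recurrence from closing. You want an inequality of the form $A^{(b)} \leq c\cdot(A^{(a)}+A^{(b)}) + O(\tau/\epsilon)$ with $c < 1$, but for an up-move of $x$ from level $k$ to $k+1$ the best per-iteration bound is $c=1$: a neighbor $y \in N_x(0,k)$ with $\ell(y)=k$ absorbs potential at level $k$ with conversion rate $(1+\epsilon)^k\epsilon^{-1}$, exactly the rate at which $x$ released it. With $c=1$ your recurrence collapses to $0 \leq A^{(a)} + O(\tau/\epsilon)$ and yields no bound on $A^{(b)}$. Your per-iteration estimate for down-moves \emph{does} give ratio $\leq (1+\epsilon)^{-1}$, but the up-move case is what needs controlling and your plan does not control it.

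The paper's resolution is precisely the piece your plan is missing: it classifies potential by \emph{type} $(k,\alpha)$ with $\alpha\in\{\uparrow,\downarrow\}$, and imposes a total order in which $(k,\uparrow)\succ(k,\downarrow)\succ(k-1,\uparrow)\succ(k-1,\downarrow)\succ\cdots$. Lemmas~\ref{lm:transfer:up:jump} and~\ref{lm:transfer:down:jump} show that, during any level change, (i) the total potential absorbed is at most the potential released by the mover, and (ii) the absorbed potential lands at types \emph{strictly below} the mover's type in this order. Property~\ref{prop} then says the conversion rate drops by a factor $(1+\epsilon)$ every \emph{two} hops down the order. So even though one hop (an up-move landing at level $k$ with $\alpha$ flipping from $\uparrow$ to $\downarrow$) causes no energy loss, the next hop must; cascading this gives the geometric series $\beta = 2 + 2(1+\epsilon)^{-1}+\cdots = O(1/\epsilon)$, and hence $A^{(b)} \leq \beta\cdot A^{(a)} = O(\tau/\epsilon^2)$. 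Your plan never introduces the $\{\uparrow,\downarrow\}$ refinement that makes the order strict, so it cannot capture this two-hop geometric decay. Your stated reason for the extra $1/\epsilon$ factor is also off: you compare the \emph{maximum} possible absorption at level $k-1$ with the \emph{minimum} possible release at level $k$ (these occur for different values of $W_{x\to k}$); the actual ratio in a single down-move is always $\leq (1+\epsilon)^{-1}$, and the $1/\epsilon$ really comes from the length of the geometric cascade, not from a single step.

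One further note: the ``main obstacle'' you flag---an up-dirty neighbor $y$ whose $\ell^{\uparrow}(y)$ drops during an up-move of $x$---is not actually an obstacle. A short calculation (done in the proof of Lemma~\ref{lm:transfer:up:jump}, Case~1) shows that a neighbor that remains up-dirty absorbs zero potential at every level during an up-move of $x$, because $W_{y\to t}$ and $W_{y\to t+1}$ decrease by equal amounts for all $t\leq k-1$ and $W_{y\to j'+1}$ drops below $1$ by definition of the new $\ell^{\uparrow}(y)=j'$, forcing the $\Phi^{\uparrow}(y,j')$ term to decrease as well. So the charging scheme you propose for this sub-case is unnecessary; the energy bookkeeping difficulty lies elsewhere.
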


\subsubsection{Justifying Claim~\ref{main:eq:th:0}}

Consider an {\em event} where an edge $(u, v)$ gets inserted into or deleted from the graph. This can change the potentials  of only the endpoints $u, v$, and hence only  $u$ and $v$  can absorb energy during such an event. Below, we show that the total energy absorbed by the two endpoints is at most $O(1/\epsilon)$. Since $\tau$ is the total number of edge insertions or deletions that take place in the graph $G$, this implies Claim~\ref{main:eq:th:0}.

\medskip
\noindent {\bf Edge-Deletion.}
First, we focus on analyzing an edge-deletion. Specifically, suppose that an edge $(u, v)$ with $\ell(u) = i \geq \ell(v) = j$ gets deleted from the graph. Consider the endpoint $u$. Due to this event (where the edge $(u,v)$ gets deleted), the weight $W_u$ decreases by $(1+\epsilon)^{-i}$. From~(\ref{main:eq:potential:down:new}) and~(\ref{main:eq:potential:up:new:1}) we infer that the value of $\Phi^{\downarrow}(u, i)$ can increase by at most $(1+\epsilon)^{-i}$ during this event, whereas the value of $\Phi^{\downarrow}(u, k)$ remains equal to $0$ for all $k \neq i$. In contrast, from~(\ref{main:eq:potential:up:new}) and~(\ref{main:eq:potential:down:new:1}) we infer that for all $k \in [0, L]$, the value of $\Phi^{\uparrow}(u, k)$ can never increase during this event. This is because for each level $k \in [i, L]$,  the weight $W_{u \rightarrow k}$ decreases by $(1+\epsilon)^{-i}$ due to this event. In other words, the node $u$ can only absorb at most $(1+\epsilon)^{-i}$ units of potential during this event, and that too only at level $i$. Hence, the energy absorbed by the node $u$ during this event is at most $(1+\epsilon)^{-i} \cdot (1+\epsilon)^i \cdot \epsilon^{-1} = \epsilon^{-1}$. Applying a similar argument for the other endpoint $v$, we conclude that at most $2 \epsilon^{-1} = O(1/\epsilon)$ units of energy can get absorbed due to the deletion of an edge.

\smallskip
\noindent {\bf Edge-Insertion.}
Next, we focus on the scenario where an edge $(u, v)$ gets inserted into the graph. A formal proof for this scenario is a bit involved. To highlight the main idea, we only consider one representative scenario in this section, as described below.

\medskip
\noindent {\em Suppose that $\ell(u) = i \geq \ell(v) = j$ and  $W_u > 1$ just before the insertion of the edge $(u, v)$. We want to show that the node $u$ absorbs at most $O(1/\epsilon)$ units of energy during  this event (insertion of the edge $(u,v)$).}

\medskip
\noindent The key observation here is that just before the event the node $u$ was not up-dirty. To be more specific, just before the event we had $W_{u \rightarrow i} > 1$ and $W_{u \rightarrow i+1} \leq 1$. This follows from the discussion on ``approximation ratio" in Section~\ref{sec:describe}. This implies that there was at least one edge $(u, x) \in E$ with $\ell(x) \leq i$ just before the event, for otherwise we would have $W_{u \rightarrow i+1} = W_{u \rightarrow i} > 1$. Let $i'$ be the value of $\ell^{\uparrow}(u)$ just after the event. Now, note that for every level $i \leq k \leq i'$, the value of $W_{u \rightarrow k}$ increases by $(1+\epsilon)^{-k} \leq (1+\epsilon)^{-i}$ during this event. Hence, from~(\ref{main:eq:potential:up:new}) we conclude that the node $u$ absorbs at most $(1+\epsilon)^{-i}$ units of potential at each level $k \in [i, i']$. Thus, the total energy absorbed by the node $u$ is at most $\sum_{k=i}^{i'} (1+\epsilon)^{-i} \cdot (1+\epsilon)^k \cdot \epsilon^{-1} \leq \epsilon^{-1} \cdot (1+\epsilon)^{i'-i+1}$. To complete the proof, below we show that $(1+\epsilon)^{i'-i} = O(1)$, which implies that the node $u$ absorbs at most $O(1/\epsilon)$ units of energy during this event.

Just before the event, we had $W_{u \rightarrow i+1} \leq 1$. At that time, consider a thought experiment where we move up the node $u$ to level $i'$. During that process, as we move up from level $i+1$ to level $i'$,  the weight of the edge $(u, x)$  decreases by $(1+\epsilon)^{-(i+1)} - (1+\epsilon)^{-i'}$. Hence, we get: $W_{u \rightarrow i'} \leq 1 - (1+\epsilon)^{-(i+1)} + (1+\epsilon)^{-i'}$ just before the event. Insertion of the edge $(u, v)$ increases the weight $W_{u \rightarrow i'}$ by $(1+\epsilon)^{-i'}$. Hence, we infer that $W_{u \rightarrow i'} \leq 1 - (1+\epsilon)^{-(i+1)} + 2 \cdot (1+\epsilon)^{-i'}$ just after the event. Recall that $i'$ is the value of $\ell^{\uparrow}(u)$ just after the event. Thus, by definition, we have: $W_{u \rightarrow i'} > 1$. Combining the last two inequalities, we get:
$$1 < 1 - (1+\epsilon)^{-(i+1)} + 2 \cdot (1+\epsilon)^{-i'}, \text{ which implies that } (1+\epsilon)^{-(i+1)} < 2 \cdot (1+\epsilon)^{-i'}.$$
Rearranging the terms in the last inequality, we get: $(1+\epsilon)^{i'-i} < 2 (1+\epsilon) = O(1)$, as promised.

\medskip
\noindent {\bf A note on the gap between $\ell(u)$ and $\ell^{\uparrow}(u)$.} The above argument relies upon the following property: Since the node $u$ was not up-dirty before the insertion of the edge $(u, v)$, the level $\ell^{\uparrow}(u)$ cannot be too far away from the level $\ell(u)$ just  after the insertion of the edge $(u, v)$. This property, however, might no longer be true once we call the subroutine in Figure~\ref{main:fig:dirty:bhi15}. This is because by the time we deal with a specific up-dirty node $x$, a lot of its neighbors might have changed their levels (thereby significantly changing the weight $W_x$).

\subsubsection{Justifying Claim~\ref{main:eq:th:1}}

We now give a high-level, intuitive justification for Claim~\ref{main:eq:th:1}, which bounds the total energy absorbed by all the nodes under scenario (b). See the discussion following the statement of Theorem~\ref{main:th:amortized:work}. We first classify the node-potentials into certain {\em types}, depending on the level the potential is stored at, {\em and} whether the potential will be used to account for the time spent in moving the node up or down from that level. Accordingly, for every level $k \in [0, L]$, we define: 
$$\Phi^{\uparrow}(k) = \sum_{x \in V} \Phi^{\uparrow}(x, k) \text{  and } \Phi^{\downarrow}(k) = \sum_{x \in V} \Phi^{\downarrow}(x, k).$$ 
We say that there are $\Phi^{\uparrow}(k)$ units of potential in the system that are of {\em type} $(k, \uparrow)$, and  there are $\Phi^{\downarrow}(k)$ units of potential in the system that are of {\em type} $(k, \downarrow)$. Overall, there are $2(L+1)$ different types of potentials, since we can construct $2(L+1)$ many ordered pairs of the form $(k, \alpha)$, with $0 \leq k \leq L$ and $\alpha \in \{\uparrow, \downarrow\}$.

Let $\Gamma = \{ (k, \alpha) :  \alpha \in \{\uparrow,\downarrow\} \text{ and } 0 \leq k \leq  L \}$ denote the set of all possible types of potentials. We define a total order $\succ$ on the elements of the set $\Gamma$ as follows. For any two types of potentials $(k, \alpha), (k', \alpha') \in \Gamma$, we have $(k, \alpha) \succ (k', \alpha)$ iff either $\{k > k'\}$ or $\{ k = k', \alpha = \uparrow, \text{ and } \alpha' = \downarrow \}$. Next, from~(\ref{eq:energy:up:new}) and~(\ref{eq:energy:down:new}), we recall that the {\em conversion rate} between energy and potential is $(1+\epsilon)^{k} \cdot \epsilon^{-1}$ at level $k$. In other words, from $\delta$ units of potential stored at level $k$ we get $\delta \cdot (1+\epsilon)^{k} \cdot \epsilon^{-1}$ units of energy. Keeping this in mind, we define the {\em conversion rate} associated with both the types $(k, \uparrow)$ and $(k, \downarrow)$ to be $(1+\epsilon)^k \cdot \epsilon^{-1}$. Specifically, we write $c_{(k, \uparrow)} = c_{(k, \downarrow)} = (1+\epsilon)^k \cdot \epsilon^{-1}$. Thus, from $\delta$ units of potential of any type $\gamma \in \Gamma$, we get $\delta \cdot c_{\gamma}$ units of energy. The total order $\succ$ we defined on the set $\Gamma$ has the following properties.
\begin{property}
\label{main:prop}
Consider any three types of potentials $\gamma_1, \gamma_2, \gamma_3 \in \Gamma$ such that $\gamma_1 \succ \gamma_2 \succ \gamma_3$. Then we must have $c_{\gamma_1} \geq (1+\epsilon) \cdot c_{\gamma_3}$. In words, the conversion rate between energy and potentials drops by at least a factor of $(1+\epsilon)$ as we move two hops down the total order $\Gamma$.
\end{property}

\begin{proof}
Any $\gamma \in \Gamma$ is of the form $(k, \alpha)$ where $k \in \{0, \ldots, L\}$ and $\alpha \in \{\uparrow, \downarrow\}$. From the way we have defined the total order $\succ$, it follows that if $(k_1, \alpha_1) \succ (k_2, \alpha_2) \succ (k_3, \alpha_3)$, then $k_1 \geq k_3+1$. The property holds since $c_{(k_1, \alpha_1)} = (1+\epsilon)^{k_1}$ and $c_{(k_3, \alpha_3)} = (1+\epsilon)^{k_3}$.
\end{proof}

\begin{property}
\label{main:ob:key} Consider an event where some nonzero units of potential are absorbed by some nodes, under scenario (b). Such an event occurs only if some node $x$ moves up or down one level from its current level $k$ (say). In the former case let $\gamma^* = (k, \uparrow)$, and in the latter case let $\gamma^* = (k, \downarrow)$. Let $\delta^* \geq 0$ denote the potential of type $\gamma^*$ {\em released} by $x$ during this event. For every type $\gamma \in \Gamma$, let $\delta_{\gamma} \geq 0$ denote the total potential of type $\gamma$ {\em absorbed} by all the nodes during this event. Then: (1) For every $\gamma \in \Gamma$, we have $\delta_{\gamma} > 0$ only if $\gamma^* \succ \gamma$. (2) We also have $\sum_{\gamma \in \Gamma} \delta_{\gamma} \leq \delta^*$. 
\end{property}

\begin{proof}(Sketch)
A formal proof of Property~\ref{main:ob:key} is quite involved. Instead, here we present a very high-level intuition behind the proof.  Consider an event where an up-dirty node $x$ with weight $W_x > 1$ moves up (say) from level $k$ to level $k+1$. From~(\ref{main:eq:potential:up:new}), we infer that the node $x$ releases $\delta^* = W_{x \rightarrow k} - W_{x \rightarrow k+1}$ units of potential, and the released potential is of type $(k, \uparrow)$. 

Next, we observe that the weight $W_x$ also decreases exactly by $\delta^*$ during this event. Now, during the same event, some neighbors $y$ of $x$ might decrease their weights $W_y$, and these are also the nodes that might absorb some nonzero units of potentials. We note that the weight of an edge $(x, y)$ decreases  only if $y \in N_x(0,k)$, and the sum of these weight-decreases is equal to $\delta^*$. Thus, a neighbor $y$ of $x$ absorbs some potential only if $\ell(y) \leq k$, and the sum of these absorbed potentials is at most $\delta^*$. From~(\ref{main:eq:potential:up:new}),~(\ref{main:eq:potential:up:new:1}),~(\ref{main:eq:potential:down:new}) and~(\ref{main:eq:potential:down:new:1}), we also conclude that if a node $y$ absorbs potential when its weight decreases, then the absorbed potential must be of type $(\ell(y), \downarrow)$ where $\ell(y) \leq k$. To summarize, the node $x$ releases $\delta^*$ units of potential of type $(k, \uparrow)$, and at most $\delta^*$ units of potential are overall absorbed by all the nodes during this event. Furthermore, if a nonzero amount of potential of some type $\gamma$ gets absorbed, then we must have $\gamma = (k', \downarrow)$ for some $k' \leq k$, and hence $(k, \uparrow) \succ \gamma$. 

A similar argument applies when the node $x$ moves down from level $k$ to level $k-1$.
\end{proof}

Properties~\ref{main:prop} and~\ref{main:ob:key} together give us a complete picture of the way the potential  stored by the nodes {\em flows within the system}. Specifically, there are two scenarios in which potential can be pumped into (i.e., absorbed by) the nodes. In scenario (a), potential gets pumped into the nodes {\em exogenously} by an adversary, due to the insertion or deletion of an edge in the graph. But, according to Claim~\ref{main:eq:th:0}, the total energy absorbed by the nodes under this scenario is already upper bounded by $O(\tau/\epsilon)$. On the other hand, in scenario (b), a node releases some $\delta^* \geq 0$ units of potential of type $\gamma^* \in \Gamma$ (say), and at the same time  some $0 \leq \delta \leq \delta^*$ units of potential get created. This newly created $\delta$ units of potential are then {\em split up} in some {\em chunks}, and  these chunks in turn get absorbed as potentials of (one or more) different types $\gamma$. We now note three key points about this process: (1) $\delta \leq \delta^*$. (2) If a chunk of this newly created $\delta$ units of potential gets absorbed as potential of type $\gamma$, then we must have $\gamma^* \succ \gamma$. (3) By Property~\ref{main:prop}, as we move down two hops in the total order $\succ$, the conversion rate between energy and potential {\em drops} at least by a factor of $(1+\epsilon)$. These three points together imply that the energy absorbed under scenario (b) is at most $\beta$ times the energy absorbed under scenario (a), where $\beta = 2 + 2 \cdot (1+\epsilon)^{-1} + 2 \cdot (1+\epsilon)^{-2} + \cdots = O(1/\epsilon)$. Hence, from Claim~\ref{main:eq:th:0} we infer that the total energy absorbed by the nodes under scenario (b) is at most $O(1/\epsilon) \cdot O(\tau/\epsilon) = O(\tau/\epsilon^2)$. This concludes the proof of Claim~\ref{main:eq:th:1}.

\section{Bibliography}

\bibliographystyle{plain}
\bibliography{paper}

\newpage

\appendix

\part{Full Version}
\label{part:full}

We emphasize that a few notations and definitions used in the full version (Part~\ref{part:full}) are different from the ones used in Part~\ref{part:main}. For instance, the notions of {\em up-dirty} and {\em down-dirty} nodes as defined in Section~\ref{sec:describe} are different from the ones introduced in Section~\ref{sec:algorithm}. However, a moment's thought will reveal that the algorithm in Part~\ref{part:full} is the same as the algorithm in Section~\ref{sec:describe}. This is because the {\sc While} loop in Figure~\ref{fig:dirty} moves a node $x$ to a higher (resp. lower) level iff $x$ is active and up-dirty  (resp. active and down-dirty). Furthermore, a node $x$ is active and up-dirty (resp. active and down-dirty) in Section~\ref{sec:algorithm} iff it is up-dirty (resp. down-dirty) in Section~\ref{sec:describe}. Hence, the two subroutines in Figure~\ref{main:fig:dirty:bhi15} and Figure~\ref{fig:dirty} behave in exactly the same manner. In general, the presentation in Part~\ref{part:full} is self-contained, and the notations used here should be interpreted independently of the notations used in Part~\ref{part:main}.


\section{Full version of the algorithm}
\label{sec:algorithm}

The input graph $G = (V, E)$ has $|V| = n$ nodes. Each node $v \in V$ is placed at an  integer level $\ell(v) \in \{0, \ldots, L\}$ where $L = \lceil \log_{1+\epsilon} n \rceil$. The level of an edge $(u, v) \in E$ is defined as $\ell(u, v) = \max(\ell(u), \ell(v))$. Thus, it is the maximum level among its endpoints. Each edge $(u, v) \in E$ gets a weight $w(u, v) = (1+\epsilon)^{-\ell(u,v)}$. The weight of a node $v \in V$ is given by $W_v = \sum_{u \in N_v} w(u, v)$, where $N_v$ is the set of neighbors of $v$. For any two levels $0 \leq i \leq j \leq L$, let $N_v(i,j) = \{ u \in N_v : i \leq \ell(u) \leq j \}$ denote  the set of neighbors of a node $v \in V$ who lie between level $i$ and level $j$. 

For every node $v \in V$ and every level $0 \leq i \leq L$, let $W_{v \rightarrow i} = \sum_{u \in N_v} (1+\epsilon)^{-\max(\ell(u), i)}$ denote what  the weight of the node $v$ would have been if we were to place  $v$ at level $i$, without changing the level of any other node. We have $W_v = W_{v \rightarrow \ell(v)}$. Note that $W_{v \rightarrow i}$ is a monotonically (weakly) decreasing function of $i$. This holds since as we increase the level of $v$ (say) from $i$ to $(i+1)$, all its incident edges $(u, v) \in E$ with $u \in N_v(0, i)$ decrease their weights, and the weights of all its other incident edges remain unchanged. We will repeatedly use this observation throughout the rest of this paper.

\paragraph{A classification of nodes.}
 We  classify the nodes into two types:  {\em up-dirty} and  {\em down-dirty}. A node $v \in V$ is  up-dirty if $W_v \geq 1$, and   down-dirty if $W_v < 1$. Note that an up-dirty node can never be at  level $L$.
 \begin{corollary}
\label{cor:updirty:level}
Every up-dirty node $v \in V$ has $\ell(v) < L$.
\end{corollary}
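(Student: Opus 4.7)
The plan is to argue by contradiction: assume there is an up-dirty node $v \in V$ with $\ell(v) = L$ and derive that $W_v < 1$, contradicting the definition of up-dirty.

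First, I would observe that if $\ell(v) = L$, then for every neighbor $u \in N_v$ we have $\ell(u,v) = \max(\ell(u), \ell(v)) = L$, since all levels lie in $\{0, \ldots, L\}$. Consequently every edge incident on $v$ has weight exactly $(1+\epsilon)^{-L}$, and therefore
\[
W_v \;=\; \sum_{u \in N_v} (1+\epsilon)^{-L} \;=\; |N_v| \cdot (1+\epsilon)^{-L}.
\]

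Next, I would plug in the bounds on $|N_v|$ and $L$. Since $G$ has $n$ nodes, any node has at most $n-1$ neighbors, so $|N_v| \leq n-1$. By the choice $L = \lceil \log_{1+\epsilon} n \rceil$, we have $(1+\epsilon)^L \geq n$. Combining these two estimates gives
\[
W_v \;\leq\; (n-1) \cdot (1+\epsilon)^{-L} \;\leq\; \frac{n-1}{n} \;<\; 1.
\]
Hence $v$ would satisfy $W_v < 1$ and thus be down-dirty, contradicting our assumption that $v$ is up-dirty (which requires $W_v \geq 1$). This forces $\ell(v) < L$, which is the desired conclusion.

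There is no real obstacle here; the statement is essentially a one-line consequence of the exponential edge-weight definition together with the specific choice $L = \lceil \log_{1+\epsilon} n \rceil$. The only thing to be careful about is using the correct direction of the inequality $(1+\epsilon)^L \geq n$ that follows from the ceiling in the definition of $L$.
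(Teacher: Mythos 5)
Your argument is correct and is essentially identical to the paper's proof: both bound $W_v$ at level $L$ by $|N_v|\cdot(1+\epsilon)^{-L}\leq (n-1)/n<1$ using $(1+\epsilon)^L\geq n$, contradicting $W_v\geq 1$. No issues.
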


\begin{proof}
A node $v \in V$ can have at most $(n-1)$ neighbors in an $n$-node graph, and  the weight of each of its incident edges $(u, v) \in E$ becomes equal to $(1+\epsilon)^{-L} \leq 1/n$ if we place the node $v$ at level $L$. Hence, we have $W_{v \rightarrow L} \leq |N_v| \cdot (1/n) \leq (n-1)/n < 1$. Thus, a node $v$ can never be up-dirty if it is at level $L$.
\end{proof}

We further classify the nodes into two types:  {\em active} and {\em passive}. An up-dirty node $v \in V$ is  active if $W_{v \rightarrow \ell(v)+1} \geq 1$, and  passive otherwise.  In contrast, a down-dirty node $v \in V$ is active if $\{ \ell(v) > 0 \text{ and } W_v < 1-\epsilon\}$, and passive otherwise. We now derive an important corollary.
\begin{corollary}
\label{cor:active}
If a up-dirty node $v \in V$ is active, then we have $W_{v \rightarrow \ell(v)+1} \geq 1$. If a down-dirty node $v \in V$ is active, then we have $W_{v \rightarrow \ell(v) -1} < 1$. In words, an up-dirty (resp. down-dirty) node is active only if it remains up-dirty (resp. down-dirty) after we increase (resp. decrease) its level by one.
\end{corollary}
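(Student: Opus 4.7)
\medskip
\noindent \textbf{Proof plan for Corollary~\ref{cor:active}.} The first assertion is essentially a restatement of the definition of an active up-dirty node, so the plan is simply to unpack the definition and note that the condition $W_{v \to \ell(v)+1} \geq 1$ is well-defined because, by Corollary~\ref{cor:updirty:level}, an up-dirty node satisfies $\ell(v) < L$, hence $\ell(v)+1 \leq L$.

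The substantive part is the second assertion: if $v$ is down-dirty and active, i.e., $\ell(v) > 0$ and $W_v < 1 - \epsilon$, then $W_{v \to \ell(v)-1} < 1$. My plan is to compare $W_{v \to \ell(v)-1}$ to $W_v = W_{v \to \ell(v)}$ edge-by-edge. When we lower the level of $v$ from $k = \ell(v)$ to $k-1$, every incident edge $(u,v)$ with $u \in N_v(k, L)$ keeps weight $(1+\epsilon)^{-\ell(u)}$ unchanged, while every incident edge $(u,v)$ with $u \in N_v(0, k-1)$ sees its weight grow from $(1+\epsilon)^{-k}$ to $(1+\epsilon)^{-(k-1)}$, i.e., by exactly a factor $(1+\epsilon)$. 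In particular, each edge weight grows by at most a factor $(1+\epsilon)$, so summing gives the inequality
\[
W_{v \to \ell(v)-1} \;\leq\; (1+\epsilon) \cdot W_v.
\]

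To finish, I plug in the active down-dirty hypothesis $W_v < 1 - \epsilon$ to obtain
\[
W_{v \to \ell(v)-1} \;<\; (1+\epsilon)(1-\epsilon) \;=\; 1 - \epsilon^2 \;<\; 1,
\]
which is the desired inequality. There is no real obstacle here; the only thing to be careful about is that $\ell(v) > 0$ (guaranteed by the activeness definition) so that $\ell(v)-1 \geq 0$ and $W_{v \to \ell(v)-1}$ is a valid quantity. The verbal summary in the last sentence of the corollary statement then follows immediately: an active down-dirty node stays down-dirty (indeed with weight still below $1$) after being moved down one level, and symmetrically an active up-dirty node stays up-dirty (indeed with weight still at least $1$) after being moved up one level.
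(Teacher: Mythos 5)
Your proof is correct and takes essentially the same approach as the paper: unpack the definition for the up-dirty case, and for the down-dirty case observe that lowering $v$ by one level increases each incident edge weight by at most a factor $(1+\epsilon)$, giving $W_{v \to \ell(v)-1} \leq (1+\epsilon) W_v < (1+\epsilon)(1-\epsilon) < 1$.
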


\begin{proof}
The first part of the corollary follows from the definition of an active, up-dirty node. Accordingly, we consider an active, down-dirty node $v \in V$ at level $\ell(v) = k$. By definition, we have $k > 0$ and $W_{v \rightarrow k} < 1-\epsilon$. Now, if we move the node $v$ from level $k$ to level $k-1$, then the weight of every edge in $(u,v) \in E$ with $u \in N_v(0,k-1)$ increases by a factor of $(1+\epsilon)$, and the weight of every other edge remains unchanged. This ensures that $W_{v \rightarrow k-1} \leq (1+\epsilon) \cdot W_{v \rightarrow k} < (1+\epsilon) \cdot (1-\epsilon)  < 1$.
\end{proof}

In the dynamic setting, when the input graph $G = (V, E)$ keeps getting updated via a sequence of edge insertions and deletions, our algorithm will strive to maintain the following invariant.

\begin{invariant}
\label{inv}
Every   node $v \in V$ is passive.
\end{invariant}

In Corollary~\ref{cor:inv}, we will show how to get a $(2+\epsilon)$-approximate minimum vertex cover in $G$ out of Invariant~\ref{inv}. But first, we observe an useful bound on the weights of the passive nodes.

\begin{corollary}
\label{cor:passive}
For every passive node $v \in V$ at level $\ell(v) > 0$, we have $(1-\epsilon) \leq W_v < (1+\epsilon)$. Furthermore, for every passive node $v \in V$ at level $\ell(v) = 0$, we have $0 \leq W_v < (1+\epsilon)$.
\end{corollary}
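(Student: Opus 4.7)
The plan is to do a straightforward case analysis on whether the passive node $v$ is up-dirty or down-dirty, since these are the two classes introduced before Invariant~\ref{inv}. The key observation that drives the upper bound in the up-dirty case is that moving a node up by one level can shrink its weight by a factor of at most $(1+\epsilon)$, because every incident edge either keeps the same weight or sees its weight multiplied by $(1+\epsilon)^{-1}$. Formally, if $v$ is at level $k$, then $W_{v \rightarrow k+1} \geq W_v/(1+\epsilon)$.

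First I would handle an up-dirty passive node $v$. By definition, $W_v \geq 1$ and $W_{v \rightarrow \ell(v)+1} < 1$. Combining the inequality from the previous paragraph with the passivity condition yields $W_v \leq (1+\epsilon)\cdot W_{v \rightarrow \ell(v)+1} < 1+\epsilon$. The lower bound $W_v \geq 1-\epsilon$ is immediate from $W_v \geq 1$, so the claim holds in this subcase regardless of whether $\ell(v) > 0$ or $\ell(v) = 0$.

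Next I would handle a down-dirty passive node $v$, so $W_v < 1$. If $\ell(v) > 0$, then being passive excludes the active condition $W_v < 1-\epsilon$, so we must have $W_v \geq 1-\epsilon$, giving $1-\epsilon \leq W_v < 1 < 1+\epsilon$. If $\ell(v) = 0$, then $0 \leq W_v < 1 < 1+\epsilon$ trivially, matching the statement for level $0$.

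I do not expect any real obstacle here — the proof is essentially a direct unfolding of the definitions of up-dirty, down-dirty, active, and passive from the paragraphs preceding Invariant~\ref{inv}. The only subtle ingredient is the ``at most factor $(1+\epsilon)$'' bound on how much $W_v$ can shrink when $v$ moves up one level, and this follows from the same monotonicity argument highlighted in the first paragraph of Section~\ref{sec:algorithm}.
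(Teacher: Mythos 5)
Your proof is correct and uses the same essential ingredients as the paper's: the observation that $W_{v\to k+1}\ge W_v/(1+\epsilon)$ (so a passive up-dirty node has $W_v<1+\epsilon$) and a direct reading of the down-dirty active condition. The only difference is organizational — you split by up-dirty/down-dirty while the paper splits by $\ell(v)>0$ vs.\ $\ell(v)=0$ and argues by contradiction — but the underlying argument is the same.
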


\begin{proof}
Consider a node $v \in V$ at level $\ell(v) = k > 0$. First, for the sake of contradiction, suppose that  $W_v \geq 1+\epsilon$. Then clearly the node $v$ is up-dirty. If we increase the level of $v$ from $k$ to $k+1$, then the weight of every edge $(v, x) \in E$ with $x \in N_v(0, k)$ decreases by a factor of $(1+\epsilon)$, and the weight of every other edge remains unchanged. It follows that $W_{v \rightarrow k+1} \geq (1+\epsilon)^{-1} \cdot W_{v \rightarrow k} \geq 1$. Hence, such a node $v$ must also be active, which contradicts the assumption stated in the corollary. Next, for the sake of contradiction, suppose that $W_v < 1-\epsilon$. Then by definition the node $v$ is down-dirty and active, which again contradictions the assumption stated in the corollary. We therefore conclude that if a node $v$ is passive at a level $\ell(v) > 0$, then we must have $(1-\epsilon) \leq W_v < (1+\epsilon)$.

Next, consider a node $v \in V$ at level $\ell(v) = 0$. If $W_v \geq 1+\epsilon$, then the node $v$ is up-dirty, and using a similar argument described in the previous paragraph, we can derive that the node $v$ is also active. This leads to a contradiction. Thus, a passive node $v$ at level $\ell(v) = 0$ must have $0 \leq W_v < (1+\epsilon)$.
\end{proof}

\begin{corollary}
\label{cor:inv}
Under Invariant~\ref{inv}, the set of nodes $V^* = \{ v \in V : W_v \geq (1-\epsilon)\}$ forms a $2(1+\epsilon)(1-\epsilon)^{-1}$-approximate minimum vertex cover in $G = (V, E)$.
\end{corollary}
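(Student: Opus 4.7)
The proof proceeds in two parts. First I would establish that $V^\ast$ is a valid vertex cover, and then bound $|V^\ast|$ from above via LP-duality using the edge-weights as a dual fractional matching certificate.

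For the covering property, suppose for contradiction that there is an edge $(u,v) \in E$ with $W_u, W_v < (1-\epsilon)$. Under Invariant~\ref{inv} both endpoints are passive. By Corollary~\ref{cor:passive}, any passive node at a nonzero level satisfies $W \geq (1-\epsilon)$, so both $u$ and $v$ must lie at level $0$. But then $\ell(u,v) = 0$, so $w(u,v) = (1+\epsilon)^0 = 1$, and since this single edge already contributes $1$ to each of $W_u$ and $W_v$, we get $W_u, W_v \geq 1 > 1-\epsilon$, a contradiction. Hence every edge has at least one endpoint in $V^\ast$.

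For the approximation ratio, define the scaled edge-weights $\tilde{w}(e) = w(e)/(1+\epsilon)$. By Corollary~\ref{cor:passive}, every passive node satisfies $W_v < (1+\epsilon)$, so $\sum_{u \in N_v} \tilde{w}(u,v) < 1$ for every $v$, which means $\{\tilde{w}(e)\}$ is a feasible fractional matching in $G$. By LP-duality between maximum fractional matching and minimum vertex cover, the size of any minimum vertex cover $\mathrm{OPT}$ satisfies
\[
|\mathrm{OPT}| \;\geq\; \sum_{e \in E} \tilde{w}(e) \;=\; \frac{1}{1+\epsilon} \sum_{e \in E} w(e).
\]

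On the other hand, since each edge contributes its weight to exactly two node-weights, $\sum_{v \in V} W_v = 2 \sum_{e \in E} w(e)$. Restricting to $V^\ast$ and using that every $v \in V^\ast$ has $W_v \geq 1-\epsilon$, we obtain
\[
(1-\epsilon) \cdot |V^\ast| \;\leq\; \sum_{v \in V^\ast} W_v \;\leq\; \sum_{v \in V} W_v \;=\; 2 \sum_{e \in E} w(e).
\]
Combining the two displayed inequalities yields $|V^\ast| \leq \frac{2(1+\epsilon)}{1-\epsilon} \cdot |\mathrm{OPT}|$, as claimed.

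The only subtle point is ensuring the dual certificate is valid everywhere: at level-$0$ nodes Corollary~\ref{cor:passive} gives only $0 \leq W_v < (1+\epsilon)$, and at higher levels it gives $(1-\epsilon) \leq W_v < (1+\epsilon)$, but in both cases $W_v < (1+\epsilon)$, which is all that is needed after scaling by $1/(1+\epsilon)$. The rest is a routine double-counting argument, so I do not expect any real obstacle beyond bookkeeping.
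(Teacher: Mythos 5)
Your proposal is correct and follows essentially the same route as the paper: establish the covering property via Corollary~\ref{cor:passive} and the level-$0$ edge-weight observation, then scale the edge-weights by $(1+\epsilon)^{-1}$ to obtain a feasible fractional matching and invoke LP-duality. The paper cites complementary slackness where you unfold the equivalent double-counting inequality $(1-\epsilon)\,|V^*| \leq \sum_v W_v = 2\sum_e w(e)$ explicitly, but the underlying argument is the same.
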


\begin{proof}(Sketch) From Corollary~\ref{cor:passive}, it follows that under Invariant~\ref{inv} every node $v \in V$ at level $\ell(v) > 0$ belongs to the set $V^*$. Now, consider any two nodes $x, y \in V \setminus V^*$. Clearly, we have $\ell(x) = \ell(y) = 0$ and $W_x, W_y < (1-\epsilon)$. There cannot be any edge between these two nodes $x$ and $y$, for otherwise such an edge $(x, y) \in E$ will have weight $w(x, y) = 1$, which in turn will contradict the assumption that $W_x, W_y < (1-\epsilon)$. Thus, we conclude that there cannot be any edge between two nodes in $V \setminus V^*$. In other words, the set $V^*$ forms a valid vertex cover.

Now, we scale the edge-weights by a factor of $(1+\epsilon)$ by setting $\tilde{w}(e) = (1+\epsilon)^{-1} \cdot w(e)$ for all $e \in E$.  Let $\tilde{W}_v = \sum_{u \in N_v} \tilde{w}(u,v)$ denote the weight received by a node $v \in V$ from its incident edges under these scaled edge-weights. Corollary~\ref{cor:passive} and Invariant~\ref{inv} imply that $0 \leq \tilde{W}_v = (1+\epsilon)^{-1} \cdot W_v < 1$ for all nodes $v \in V$. In other words, the  edge-weights $\{\tilde{w}(e)\}$ form a valid fractional matching in the graph $G = (V, E)$. Note that a node $v \in V$ belongs to the set $V^*$ if and only if $(1-\epsilon) (1+\epsilon)^{-1} \leq \tilde{W}_v = (1+\epsilon)^{-1} \cdot W_v < 1$. 

To summarize, we have constructed a valid fractional matching $\{\tilde{w}(e)\}$ and a valid vertex cover $V^*$ with the following property: Every node in $V^*$ receives a weight between $(1-\epsilon)(1+\epsilon)^{-1}$ and $1$ under the  matching $\{ \tilde{w}(e) \}$. Hence, from the complementary slackness conditions between the pair of primal and dual LPs for minimum fractional vertex cover and maximum fractional matching, we can infer  that the set $V^*$ forms a $2(1+\epsilon)(1-\epsilon)^{-1}$-approximate minimum vertex cover in $G$.
\end{proof}

\paragraph{Handling the insertion/deletion of an edge:} Initially, the  graph $G = (V, E)$ has an empty edge-set, every node $v \in V$ is at level $\ell(v) = 0$, and Invariant~\ref{inv} is trivially satisfied. By inductive hypothesis, suppose that  Invariant~\ref{inv} is satisfied just before the insertion or deletion of an edge in $G$. Now, as an edge $(u, v)$ gets inserted into (resp. deleted from) the graph, the node-weights $W_u$ and $W_v$ will both increase (resp. decrease) by $(1+\epsilon)^{-\max(\ell(u), \ell(v))}$, and one or both of the endpoints $x \in \{u, v\}$ might become active, thereby violating Invariant~\ref{inv}. If this is the case, then we  call the subroutine outlined in Figure~\ref{fig:dirty}. The subroutine returns only after ensuring that Invariant~\ref{inv} is satisfied. At that point, we are  ready to handle the next edge insertion or deletion in $G$.
\begin{figure}[h!]
	\centerline{\framebox{
			\begin{minipage}{5.5in}
				\begin{tabbing}
					1. \=    {\sc While}  there exists some active  node $x$: \\
					2.  \= \ \ \ \ \ \ \= {\sc If} the node $x$ is active and up-dirty, {\sc then} \\
					3. \> \> \ \ \ \ \ \ \ \= move it up by one level by setting $\ell(x) \leftarrow \ell(x) + 1$. \\
					4. \> \> {\sc Else if} the node $x$ is active and down-dirty, {\sc then} \\
					5. \> \> \> move it down one level by setting $\ell(x) \leftarrow \ell(x) - 1$.
					\end{tabbing}
			\end{minipage}
	}}
	\caption{\label{fig:dirty} Subroutine:  FIX-DIRTY(.) is called after the insertion/deletion of an edge.}
\end{figure}

\subsection{Bounding the work done by our algorithm} 
\label{sec:analyze:work}
 
We say that  our algorithm performs one unit of {\em work} each time it  changes the level $\ell(y, z) = \max(\ell(y), \ell(z))$ of an edge $(y, z) \in E$. To be more specific, consider an iteration of the {\sc While} loop in Figure~\ref{fig:dirty} where we are moving up a node $x$ from (say) level $k$ to level $k-1$. During this iteration, the edges that change their levels are of the form $(x, y) \in E$ with $y \in N_x(0, k)$. Hence, the work done by our algorithm during such an iteration is equal to $|N_x(0, k)|$, and we {\em charge} this work on the node $x$. In words, we say that our algorithm performs $|N_x(0,k)|$ units of work {\em on} the node $x$ during this iteration. Similarly, consider an iteration of the {\sc While} loop in Figure~\ref{fig:dirty} where we are moving down a node $x$ from (say) level $k$ to level $k-1$. During this iteration, the edges that change their levels are of the form $(x, y) \in E$ with $y \in N_x(0,k-1)$. Hence, the work done by our algorithm during such an iteration is equal to $|N_x(0,k-1)|$, and we {\em charge} this work on the node $x$. In words, we say that our algorithm performs $|N_x(0,k-1)|$ {\em on} the node $x$ during this iteration.

The work done  serves as a useful {\em proxy} for the  update time. We will show that starting from an empty graph, we  perform $O(\tau/\epsilon^2)$ units of work to handle a sequence of $\tau$ edge insertions/deletions in $G$. Later on, in Section~\ref{sec:analyze:time}, we will explain why this implies an amortized update time of $O(1/\epsilon^2)$, after describing a set of suitable data structures for implementing our algorithm.

\paragraph{Node-potentials and energy.} In order to bound the work done by our algorithm, we introduce the notions of the {\em potential} and {\em energy} stored by a  node. Every node $x \in V$ stores  nonnegative amounts of potential $\Phi(x, k) \geq 0$ and energy $\E(x, k) = \Phi(x, k) \cdot (1+\epsilon)^k \cdot \epsilon^{-1}$ at every level $0 \leq k \leq L$. Intuitively, when the node $x$ moves up (or down) by one level from the level $k$, it {\em releases} the potential (and  energy) stored by it at that level. We will show that the amount of energy released is sufficient to account for the work done on the node $x$ during this event.  Our definitions of node-potentials and energy are, therefore, intimately tied with the work done by our algorithm to move up (or down) a node by one level. The next two claims give explicit expressions for this quantity. 

\begin{claim}
\label{cl:motivate:potential:up}
Consider an event where we increase the level of a node $x$ (say) from $k$ to $k+1$. The work done on the node $x$ during this event is equal to $\left(W_{x \rightarrow k} - W_{x \rightarrow k+1}\right) \cdot (1+\epsilon)^{k+1} \cdot \epsilon^{-1}$.
\end{claim}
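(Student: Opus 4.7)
The plan is to unfold the definition of $W_{x \to k}$ on both sides, observe which edges change weight when the node moves from level $k$ to level $k+1$, and then solve for $|N_x(0,k)|$, which by the charging convention in Section~\ref{sec:analyze:work} equals the work done on $x$ during this event.

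More concretely, I would first recall that when $x$ sits at level $k$, each edge $(x,y)$ with $y \in N_x(0,k)$ has weight $(1+\epsilon)^{-k}$ (since then $\ell(x,y) = \max(\ell(x),\ell(y)) = k$), while for each $y \in N_x$ with $\ell(y) > k$ the edge weight is $(1+\epsilon)^{-\ell(y)}$, independent of where we hypothetically place $x$ (as long as we do not push $x$ past $\ell(y)$). Pushing $x$ up to level $k+1$ leaves the latter weights unchanged (because $\ell(y) > k$ means $\ell(y) \geq k+1$), and multiplies the former weights by $(1+\epsilon)^{-1}$. Hence only the edges to $N_x(0,k)$ contribute to the difference, giving
\[
W_{x \to k} - W_{x \to k+1} \;=\; |N_x(0,k)| \cdot \left((1+\epsilon)^{-k} - (1+\epsilon)^{-(k+1)}\right) \;=\; |N_x(0,k)| \cdot \epsilon \cdot (1+\epsilon)^{-(k+1)}.
\]

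Solving for $|N_x(0,k)|$ yields $|N_x(0,k)| = (W_{x\to k} - W_{x\to k+1}) \cdot (1+\epsilon)^{k+1} \cdot \epsilon^{-1}$, which by the definition of ``work'' in Section~\ref{sec:analyze:work} (the work done on $x$ when it moves up from level $k$ to $k+1$ is exactly $|N_x(0,k)|$, since these are precisely the edges whose level $\max(\ell(x),\ell(y))$ changes) is the desired identity. There is no real obstacle here; the only thing to be careful about is the bookkeeping that edges to neighbors at levels $> k$ keep their weight (so they contribute equally to $W_{x\to k}$ and $W_{x \to k+1}$ and drop out of the difference), which I would state explicitly as a one-line observation before writing the displayed equation.
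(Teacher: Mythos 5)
Your proof is correct and takes essentially the same approach as the paper's: both identify that only the edges to $N_x(0,k)$ change weight, derive the identity $W_{x\to k} - W_{x\to k+1} = |N_x(0,k)| \cdot \epsilon \cdot (1+\epsilon)^{-(k+1)}$, and solve for $|N_x(0,k)|$, which is the work by definition.
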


\begin{proof}
During this event, the work done on the node $x$ is equal to the number of edges that change their levels, which in turn is equal to $N_x(0,k)$. Each edge $(x, y) \in E$ with $y \in N_x(0, k)$ changes its weight from $(1+\epsilon)^{-k}$ to $(1+\epsilon)^{-k-1}$ as the node $x$ moves up from level $k$ to level $k+1$. Hence, we have $W_{x \rightarrow k} - W_{x \rightarrow k+1} = |N_x(0,k)| \cdot \left( (1+\epsilon)^{-k} - (1+\epsilon)^{-k-1}\right) = |N_x(0,k)| \cdot \epsilon \cdot (1+\epsilon)^{-k-1}$. Rearranging the terms in this equality, we infer that the work done on the node $x$ during this event is equal to: $|N_x(0, k)| = \left( W_{v \rightarrow k} - W_{v \rightarrow k+1} \right) \cdot (1+\epsilon)^{k+1} \cdot \epsilon^{-1}$.
\end{proof}

\begin{claim}
\label{cl:motivate:potential:down}
Consider an event where we decrease the level of a node $x$ (say) from $k$ to $k-1$. The work done on the node $x$ during this event is equal to $\left(W_{x \rightarrow k-1} - W_{x \rightarrow k}\right) \cdot (1+\epsilon)^k \cdot \epsilon^{-1}$.
\end{claim}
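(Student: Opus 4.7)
The plan is to mirror the argument given for Claim~\ref{cl:motivate:potential:up}, swapping the roles of the ``up'' and ``down'' movements. The key starting point is the definition of work given earlier in Section~\ref{sec:analyze:work}: when node $x$ moves down from level $k$ to level $k-1$, the work done on $x$ is exactly the number of incident edges whose level changes, which is $|N_x(0, k-1)|$. (Edges to neighbors at levels $\geq k$ keep their level equal to the neighbor's level and are unaffected.)

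Next, I would relate this count to the weight difference $W_{x \rightarrow k-1} - W_{x \rightarrow k}$. For each neighbor $y \in N_x(0, k-1)$, the edge $(x, y)$ has weight $(1+\epsilon)^{-k}$ when $x$ sits at level $k$ and weight $(1+\epsilon)^{-(k-1)}$ when $x$ sits at level $k-1$, while edges to all other neighbors have identical weight in both configurations. Summing the change over neighbors yields
\[
W_{x \rightarrow k-1} - W_{x \rightarrow k} \;=\; |N_x(0, k-1)| \cdot \bigl((1+\epsilon)^{-(k-1)} - (1+\epsilon)^{-k}\bigr) \;=\; |N_x(0, k-1)| \cdot \epsilon \cdot (1+\epsilon)^{-k}.
\]

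Finally, rearranging this identity gives $|N_x(0, k-1)| = (W_{x \rightarrow k-1} - W_{x \rightarrow k}) \cdot \epsilon^{-1} \cdot (1+\epsilon)^{k}$, which is exactly the claimed expression for the work done on $x$. There is no real obstacle here: the only thing to be a little careful about is identifying the correct neighbor set (namely $N_x(0, k-1)$ rather than $N_x(0, k)$), since for a downward move the edges that actually change weight are those to neighbors strictly below the \emph{new} level of $x$. Once that is pinned down, the computation is a one-line algebraic manipulation symmetric to the one in Claim~\ref{cl:motivate:potential:up}.
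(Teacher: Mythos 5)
Your proof is correct and follows the same argument as the paper: count $|N_x(0,k-1)|$ edges that change level, write $W_{x\rightarrow k-1}-W_{x\rightarrow k}=|N_x(0,k-1)|\cdot\epsilon\cdot(1+\epsilon)^{-k}$, and rearrange. One small nit on your closing remark: the affected neighbors are those \emph{at or below} the new level $k-1$ (a neighbor at level exactly $k-1$ has its edge level drop from $k$ to $k-1$), not strictly below it, though your formulas use $N_x(0,k-1)$ correctly throughout.
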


\begin{proof}
During this event, the work done on the node $x$ is equal to the number of edges that change their levels, which in turn is equal to $N_x(0,k-1)$. Each edge $(x, y) \in E$ with $y \in N_x(0, k-1)$ changes its weight from $(1+\epsilon)^{-k}$ to $(1+\epsilon)^{-k+1}$ as the node $x$ moves down from level $k$ to level $k-1$. Hence, we have $W_{x \rightarrow k-1} - W_{x \rightarrow k} = |N_x(0,k-1)| \cdot \left( (1+\epsilon)^{-k} - (1+\epsilon)^{-k+1}\right) = |N_x(0,k-1)| \cdot \epsilon \cdot (1+\epsilon)^{-k}$. Rearranging the terms in this equality, we infer that the work done on the node $x$ during this event is equal to: $|N_x(0, k-1)| = \left( W_{v \rightarrow k} - W_{v \rightarrow k+1} \right) \cdot (1+\epsilon)^{k} \cdot \epsilon^{-1}$.
\end{proof}

Before proceeding any further, we  formally clarify the way we are going to use  two phrases: {\em potential (resp. energy) absorbed by a node}, and {\em potential (resp. energy) released by a node}. Fix a node $x \in V$ and a level $0 \leq k \leq L$, and consider an event which changes the potential $\Phi(x, k)$. Let $\Phi^{(0)}(x, k)$ and $\Phi^{(1)}(x, k)$ respectively denote the value of $\Phi(x, k)$ before and after the event, and let $\Delta = \Phi^{(1)}(x, k) - \Phi^{(0)}(x, k)$. We now consider four possible situations.
\begin{enumerate}
\item {\em The node $x$ does not change its state from up-dirty to down-dirty (or vice versa) due to the event.} Thus, either the node $x$ remains up-dirty both before and after the event, or the node remains down-dirty both before and after the event. Here, if $\Delta > 0$, then we say that the node $x$ {\em absorbs} $\Delta$ units of potential and {\em releases} zero unit of potential at level $k$ due to this event. Otherwise, if $\Delta < 0$, then we say that the node $x$ {\em absorbs} zero unit of potential and {\em releases} $-\Delta$ units of potential at level $k$ due to this event. Finally, if $\Delta = 0$, then we say that the node {\em absorbs} and {\em releases} zero unit of potential at level $k$ due to this event. 
\item {\em The node $x$ changes its state due to the event.} Thus, either the node $x$ is up-dirty before and down-dirty after the event, or the node $x$ is down-dirty before and up-dirty after the event. Here, we say that the node $x$ {\em releases} $\Phi^{(0)}(x, k)$ units of potential at level $k$ due to the event, and it {\em absorbs} $\Phi^{(1)}(x, k)$ units of potential at level $k$ due to the event.
\end{enumerate}
In both situations, we say that due to the event the energy absorbed (resp. released) by the node $x$ at level $k$ is $(1+\epsilon)^{k} \cdot \epsilon^{-1}$ times the potential absorbed (resp. released) by the node $x$ at level $k$. Note that as a matter of convention, we never allow the potential (resp. energy) released or absorbed by a node to be negative.  During any given event, the total potential (resp. energy) absorbed  by a node $x$ is the sum of the potentials (resp. energies) absorbed  by $x$ at all the levels $0 \leq k \leq L$, and the total potential (resp. energy) released by  $x$ is the sum of the potentials (resp. energies) released by $x$ at all levels $0 \leq k \leq L$. Thus, during any given event, according to our convention the {\em net} increase in the value of $\Phi(x)$ (resp. $\E(x)$)  is equal to the total potential (resp. energy) absorbed by  $x$ {\em minus} the total potential (resp. energy) released by  $x$. 

We define the node-potentials and energy in such a way which ensures the following property: At any point in time during the course of our algorithm, the work done on a node $x$ is at most $(1+\epsilon)$ times the energy released by $x$. This is stated in the lemma below. 

\begin{lemma}
\label{lm:work:jump}
Consider any iteration of the {\sc While} loop in Figure~\ref{fig:dirty} where a node $x \in V$ changes its level by one. During the concerned iteration of the {\sc While} loop, the work done  on the node $x$  is at most $(1+\epsilon)$ times the energy released by  $x$.
\end{lemma}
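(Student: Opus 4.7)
My plan is to split on the two cases of the \textsc{While} loop in Figure~\ref{fig:dirty}: the node $x$ is either active and up-dirty (moving up from level $k$ to $k+1$) or active and down-dirty (moving down from level $k$ to $k-1$). In each case I will identify the potential $\Phi(x,k)$ that disappears from level $k$ as a consequence of the move, compute the corresponding released energy via the conversion rate $(1+\epsilon)^k \cdot \epsilon^{-1}$, and then compare it with the exact expression for the work done on $x$ furnished by Claim~\ref{cl:motivate:potential:up} (up case) or Claim~\ref{cl:motivate:potential:down} (down case).

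For the up case, the natural definition of the potential (mirroring (\ref{main:eq:potential:up:new}) in Part~\ref{part:main}) gives $\Phi(x,k) \geq W_{x \rightarrow k} - W_{x \rightarrow k+1}$ at the node's current level $k$ just before the move. Since $x$ is up-dirty and active, it remains up-dirty after the move, and at the new level $k+1$ the quantity $\Phi(x,k)$ is reset to zero; this releases at least $(W_{x\rightarrow k}-W_{x\rightarrow k+1}) \cdot (1+\epsilon)^k \cdot \epsilon^{-1}$ units of energy at level $k$. Plugging the work expression from Claim~\ref{cl:motivate:potential:up}, the ratio of work to released energy is exactly $(1+\epsilon)$, and the up case is done.

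For the down case, the definition of the potential (mirroring (\ref{main:eq:potential:down:new})) gives $\Phi(x,k) \geq 1 - W_{x\rightarrow k}$ at level $k$ just before the move, which by activeness ($W_{x \rightarrow k} < 1-\epsilon$) exceeds $\epsilon$. After the move to level $k-1$, this slot is reset to zero, releasing at least $(1-W_{x\rightarrow k}) \cdot (1+\epsilon)^k \cdot \epsilon^{-1}$ units of energy at level $k$. By Claim~\ref{cl:motivate:potential:down}, the work done is $(W_{x \rightarrow k-1} - W_{x \rightarrow k}) \cdot (1+\epsilon)^k \cdot \epsilon^{-1}$, so the lemma reduces to proving
\begin{equation*}
W_{x \rightarrow k-1} - W_{x \rightarrow k} \;\leq\; (1+\epsilon)\bigl(1 - W_{x \rightarrow k}\bigr).
\end{equation*}
This I will establish by noting two bounds: first, Corollary~\ref{cor:active} gives $W_{x\rightarrow k-1} < 1$, and second, the edges contributing to $W_{x\rightarrow k-1} - W_{x\rightarrow k}$ are precisely those in $N_x(0,k-1)$, each gaining $\epsilon(1+\epsilon)^{-k}$ in weight, so $W_{x \rightarrow k-1} - W_{x \rightarrow k} \leq \epsilon \cdot W_{x\rightarrow k} < \epsilon(1-\epsilon)$. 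Combining either of these with $W_{x\rightarrow k} < 1$ yields the inequality (in fact with a factor better than $(1+\epsilon)$), completing the down case.

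The main subtlety I anticipate is in the down case, where I must account for the fact that although the potential $\Phi(x,k)$ at level $k$ is released, a fresh potential $1 - W_{x\rightarrow k-1}$ is simultaneously \emph{absorbed} at the new level $k-1$. The lemma's convention, however, separately tracks released and absorbed potential at each level, so only the released portion at level $k$ enters the bound; the new absorption at level $k-1$ will be dealt with elsewhere in the amortized analysis (as outlined in Part~\ref{part:main}, Section~\ref{sec:analyze}). I should be careful to invoke the ``potential released by a node'' convention exactly as stipulated, rather than the net change in $\Phi(x,\cdot)$, to avoid undercounting the released energy.
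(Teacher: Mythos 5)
Your proposal is correct and follows essentially the same route as the paper: split on the up-jump versus down-jump, identify the potential $\Phi(x,k)$ that drops to zero as $x$ leaves level $k$, convert it into released energy via the rate $(1+\epsilon)^k\epsilon^{-1}$, and compare against the exact work expressions in Claims~\ref{cl:motivate:potential:up} and~\ref{cl:motivate:potential:down}. In the down case the paper argues exactly as your first bound does — invoking Corollary~\ref{cor:active} to get $W_{x\to k-1}<1$, hence $\Phi(x,k)=1-W_{x\to k}\ge W_{x\to k-1}-W_{x\to k}$, giving ``work $\le$ energy released'' outright — so your alternative second bound ($W_{x\to k-1}-W_{x\to k}\le\epsilon W_{x\to k}$) is a valid but redundant elaboration, and your attention to the released/absorbed accounting convention (so that the potential freshly stored at level $k-1$ is not netted against the potential released at level $k$) correctly anticipates the subtlety the paper's convention resolves.
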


We first define the potentials and energy stored by an up-dirty node.  For any up-dirty node $x \in V$, let $\ell(x, \uparrow)$ to be the maximum level $k \in \{\ell(x), \ldots, L\}$ where $W_{x \rightarrow k} \geq 1$. The potential $\Phi(x, k)$ and the energy $\E(x, k)$ are then defined as follows.
\begin{eqnarray}
\label{eq:potential:up:new}
\text{If a node } x  \text{ is up-dirty, then } \begin{cases}   \Phi(x, k) = \begin{cases} 0 & \text{ for all } \ell(x, \uparrow) < k \leq L; \\
W_{x \rightarrow k} - 1 & \text{ for } k = \ell(x, \uparrow); \\
W_{x \rightarrow k} - W_{v \rightarrow k+1} & \text{ for all }  \ell(v) \leq  k < \ell(x, \uparrow); \\
0 & \text{ for all } 0 \leq k < \ell(x).
\end{cases} \\ \\
\label{eq:energy:up:new}
 \E(x, k) = \Phi(x, k) \cdot (1+\epsilon)^k \cdot \epsilon^{-1} \text{ at  every level } 0 \leq k \leq L. 
 \end{cases}
\end{eqnarray}
To gain some intuition about the definition of this node-potential, recall that $W_{x \rightarrow k}$ is a monotonically (weakly) decreasing function of $k$. Since the node $x$ is up-dirty, we have: 
\begin{equation}
\label{eq:explain:up}
W_{x \rightarrow \ell(x)} \geq \cdots \geq W_{x \rightarrow \ell(x, \uparrow)}  \geq 1 >  W_{x \rightarrow \ell(x, \uparrow)+1}
\end{equation}
An immediate corollary of~(\ref{eq:explain:up}) is that  $\Phi(x, k) \geq 0$ for any up-dirty node $x$ and any level $k$. Now, suppose that currently $\ell(x) = i$ and the node $x$ is {\em active} and up-dirty, so that $i \leq \ell(x, \uparrow) - 1$,  and consider the event where the {\sc While} loop in Figure~\ref{fig:dirty} is moving the node $x$ up from level $i$ to level $i+1$. Note that $\Phi(x, i) = W_{x \rightarrow i} - W_{x \rightarrow i+1}$ and $\E(x, i) = \left(W_{x \rightarrow i} - W_{x \rightarrow i+1}\right) \cdot (1+\epsilon)^{i} \cdot \epsilon^{-1}$ before the event, and $\Phi(x, i) = \E(x, i) = 0$ after the event. Hence, during this event the node $x$ {\em releases} $\left(W_{x \rightarrow i} - W_{x \rightarrow i+1}\right)$ units of potential and $\left(W_{x \rightarrow i} - W_{x \rightarrow i+1}\right) \cdot (1+\epsilon)^{i} \cdot \epsilon^{-1}$ units of energy at level $i$. From Claim~\ref{cl:motivate:potential:up}, we conclude that during this event the work done on the node $x$ is at most $(1+\epsilon)$ times the energy released by the node $x$. This proves Lemma~\ref{lm:work:jump} when the node $x$ is up-dirty.
\begin{corollary}
\label{cor:total:potential:up:new}
Every up-dirty node $x$ has $\Phi(x) = \sum_{k=0}^L \Phi(x, k) = W_{x \rightarrow \ell(x)} - 1 = W_x - 1$.
\end{corollary}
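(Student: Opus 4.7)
My plan is to prove the corollary by a direct telescoping-sum calculation based on the piecewise definition of $\Phi(x,k)$ in~(\ref{eq:potential:up:new}). Since $x$ is up-dirty, the value $\ell(x,\uparrow) \in \{\ell(x),\ldots,L\}$ is well-defined, so the four cases in~(\ref{eq:potential:up:new}) partition $\{0,1,\ldots,L\}$ into four consecutive intervals.

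First I would observe that the only $k$ for which $\Phi(x,k)$ is (potentially) nonzero are the levels $\ell(x) \leq k \leq \ell(x,\uparrow)$, since by definition $\Phi(x,k)=0$ for $k<\ell(x)$ and for $k>\ell(x,\uparrow)$. Hence
\[
\sum_{k=0}^{L} \Phi(x,k) \;=\; \Phi(x,\ell(x,\uparrow)) \;+\; \sum_{k=\ell(x)}^{\ell(x,\uparrow)-1} \Phi(x,k).
\]
Next I would substitute the two relevant cases of~(\ref{eq:potential:up:new}): the first term equals $W_{x\to\ell(x,\uparrow)}-1$, and each summand in the second term equals $W_{x\to k}-W_{x\to k+1}$. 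The second term is therefore a telescoping sum that collapses to $W_{x\to\ell(x)} - W_{x\to \ell(x,\uparrow)}$. Adding the two contributions cancels the $W_{x\to\ell(x,\uparrow)}$ pieces and yields $W_{x\to\ell(x)}-1$. Finally, by the convention $W_x = W_{x\to\ell(x)}$ stated at the start of Section~\ref{sec:algorithm}, this equals $W_x-1$, completing the chain of equalities.

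I would also briefly treat the boundary case $\ell(x)=\ell(x,\uparrow)$, where the telescoping sum is empty and only the single term $\Phi(x,\ell(x,\uparrow)) = W_{x\to\ell(x)}-1$ contributes, giving the same answer. The only conceptual point worth mentioning is that $\ell(x,\uparrow) \geq \ell(x)$ is ensured by~(\ref{eq:explain:up}), so the range of summation is non-empty and the four cases of~(\ref{eq:potential:up:new}) are consistent. There is no real obstacle here; the whole argument is a one-line telescoping computation, and the purpose of the corollary is simply to record that the total potential of an up-dirty node exactly captures its ``excess weight'' $W_x-1$, a fact that will later be used to match node potentials against the weight changes caused by edge insertions and level moves.
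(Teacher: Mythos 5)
Your proposal is correct and is exactly the argument the paper intends: the paper's proof simply says "sum the values of $\Phi(x,k)$ over all levels," and your telescoping computation (including the boundary case $\ell(x)=\ell(x,\uparrow)$) is just that sum carried out explicitly. No differences in approach.
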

\begin{proof}
Follows if we sum the values of $\Phi(x, k)$, as defined in~(\ref{eq:potential:up:new}), over all levels $0 \leq k \leq L$.
\end{proof}

We now define the potentials and energy for a down-dirty node.
\begin{eqnarray}
\label{eq:potential:down:new}
\text{If a node } x  \text{ is down-dirty, then }  \begin{cases} \Phi(x, k) = \begin{cases}  0 & \text{ for all } \ell(x) < k \leq L; \\
 1 - W_{x \rightarrow k}  & \text{ for }  k = \ell(x); \\
0 & \text{ for all } 0 \leq k  < \ell(x).
\end{cases} \\ \\
\label{eq:energy:down:new}
 \E(x, k) = \Phi(x, k) \cdot (1+\epsilon)^k \cdot \epsilon^{-1}  \text{ at every level } 0 \leq k \leq L.
 \end{cases}
\end{eqnarray}
An immediate corollary of~(\ref{eq:potential:down:new}) is that  $\Phi(x, k) \geq 0$ for any down-dirty node $x$ and any level $k$. Now, suppose that currently the node $x$ is {\em active} and down-dirty at level $\ell(x) = i$, so that by Corollary~\ref{cor:active} we have $1 > W_{x \rightarrow k-1} \geq W_{x \rightarrow k}$. Consider an event where the {\sc While} loop in Figure~\ref{fig:dirty} is moving the node $x$ down from level $i$ to level $i-1$. Note that $\Phi(x, i) = 1 - W_{x \rightarrow i} \geq \left(W_{x \rightarrow i-1} - W_{x \rightarrow i}\right)$ and $\E(x, i) = \Phi(x, i) \cdot (1+\epsilon)^{i} \cdot \epsilon^{-1}$ before the event, and $\Phi(x, i) = \E(x, i) = 0$ after the event. Hence, during this event the node $x$ {\em releases} at least $\left(W_{x \rightarrow i-1} - W_{x \rightarrow i}\right)$ units of potential and at least $\left(W_{x \rightarrow i-1} - W_{x \rightarrow i}\right) \cdot (1+\epsilon)^{i} \cdot \epsilon^{-1}$ units of energy at level $i$. From Claim~\ref{cl:motivate:potential:down}, we therefore conclude that during this event the work done on the node $x$ is at most  the energy released by the node $x$. This proves Lemma~\ref{lm:work:jump} when the node $x$ is down-dirty. The corollary below bounds the total potential stored by a down-dirty node.
\begin{corollary}
\label{cor:total:potential:down:new}
Every down-dirty node $x \in V$ has $\Phi(x) = \sum_{k=0}^L \Phi(x, k) = 1- W_{x \rightarrow \ell(x)}  = 1- W_x$.
\end{corollary}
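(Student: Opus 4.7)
The plan is to observe that this corollary is essentially a direct computation from the definition of $\Phi(x,k)$ for down-dirty nodes in~(\ref{eq:potential:down:new}), so I would just unwind that definition summand by summand. Specifically, I would split the sum $\sum_{k=0}^{L} \Phi(x,k)$ into three ranges according to the three cases in~(\ref{eq:potential:down:new}): the levels $k > \ell(x)$, the single level $k = \ell(x)$, and the levels $k < \ell(x)$.

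By the first and third cases of~(\ref{eq:potential:down:new}), all terms with $k \neq \ell(x)$ contribute $0$. Hence
\[
\Phi(x) \;=\; \sum_{k=0}^{L} \Phi(x,k) \;=\; \Phi(x, \ell(x)) \;=\; 1 - W_{x \rightarrow \ell(x)},
\]
where the last equality is the second case of~(\ref{eq:potential:down:new}). Finally, I would invoke the identity $W_x = W_{x \rightarrow \ell(x)}$, which is established at the very start of Section~\ref{sec:algorithm} (it follows immediately from the definition of $W_{v \rightarrow i}$, since placing $v$ at its current level changes nothing). Substituting gives $\Phi(x) = 1 - W_x$, completing the proof.

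There is no real obstacle here; the statement is a one-line consequence of how we defined the potential of a down-dirty node (we placed all the potential at the single level $\ell(x)$ with value $1 - W_x$, precisely so that the total equals $1 - W_x$). The corollary is stated separately mainly so that it can be cited later in the amortized analysis, analogously to Corollary~\ref{cor:total:potential:up:new} for up-dirty nodes.
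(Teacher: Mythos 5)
Your proof is correct and is exactly the paper's argument: the paper's one-line proof ("Follows if we sum the values of $\Phi(x,k)$, as defined in~(\ref{eq:potential:down:new}), over all levels $0 \leq k \leq L$") is just a terser version of your case split, and your final step $W_{x\rightarrow\ell(x)} = W_x$ is the same identity the paper records at the start of Section~\ref{sec:algorithm}. Nothing to add.
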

\begin{proof}
Follows if we sum the values of $\Phi(x, k)$, as defined in~(\ref{eq:potential:down:new}), over all levels $0 \leq k \leq L$.
\end{proof}

Lemma~\ref{lm:work:jump} implies that in order to bound the total work done by our algorithm, all we need to do is to bound the total energy released by  the nodes. Initially, when the graph $G = (V, E)$ has an empty edge-set, every node is at level zero and has zero energy. This implies that during the course of our algorithm the total energy released by the nodes is at most the total energy {\em absorbed} by the nodes, and we will try to upper bound the latter quantity.   Note that a node $x \in V$ absorbs or releases nonzero potential (and energy) only if one of the following three events takes place. (1) An edge $(x, y)$ incident on $x$ gets inserted into or deleted from the graph $G = (V, E)$. (2) During an iteration of the {\sc While} loop in Figure~\ref{fig:dirty}, the node $x$ is moved to a different level. (3) During an iteration of the {\sc While} loop in Figure~\ref{fig:dirty}, a neighbor $y$ of $x$ is moved to a different level. Such an event might change the weight of the edge $(x, y)$, which in turn might lead to a change in the energy stored by $x$. We will upper bound the energy absorbed by the nodes under all the three events (1), (2) and (3). These bounds will follow from Lemmas~\ref{lm:insertion:deletion},~\ref{lm:transfer:up:jump} and~\ref{lm:transfer:down:jump}.

\begin{lemma}
\label{lm:insertion:deletion}
Due to the insertion or deletion of an edge $(u, v)$ in the graph, each of the endpoints $x \in \{u, v\}$ absorbs at most $3 \epsilon^{-1}$ units of energy.
\end{lemma}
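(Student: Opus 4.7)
I would prove the bound by handling deletion and insertion separately, and by symmetry it suffices to analyze one endpoint in each case. Since an edge update only changes the weights $W_{u \rightarrow \cdot}$ and $W_{v \rightarrow \cdot}$, only the two endpoints can absorb energy.

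\emph{Deletion.} When edge $(u, v)$ with $\ell(u) = i \geq \ell(v) = j$ is removed, every $W_{u \rightarrow k}$ drops weakly. Inspecting (\ref{eq:potential:up:new}), each up-dirty potential $\Phi(u, k)$ is a non-negative combination of differences of $W_{u \rightarrow \cdot}$ terms and also shrinks, so no energy is absorbed at up-dirty levels. The only growing potential is the single down-dirty one $\Phi(u, i) = 1 - W_{u \rightarrow i}$ from (\ref{eq:potential:down:new}): either $u$ stays down-dirty and $\Phi(u, i)$ grows by exactly the drop $(1+\epsilon)^{-i}$ in $W_{u \rightarrow i}$, or $u$ flips from up-dirty to down-dirty and $\Phi(u, i)^{\text{after}} = 1 - W_u^{\text{after}} \leq (1+\epsilon)^{-i}$ using $W_u^{\text{before}} \geq 1$. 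Multiplying by the conversion rate $(1+\epsilon)^i/\epsilon$ gives at most $\epsilon^{-1}$ absorbed energy per endpoint, well within the $3\epsilon^{-1}$ budget.

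\emph{Insertion.} Fix endpoint $u$ with $i = \ell(u) \geq j = \ell(v)$. If $u$ stays down-dirty, $\Phi(u, i) = 1 - W_u$ only drops and no energy is absorbed. Otherwise $u$ is up-dirty after the event; set $m' := \ell(u, \uparrow)^{\text{after}}$. I split the absorbed energy into a level-$i$ piece and a tail piece at levels $(i, m']$. For the level-$i$ piece, I expand $\Phi(u, i)^{\text{after}}$ from (\ref{eq:potential:up:new}), compare with $\Phi(u, i)^{\text{before}}$ (of either the up-dirty or down-dirty form, depending on pre-insertion state), and use $W_{u \rightarrow i}^{\text{after}} - W_{u \rightarrow i}^{\text{before}} = (1+\epsilon)^{-i}$ together with $W_{u \rightarrow i+1}^{\text{after}} \geq 1$ (valid whenever $m' > i$) to conclude $\Delta \Phi(u, i) \leq (1+\epsilon)^{-i}$, contributing at most $\epsilon^{-1}$ energy. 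For the tail, $\Phi(u, k)^{\text{before}} = 0$ at every $k \in (i, m']$, and the cascading definition (\ref{eq:potential:up:new}) telescopes to $\sum_{k=i+1}^{m'} \Phi(u, k)^{\text{after}} = W_{u \rightarrow i+1}^{\text{after}} - 1$. Bounding the total tail energy by this sum times the largest conversion rate $(1+\epsilon)^{m'}/\epsilon$ reduces the problem to showing $(W_{u \rightarrow i+1}^{\text{after}} - 1)(1+\epsilon)^{m'-i} \leq 2$.

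The central technical step is the bound $(1+\epsilon)^{m'-i} \leq 2(1+\epsilon)$, which I adapt from the ``representative scenario'' argument in Section~\ref{sec:analyze}. Passivity of $u$ before the event supplies $W_{u \rightarrow i+1}^{\text{before}} \leq 1$ (by $W_{u \rightarrow \ell(u)+1}^{\text{before}} < 1$ when $u$ was up-dirty, and by $W_{u \rightarrow i+1}^{\text{before}} \leq W_u^{\text{before}} < 1$ when $u$ was down-dirty), which immediately yields $W_{u \rightarrow i+1}^{\text{after}} - 1 \leq (1+\epsilon)^{-(i+1)}$. For the exponential factor, I use a neighbor $x$ of $u$ at level $\leq i$: hypothetically raising $u$ from $i+1$ to $m'$ forces the weight of $(u, x)$ to drop by $(1+\epsilon)^{-(i+1)} - (1+\epsilon)^{-m'}$, so $W_{u \rightarrow m'}^{\text{after}} \leq 1 - (1+\epsilon)^{-(i+1)} + 2(1+\epsilon)^{-m'}$; combined with $W_{u \rightarrow m'}^{\text{after}} \geq 1$ this rearranges into $(1+\epsilon)^{m'-i} \leq 2(1+\epsilon)$. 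The tail energy is then at most $(1+\epsilon)^{-(i+1)} \cdot (1+\epsilon)^{m'}/\epsilon \leq 2\epsilon^{-1}$, summing with the level-$i$ piece to $3\epsilon^{-1}$.

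The main obstacle will be the subcase where $u$ has no pre-existing neighbor at level $\leq i$, in which the ``low-level'' neighbor needed to drive the representative-scenario argument must come from the newly inserted edge $(u, v)$ itself (legitimate since $\ell(v) = j \leq i$). The bookkeeping must carefully avoid double-counting the $(1+\epsilon)^{-(i+1)}$ that $(u, v)$ already contributes to $W_{u \rightarrow i+1}^{\text{after}}$ when deriving the key inequality. Corner cases such as $m' = i$, $i = 0$, and the precise flip-from-down-dirty transition fold into the level-$i$ accounting, and the constant-$3$ in the bound is designed to absorb the slack they introduce.
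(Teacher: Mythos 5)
Your deletion analysis and your insertion analysis for an endpoint that was up-dirty before the update coincide with the paper's proof: the paper likewise shows that all up-dirty potentials can only shrink under a deletion, charges the single down-dirty potential against the drop in $W_x$, and for insertions bounds the absorbed potential by the increase $(1+\epsilon)^{-\max(\ell(u),\ell(v))}$ in $W_x$ and the absorbed energy via the top conversion rate $(1+\epsilon)^{k_x}\cdot\epsilon^{-1}$, reducing everything to the same key inequality $(1+\epsilon)^{k_x-\ell(x)}\leq 2(1+\epsilon)\leq 3$, proved exactly as you propose. Your split into a level-$i$ piece plus a telescoped tail is a harmless repackaging of that accounting.

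The genuine gap is precisely the subcase you flag, and your proposed repair does not close it. If $u$ was down-dirty before the insertion and has no pre-existing neighbor at level $\leq i=\ell(u)$, passivity gives you nothing (a down-dirty node at level $0$, or one with $W_u\geq 1-\epsilon$, is passive irrespective of whether it has low neighbors), so the only available low neighbor is $v$ itself. But then the hypothetical weight drop $(1+\epsilon)^{-(i+1)}-(1+\epsilon)^{-m'}$ incurred by raising $u$ to level $m'$ is exactly offset by the weight the new edge contributes, and after substitution you are left with the vacuous $1\leq 1+(1+\epsilon)^{-m'}$. This is not a bookkeeping nuisance that the constant $3$ can absorb: the inequality $(1+\epsilon)^{m'-i}=O(1)$ is simply false in this subcase. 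Take $u$ at level $0$ with $W_u=1-\epsilon/2$ coming entirely from neighbors at level $L$, and insert $(u,v)$ with $\ell(v)=0$; then $W_{u\rightarrow k}^{\text{after}}=1-\epsilon/2+(1+\epsilon)^{-k}$, so $m'=\lfloor\log_{1+\epsilon}(2/\epsilon)\rfloor$, each level $0\leq k<m'$ acquires $\epsilon(1+\epsilon)^{-k-1}$ units of potential, i.e.\ $(1+\epsilon)^{-1}$ units of energy, and since $u$ flips from down-dirty to up-dirty the entire $\Theta(\epsilon^{-1}\log(1/\epsilon))$ units of energy count as absorbed --- exceeding $3\epsilon^{-1}$ for small $\epsilon$. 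You should be aware that the paper's own write-up has the same hole: it proves in detail only the up-dirty-before scenario (where passivity does force a pre-existing low neighbor) and dismisses the down-dirty-to-up-dirty scenario as ``very similar \ldots therefore omitted,'' which is exactly where your obstacle lives. So your proposal reproduces the paper's argument where the paper is complete, and does not (and, as written, cannot) fill in the case the paper skips.
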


The  proof of Lemma~\ref{lm:insertion:deletion} appears in Appendix~\ref{sec:lm:insertion:deletion}. To gain some intuition behind the proof, consider an alternate scenario where we defined the potential and energy stored by a node as follows.
\begin{eqnarray}
\label{eq:alternate:potential:energy}
\text{For every node } x \in V, \text{ we have } \begin{cases}
\Phi^*(x, k) = \begin{cases}  |W_x - 1| &  \text{ for } k = \ell(x); \\
 0 & \text{ otherwise.} 
 \end{cases} \\ \\
\E^*(x,k)  = \Phi^*(x,k ) \cdot (1+\epsilon)^{\ell(x)} \cdot \epsilon^{-1} \text{ for all } 0 \leq k \leq L.
\end{cases}
\end{eqnarray}
To appreciate the link between these alternate notions  with the actual ones that we use, first consider an up-dirty node $x \in V$. From~(\ref{eq:potential:up:new}) and Corollary~\ref{cor:total:potential:up:new}, it follows that in our actual definition the total potential stored by the node $x$ is equal to $W_x - 1$. We {\em split up} this potential in a carefully chosen way and store a specific part resulting from this split at each level $\ell(x) \leq k \leq \ell(x, \uparrow)$. Further, the {\em conversion rate} between energy and potential is fixed at $(1+\epsilon)^k \cdot \epsilon^{-1}$ at each level $k$. In words, every $\delta$ units of potential at a level $k$ results in $\delta \cdot (1+\epsilon)^{k} \cdot \epsilon^{-1}$ units of energy. In~(\ref{eq:alternate:potential:energy}), all we are doing is to get rid of this {\em split}. Now, all the potential $W_x -1$ is stored at level $\ell(x)$, and the conversion rate between energy and potential remains the same as before. For a down-dirty node, in contrast, note that the definition~(\ref{eq:alternate:potential:energy}) remains the same as in~(\ref{eq:potential:down:new}).

It is easy to see why Lemma~\ref{lm:insertion:deletion} holds in the alternate scenario. Consider the insertion or deletion of an edge $(u, v)$ and an endpoint $x \in \{u, v\}$ of this edge. We have $w(u, v) = (1+\epsilon)^{-\max(\ell(u), \ell(v))} \leq (1+\epsilon)^{-\ell(x)}$.  Hence, due to the insertion or deletion of the edge $(u, v)$, the potential $\Phi^*(x, \ell(x))$ can increase by at most $(1+\epsilon)^{-\ell(x)}$ and the node $x$ can absorb at most $(1+\epsilon)^{-\ell(x)} \cdot (1+\epsilon)^{\ell(x)} \cdot \epsilon^{-1} = \epsilon^{-1}$ units of energy. Now, if we want to show that Lemma~\ref{lm:insertion:deletion} holds even with the {\em actual} notions of potential and energy actually used in our analysis, then the proof becomes a bit more complicated. But the main idea behind the proof is to apply the same argument, along with the additional observation that the conversion rate between energy and potential increases geometrically in powers of $(1+\epsilon)$ as we move up to higher and higher levels.  The next two lemmas bound the potentials absorbed by all the nodes as some node $x$ moves up or down by one level. Their proofs appear in Appendix~\ref{sec:lm:transfer:up:jump} and Appendix~\ref{sec:lm:transfer:down:jump}.

\begin{lemma}
\label{lm:transfer:up:jump}
Consider an {\em up-jump}, where an active up-dirty node $x$ moves up from level $k$ (say) to level $k+1$ (as per steps 2, 3 in Figure~\ref{fig:dirty}). For every node $v \in V$ and every level $0 \leq t \leq L$, let $\delta(v, t, \rightarrow)$ and $\delta(v, t, \leftarrow)$ respectively denote the potential released and the potential absorbed by the node $v$  at level $t$ during this up-jump of $x$. Then we have:
\begin{enumerate}
\item For every node $v \in V$, we have $\delta(v, t, \leftarrow) > 0$ only if $v$ is down-dirty  after the up-jump of $x$, $v \in N_x$ and $k \geq t$.
\item $\sum_{v} \sum_t \delta(v, t, \leftarrow) \leq \delta(x, k, \rightarrow)$. In words, during the up-jump of $x$, the total potential absorbed by all the nodes  is at most the  potential released by $x$ at level $k$.
\end{enumerate}
\end{lemma}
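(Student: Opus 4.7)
The plan is to track the potential changes induced by the up-jump level-by-level and node-by-node, exploiting that only one weight per affected node actually moves, and by a fixed amount $\eta := \epsilon\cdot(1+\epsilon)^{-(k+1)}$.

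First, I would observe that since $x$ moves from level $k$ to $k+1$, the only edges whose weights change are $(x,y)$ with $y\in N_x(0,k)$, each dropping by $\eta$. Hence only $x$ and its neighbors in $N_x(0,k)$ can see a nonzero change in potential, and I would record the identity $|N_x(0,k)|\cdot\eta = W_{x\rightarrow k}-W_{x\rightarrow k+1}$ (as in the proof of Claim~\ref{cl:motivate:potential:up}) for later use. For $x$ itself, activeness gives $W_{x\rightarrow k+1}\geq 1$, so $x$ remains up-dirty; moreover, the entire profile $\{W_{x\rightarrow j}\}_j$ (and in particular $\ell(x,\uparrow)$) is independent of $\ell(x)$ and so unchanged. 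Plugging this into~(\ref{eq:potential:up:new}) shows $\Phi(x,k)$ drops from $W_{x\rightarrow k}-W_{x\rightarrow k+1}$ to $0$ while no other $\Phi(x,j)$ moves. So $x$ releases $\delta(x,k,\rightarrow)=W_{x\rightarrow k}-W_{x\rightarrow k+1}$ at level $k$ and absorbs nothing.

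Next I would analyze each $y\in N_x(0,k)$, using that $W_{y\rightarrow j}$ drops by $\eta$ for $j\leq k$ and is unchanged for $j\geq k+1$. I split into three cases: \textbf{(i)} $y$ is up-dirty both before and after; \textbf{(ii)} $y$ is up-dirty before and down-dirty after; \textbf{(iii)} $y$ is down-dirty both before and after. In case (i), depending on whether the new value of $\ell(y,\uparrow)$ is $\geq k+1$, equal to $k$, or strictly less than $k$, a short telescoping of~(\ref{eq:potential:up:new}) shows every level-wise change is non-positive and that their magnitudes sum to exactly $\eta$; in particular $y$ absorbs nothing. In case (ii), $y$ must have had weight in $[1,1+\eta)$ before the jump (otherwise $y$ would remain up-dirty), so~(\ref{eq:potential:down:new}) places a potential of size $1-W_y\in(0,\eta]$ at level $\ell(y)$ and zero elsewhere, so $y$ absorbs at most $\eta$ at level $\ell(y)\leq k$. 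In case (iii), $\Phi(y,\ell(y))$ grows by exactly $\eta$, again at level $\ell(y)\leq k$. In all three cases any absorption sits at $t=\ell(y)\leq k$ and has magnitude at most $\eta$, which establishes Part~1.

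Part~2 then follows by summing: at most $|N_x(0,k)|$ nodes absorb, each at most $\eta$, so the total absorption is at most $|N_x(0,k)|\cdot\eta = W_{x\rightarrow k}-W_{x\rightarrow k+1} = \delta(x,k,\rightarrow)$. The main obstacle I anticipate is the sub-case of (i) in which $\ell(y,\uparrow)$ \emph{collapses} past level $k$ during the jump, so that the nonzero potentials of $y$ are redistributed across several levels simultaneously. Here I would need a careful telescoping to verify that no intermediate level sees a \emph{net} increase in $\Phi(y,\cdot)$ --- the individual changes must be non-positive and sum to $-\eta$ --- so that $y$ genuinely absorbs nothing. Once that is pinned down, the remaining case analysis is uniform.
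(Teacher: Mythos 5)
Your proposal takes essentially the same approach as the paper: restrict attention to $x$ and its neighbors in $N_x(0,k)$, observe that $W_{y\rightarrow t}$ drops by the fixed amount $\eta$ for $t\leq k$ and is unchanged for $t\geq k+1$, case on the up/down-dirty state of $y$ before and after the jump so that any absorption sits at level $\ell(y)\leq k$, is at most $\eta$, and occurs only when $y$ ends down-dirty, and then sum to get $|N_x(0,k)|\cdot\eta = W_{x\rightarrow k}-W_{x\rightarrow k+1}=\delta(x,k,\rightarrow)$. The concern you flag in case (i) is the right thing to check, but the particular scenario you worry about --- $\ell(y,\uparrow)$ ``collapsing past level $k$'' --- cannot occur: since $W_{y\rightarrow j}$ is unchanged for every $j\geq k+1$, if $\ell(y,\uparrow)\geq k+1$ before the jump it is equal after the jump, so any collapse is confined to $\{\ell(y),\ldots,k\}$. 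In that confined regime, writing $m_0$ and $m_1$ for the old and new values of $\ell(y,\uparrow)$ (so $m_1\leq m_0\leq k$), the level-wise changes in $\Phi(y,\cdot)$ are: zero for $t<m_1$ (both weights drop by $\eta$), a drop to $0$ for $m_1<t\leq m_0$, and at $t=m_1$ itself a change of $W^{(0)}_{y\rightarrow m_1+1}-1-\eta$, which is nonpositive precisely because $W^{(1)}_{y\rightarrow m_1+1}=W^{(0)}_{y\rightarrow m_1+1}-\eta<1$ by the definition of $m_1$. One smaller caveat: your inference that ``only $x$ and its neighbors in $N_x(0,k)$ can see a nonzero change in potential'' does not quite follow from which \emph{edge weights} change, since for a neighbor $v$ with $\ell(v)\geq k+1$ the quantity $W_{v\rightarrow j}$ does change for $j\leq k$ even though $w(x,v)$ does not; the conclusion is nonetheless correct because all nonzero potentials of such a $v$ live at levels $\geq\ell(v)>k$, where $W_{v\rightarrow\cdot}$ is untouched --- this is the paper's Case 1(a), and is worth making explicit.
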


\begin{lemma}
\label{lm:transfer:down:jump}
Consider a {\em down-jump}, where an active down-dirty node $x$ moves down from level $k$ (say) to level $k-1$ (as per steps 4, 5 in Figure~\ref{fig:dirty}). For every node $v \in V$ and every level $0 \leq t \leq L$, let $\delta(v, t, \rightarrow)$ and $\delta(v, t, \leftarrow)$ respectively denote the potential released and the potential absorbed by the node $v$  at level $t$ during this down-jump of $x$. Then we have:
\begin{enumerate}
\item  For every node $v \in V$, we have $\delta(v, t, \leftarrow) > 0$ only if $v \in N_x \cup \{x\}$ and $k > t$.
\item $\sum_{v} \sum_t \delta(v, t, \leftarrow) \leq \delta(x, k, \rightarrow)$.  In words, during the down-jump of $x$, the total potential absorbed by all the nodes  is at most the  potential released by $x$ at level $k$.
\end{enumerate}
\end{lemma}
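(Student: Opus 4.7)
My plan is to trace, level by level, exactly which nodes can absorb potential during the down-jump of $x$ from level $k$ to level $k-1$, and then sum the absorbed contributions. The starting observation is purely structural: this down-jump changes $w(x,y)$ by $\epsilon(1+\epsilon)^{-k}$ for every $y\in N_x(0,k-1)$ and leaves every other edge weight untouched. Hence $W_v$, and each $W_{v\rightarrow t}$, can change only when $v\in\{x\}\cup N_x(0,k-1)$, which already gives the ``$v\in N_x\cup\{x\}$'' clause of~(1). Moreover, for each neighbor $y$ with $\ell(y)\le k-1$, $W_{y\rightarrow t}$ increases by exactly $\epsilon(1+\epsilon)^{-k}$ for $t\le k-1$ and is unchanged for $t\ge k$.

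I would dispose of $x$ next. Since $x$ is active and down-dirty, Corollary~\ref{cor:active} gives $W_{x\rightarrow k-1}<1$, so $x$ remains down-dirty at its new level $k-1$. No neighbor of $x$ changes level, so $W_{x\rightarrow t}$ is unchanged as a function of $t$. Applying the down-dirty formula~(\ref{eq:potential:down:new}) before and after shows that $\Phi(x,\cdot)$ changes only at levels $k$ and $k-1$, with $\delta(x,k,\rightarrow)=1-W_{x\rightarrow k}$ at level $k$ and $\delta(x,k-1,\leftarrow)=1-W_{x\rightarrow k-1}$ at level $k-1$.

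The core step is to bound, for each neighbor $y$ with $\ell(y)\le k-1$, the total potential absorbed by $y$; I would split into three cases. (a)~If $y$ is up-dirty before and after, I claim $\Phi(y,t)^{new}\ge \Phi(y,t)^{old}$ at every level $t$, so $y$ never releases and $\sum_t\delta(y,t,\leftarrow)=\Phi(y)^{new}-\Phi(y)^{old}=W_y^{new}-W_y^{old}=\epsilon(1+\epsilon)^{-k}$. The per-level monotonicity is seen from the uniform identity $\Phi(y,t)=\max\bigl(0,\,W_{y\rightarrow t}-\max(1,W_{y\rightarrow t+1})\bigr)$ for $t\ge \ell(y)$, combined with a short case split on the admissible shifts $(\delta_a,\delta_b)\in\{(0,0),(\Delta,\Delta),(\Delta,0)\}$ of $W_{y\rightarrow t}$ and $W_{y\rightarrow t+1}$ induced by the down-jump. (b)~If $y$ stays down-dirty, the only nonzero $\Phi(y,\cdot)$ sits at $\ell(y)$ and strictly shrinks by $\epsilon(1+\epsilon)^{-k}$, so $y$ absorbs zero. (c)~If $y$ transitions from down-dirty to up-dirty, the state-change convention gives $\sum_t\delta(y,t,\leftarrow)=\Phi(y)^{new}=W_y^{new}-1<W_y^{new}-W_y^{old}=\epsilon(1+\epsilon)^{-k}$; and since $W_{y\rightarrow k}^{new}=W_{y\rightarrow k}^{old}\le W_y^{old}<1$, we have $\ell(y,\uparrow)^{new}<k$, so every absorbed level lies below~$k$. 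In all three cases the absorbed potential totals at most $\epsilon(1+\epsilon)^{-k}$ and sits at levels strictly below $k$, which together with the analysis of $x$ completes part~(1).

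Finally, summing the absorbed potentials and using the identity $|N_x(0,k-1)|\cdot \epsilon(1+\epsilon)^{-k}=W_{x\rightarrow k-1}-W_{x\rightarrow k}$ yields $\sum_{v,t}\delta(v,t,\leftarrow)\le (1-W_{x\rightarrow k-1})+(W_{x\rightarrow k-1}-W_{x\rightarrow k})=1-W_{x\rightarrow k}=\delta(x,k,\rightarrow)$, which is~(2). The main obstacle is Case~(a), where I need $\Phi(y,t)$ to be monotone at each individual level despite its piecewise definition having a break-point at $\ell(y,\uparrow)$ that can itself move upward during the jump; the case split on where $W_{y\rightarrow t+1}^{old}$ and $W_{y\rightarrow t+1}^{new}$ sit relative to $1$ is what handles this cleanly.
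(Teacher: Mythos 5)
Your proof is correct and follows the same overall architecture as the paper's: isolate $x$ (which stays down-dirty, releases $1-W_{x\rightarrow k}$ at level $k$ and absorbs $1-W_{x\rightarrow k-1}$ at level $k-1$), observe that only $x$ and $N_x(0,k-1)$ are affected, split the neighbors by their dirty-state transition, bound each neighbor's absorbed potential by $\epsilon(1+\epsilon)^{-k}=W_{x\rightarrow k-1}-W_{x\rightarrow k}$ divided among $|N_x(0,k-1)|$ edges, and telescope. Your cases (b) and (c) coincide with the paper's Cases 1 and 3, and your final summation is identical to the paper's equations (\ref{eq:imp:11})--(\ref{eq:cl:burn:down:1}).

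The one genuinely different step is your case (a) (neighbor up-dirty before and after). The paper argues level by level, asserting $\delta(v,t,\leftarrow)=0$ for $\ell(v)\le t<k-1$ on the grounds that $W_{v\rightarrow t}$ and $W_{v\rightarrow t+1}$ shift by the same amount; this is slightly imprecise at $t=\ell(v,\uparrow)$ when that break-point lies below $k-1$, where the formula in (\ref{eq:potential:up:new}) switches branches and a positive absorption can in fact occur at that level. You instead prove per-level monotonicity of $\Phi(v,t)=\max\bigl(0,\,W_{v\rightarrow t}-\max(1,W_{v\rightarrow t+1})\bigr)$ under the admissible shifts, conclude that nothing is released, and then read off the total absorbed from Corollary~\ref{cor:total:potential:up:new} as exactly $W_v^{new}-W_v^{old}=\epsilon(1+\epsilon)^{-k}$. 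This buys you a cleaner argument that is insensitive to where the break-point sits or how it moves, and it correctly accounts for the absorption the paper's range (II) overlooks; the lemma's conclusion is unaffected either way, since both arguments give the same total and confine all absorption to levels strictly below $k$.
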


\medskip
We are now ready to state the main theorem of our paper. Its proof appears in Section~\ref{sec:th:amortized:work}.

\begin{theorem}
\label{th:amortized:work}
Starting from an empty graph, consider a sequence of $\tau$  updates (edge insertions/deletions) in $G$. The nodes release $O(\tau/\epsilon^2)$ energy while our algorithm handles these updates.
\end{theorem}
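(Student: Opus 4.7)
The plan is to bound the total energy released by bounding the total energy absorbed. Since the system starts with zero stored energy, stores a non-negative amount at all times, and (absorbed minus released) equals the currently stored energy, the total released over the algorithm is at most the total absorbed. Denote by $E^{(a)}_{\mathrm{abs}}$ the total energy absorbed during edge insertions/deletions (scenario (a)) and $E^{(b)}_{\mathrm{abs}}$ the total energy absorbed during iterations of the while loop in Figure~\ref{fig:dirty} (scenario (b)). Summing Lemma~\ref{lm:insertion:deletion} over the $\tau$ updates immediately gives $E^{(a)}_{\mathrm{abs}} = O(\tau/\epsilon)$, so the whole task reduces to proving $E^{(b)}_{\mathrm{abs}} = O(\tau/\epsilon^2)$.

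For this, I would carry over the type-based framework of Part~\ref{part:main}, Section~\ref{sec:analyze}. Classify each unit of potential $\Phi(x,k)$ by its type $\gamma = (k,\alpha(x))$, where $\alpha(x) \in \{\uparrow,\downarrow\}$ records whether $x$ is currently up- or down-dirty, and set $c_\gamma = (1+\epsilon)^k/\epsilon$ so that the actual stored energy equals $\sum_{x,k}\Phi(x,k)\cdot c_{(k,\alpha(x))}$. Equip $\Gamma = \{0,\ldots,L\} \times \{\uparrow,\downarrow\}$ with the total order $\succ$ defined by $(k,\alpha) \succ (k',\alpha')$ iff $k > k'$, or $k = k'$ with $\alpha = \uparrow$ and $\alpha' = \downarrow$. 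Reading Lemmas~\ref{lm:transfer:up:jump} and~\ref{lm:transfer:down:jump} in this language: at any scenario-(b) event where some type $\gamma^*$ releases $\delta^*$ units of potential, the total absorbed potential at the event is at most $\delta^*$ and lies entirely in types $\gamma \prec \gamma^*$. Since any two consecutive drops in $\succ$ reduce the level by one, $c$ decays by a factor of $(1+\epsilon)$ over every two hops.

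Now introduce the augmented conversion rates $\hat c_\gamma = \sum_{\gamma' \preceq \gamma} c_{\gamma'}$. The two-hop geometric decay of $c$ along $\succ$ gives $\hat c_\gamma = O(c_\gamma/\epsilon)$, and by construction $\hat c_\gamma - \hat c_{\gamma^-} = c_\gamma$, where $\gamma^-$ is the immediate $\succ$-predecessor of $\gamma$. Define the augmented energy
\[
\hat\Psi \;=\; \sum_{x \in V}\,\sum_{k=0}^{L} \Phi(x,k)\,\hat c_{(k,\alpha(x))}.
\]
Under scenario (b), the identity for $\hat c$ together with the two lemmas above gives $\Delta\hat\Psi \le -\delta^* c_{\gamma^*}$, i.e., the augmented energy strictly drops by at least the actual energy released during that while-loop iteration. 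Under scenario (a), $\Delta\hat\Psi$ is at most the virtual energy absorbed at the event, which is at most $O(1/\epsilon)$ times the actual energy absorbed (since $\hat c_\gamma/c_\gamma = O(1/\epsilon)$), hence $O(1/\epsilon^2)$ per event by Lemma~\ref{lm:insertion:deletion}. Telescoping over all events and using $\hat\Psi_0 = 0$, $\hat\Psi_T \ge 0$ yields $E^{(b)}_{\mathrm{rel}} \le O(\tau/\epsilon^2)$. Finally, at each scenario-(b) event the actual energy absorbed is at most $\delta^* c_{\gamma^{*-}} \le \delta^* c_{\gamma^*}$, i.e., at most the actual energy released at that iteration, so summing gives $E^{(b)}_{\mathrm{abs}} \le E^{(b)}_{\mathrm{rel}} = O(\tau/\epsilon^2)$. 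Combining with $E^{(a)}_{\mathrm{abs}} = O(\tau/\epsilon)$ gives $R \le E^{(a)}_{\mathrm{abs}} + E^{(b)}_{\mathrm{abs}} = O(\tau/\epsilon^2)$, as required.

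The main technical obstacle is the scenario-(b) inequality $\Delta\hat\Psi \le -\delta^* c_{\gamma^*}$, especially when the jumping node $x$ itself flips between up-dirty and down-dirty during the iteration; the convention (Case 2 in the discussion after Claim~\ref{cl:motivate:potential:down}) then treats this as simultaneous release of $x$'s old potentials and absorption of its new potentials, and one must verify both that every absorbed unit lies at a type strictly $\prec \gamma^*$ and that any additional release by $x$ or its down-neighbors only makes $\Delta\hat\Psi$ more negative. The asymmetric placement of $\uparrow$ above $\downarrow$ within a level in $\succ$ is essential here: an up-jump releasing at $(k,\uparrow)$ may deposit absorbed potential into $(k,\downarrow)$ at the same level (same $c$), but any later release of that potential is forced to a strictly lower level because $(k,\downarrow)^- = (k-1,\uparrow)$, which is exactly what drives the two-hop geometric decay giving the $1/\epsilon$ savings.
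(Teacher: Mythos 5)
Your proof uses the same scaffolding as the paper's: the observation that total release is at most total absorption, the split into scenarios (a) and (b), Lemma~\ref{lm:insertion:deletion} for scenario (a), and Lemmas~\ref{lm:transfer:up:jump}, \ref{lm:transfer:down:jump} packaged as Observation~\ref{ob:key} together with Property~\ref{prop} for scenario (b). The only real difference is the last step. Where the paper informally asserts that scenario-(b) absorption is at most $\beta = 2 + 2(1+\epsilon)^{-1} + \cdots = O(1/\epsilon)$ times scenario-(a) absorption, you supply an explicit telescoping potential $\hat\Psi = \sum_{x,k}\Phi(x,k)\,\hat c_{(k,\alpha(x))}$ with the cumulative rates $\hat c_\gamma = \sum_{\gamma' \preceq \gamma} c_{\gamma'}$. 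These satisfy $\hat c_\gamma - \hat c_{\gamma^-} = c_\gamma$ and, by Property~\ref{prop}, $\hat c_\gamma = O(c_\gamma/\epsilon)$, which together convert Observation~\ref{ob:key} directly into $\Delta\hat\Psi \le -\delta^* c_{\gamma^*}$ at each scenario-(b) event and $\Delta\hat\Psi = O(1/\epsilon^2)$ at each scenario-(a) event (via Lemma~\ref{lm:insertion:deletion}); with $\hat\Psi_0 = 0$ and $\hat\Psi \ge 0$ always, the telescoping closes the argument. This is a tightening rather than a genuinely different route: your $\hat c$ is precisely the formal embodiment of the paper's geometric-sum heuristic, and it makes the final bound fully rigorous where the paper gestures. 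The proof is correct.
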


\begin{corollary}
\label{cor:main:update:time}
Starting from an empty graph, consider a sequence of $\tau$ updates (edge insertions/deletions) in $G$. Our algorithm does $O(\tau/\epsilon^2)$ units of work while handling these updates.
\end{corollary}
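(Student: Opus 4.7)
The plan is to derive Corollary~\ref{cor:main:update:time} as an immediate consequence of Lemma~\ref{lm:work:jump} and Theorem~\ref{th:amortized:work}. First, I would observe that the only way our algorithm performs work is by changing the level of some edge, and an edge changes level only when one of its endpoints moves inside the \textsc{While} loop of Figure~\ref{fig:dirty}. Hence the total work done over the sequence of $\tau$ updates decomposes as a sum over iterations of the \textsc{While} loop, and in each such iteration only a single node $x$ changes its level; by the charging convention introduced at the start of Section~\ref{sec:analyze:work}, all the work performed during that iteration is charged to $x$.

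Next, I would apply Lemma~\ref{lm:work:jump} to each such iteration: the work charged to the moving node $x$ is at most $(1+\epsilon)$ times the energy released by $x$ during that iteration. Summing over all iterations across the entire update sequence, the total work is at most $(1+\epsilon)$ times the total energy released by all nodes.

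Finally, Theorem~\ref{th:amortized:work} guarantees that the total energy released during the $\tau$ updates is $O(\tau/\epsilon^2)$. Multiplying by the constant factor $(1+\epsilon) \leq 2$ (recall $0 < \epsilon < 1$) preserves the asymptotic bound, yielding a total work of $O(\tau/\epsilon^2)$, as claimed. There is essentially no obstacle: the corollary is a bookkeeping step that converts the energy bound of Theorem~\ref{th:amortized:work} into a work bound via the per-iteration inequality of Lemma~\ref{lm:work:jump}. The only thing to be a little careful about is that no work is done outside the \textsc{While} loop (edge insertions/deletions themselves merely add or remove an edge without altering the level of any existing edge), so the sum on the left-hand side genuinely captures all work performed by the algorithm.
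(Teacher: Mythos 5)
Your proposal is correct and follows exactly the route the paper intends: the paper's proof of Corollary~\ref{cor:main:update:time} is the one-liner ``Follows from Lemma~\ref{lm:work:jump} and Theorem~\ref{th:amortized:work},'' and your write-up simply spells out the bookkeeping (work occurs only inside \textsc{While}-loop iterations, each iteration's work is at most $(1+\epsilon)$ times the energy released by the moving node, and the total energy released is $O(\tau/\epsilon^2)$). Nothing is missing and nothing is different in substance.
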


\begin{proof}
Follows from Lemma~\ref{lm:work:jump} and Theorem~\ref{th:amortized:work}.
\end{proof}

\subsection{Proof of Theorem~\ref{th:amortized:work}}
\label{sec:th:amortized:work}
The total energy stored at the nodes is zero when the input graph is empty.  Furthermore, the node-potentials $\{\Phi(v, k)\}$  are always nonnegative. Hence, during the course of our algorithm, the total energy released by the nodes is at most the total energy absorbed by the nodes.  We will show that overall the nodes absorb $O(\tau/\epsilon^2)$ units of energy while our algorithm handles $\tau$ updates starting from an empty graph. This will imply the theorem. We start by observing that the nodes might absorb energy under two possible scenarios: 
\begin{itemize}
\item (a) An edge $(u, v)$ gets inserted into or deleted from the graph. Under this scenario, one or both the endpoints $\{u, v\}$ might absorb some energy. 
\item (b) A node $x$ moves up or down one level during the execution of the {\sc While} loop in Figure~\ref{fig:dirty} after the insertion/deletion of an edge. Under this scenario, one or more nodes in $N_x \cup \{x\}$ might absorb some energy.  Note that in the former case (where $x$ is moving up from level $k$) the node $x$ is {\em active} and up-dirty at level $k$, and in the latter case (where $x$ is moving down from level $k$) the node $x$ is {\em active} and down-dirty at level $k$. This follows from Figure~\ref{fig:dirty}.
\end{itemize}
We will show that:
\begin{equation}
\label{eq:th:0}
\text{The total energy absorbed by the nodes under scenario (a) is at most } O(\tau/\epsilon).
\end{equation}
\begin{equation}
\label{eq:th:1}
\text{The total energy absorbed by the nodes under scenario (b) is at most } O(\tau/\epsilon^2).
\end{equation}
The theorem follows from~(\ref{eq:th:0}) and~(\ref{eq:th:1}). We immediately note that~(\ref{eq:th:0}) holds due to Lemma~\ref{lm:insertion:deletion}. Accordingly, from now onward, we focus on proving~(\ref{eq:th:1}). We start by introducing a few important notations and terminologies, in order to set the stage up for the actual analysis.

Let $\Phi(k) = \sum_v \Phi(v, k)$ denote the  potential stored at level $k$ by all the nodes. Further, let $\Phi(k, \uparrow) = \sum_{v : v \text{ is up-dirty}} \Phi(v, k)$ and $\Phi(k, \downarrow) = \sum_{v  : v \text{ is down-dirty}} \Phi(v, k)$ respectively denote the potential stored at level $k$ by all the up-dirty and down-dirty nodes. As every node is either up-dirty or down-dirty (but not both at the same time), we have $\Phi(k) = \Phi(k, \uparrow) + \Phi(k, \downarrow)$. Informally, we say that there are $\Phi(k, \uparrow)$ units of potential in the system that are of {\em type} $(k, \uparrow)$, and  there are $\Phi(k, \downarrow)$ units of potential in the system that are of {\em type} $(k, \downarrow)$. There are $2(L+1)$ different types of potentials, for there are $2(L+1)$ many ordered pairs of the form $(k, \alpha)$, with $0 \leq k \leq L$ and $\alpha \in \{\uparrow, \downarrow\}$.

Let $\Gamma = \{ (k, \alpha) :  \alpha \in \{\uparrow,\downarrow\} \text{ and } 0 \leq k \leq  L \}$ denote the set of all possible types of potentials. We define a total order $\succ$ on the elements of the set $\Gamma$ as follows. For any two types of potentials $(k, \alpha), (k', \alpha') \in \Gamma$, we have $(k, \alpha) \succ (k', \alpha)$ iff either $\{k > k'\}$ or $\{ k = k', \alpha = \uparrow, \alpha' = \downarrow \}$. Next, from~(\ref{eq:energy:up:new}) and~(\ref{eq:energy:down:new}), we recall that the {\em conversion rate} between energy and potential is $(1+\epsilon)^{k} \cdot \epsilon^{-1}$ at level $k$. In other words, from $\delta$ units of potential stored at level $k$ we get $\delta \cdot (1+\epsilon)^{k} \cdot \epsilon^{-1}$ units of energy. Keeping this in mind, we define the {\em conversion rate} associated with both the types $(k, \uparrow)$ and $(k, \downarrow)$ to be $(1+\epsilon)^k \cdot \epsilon^{-1}$. Specifically, we write $c_{(k, \uparrow)} = c_{(k, \downarrow)} = (1+\epsilon)^k \cdot \epsilon^{-1}$. Thus, from $\delta$ units of potential of any type $\gamma \in \Gamma$, we get $\delta \cdot c_{\gamma}$ units of energy. The total order $\succ$ we defined on the set $\Gamma$ has the following property that will be crucially used in our analysis.
\begin{property}
\label{prop}
Consider any three types of potentials $\gamma_1, \gamma_2, \gamma_3 \in \Gamma$ such that $\gamma_1 \succ \gamma_2 \succ \gamma_3$. Then we must have $c_{\gamma_1} \geq (1+\epsilon) \cdot c_{\gamma_3}$. In words, the conversion rate between energy and potentials drops by at least a factor of $(1+\epsilon)$ as we move two hops down the total order $\Gamma$.
\end{property}

\begin{proof}
Any $\gamma \in \Gamma$ is of the form $(k, \alpha)$ where $k \in \{0, \ldots, L\}$ and $\alpha \in \{\uparrow, \downarrow\}$. From the way we have defined the total order $\succ$, it follows that if $(k_1, \alpha_1) \succ (k_2, \alpha_2) \succ (k_3, \alpha_3)$, then $k_1 \geq k_3+1$. The property holds since $c_{(k_1, \alpha_1)} = (1+\epsilon)^{k_1}$ and $c_{(k_3, \alpha_3)} = (1+\epsilon)^{k_3}$.
\end{proof}

We are now ready to state the key observation which leads to the proof of~(\ref{eq:th:1}).

\begin{observation}
\label{ob:key} Consider an event where some nonzero units of potential are absorbed by some nodes, under scenario (b). Such an event occurs only if some node $x$ moves up or down one level from its current level $k$ (say). In the former case let $\gamma^* = (k, \uparrow)$, and in the latter case let $\gamma^* = (k, \downarrow)$. Let $\delta^* \geq 0$ denote the potential of type $\gamma^*$ {\em released} by $x$ during this event. For every type $\gamma \in \Gamma$, let $\delta_{\gamma} \geq 0$ denote the total potential of type $\gamma$ {\em absorbed} by all the nodes during this event. Then:
\begin{enumerate}
\item For every $\gamma \in \Gamma$, we have $\delta_{\gamma} > 0$ only if $\gamma^* \succ \gamma$.
\item $\sum_{\gamma \in \Gamma} \delta_{\gamma} \leq \delta^*$. 
\end{enumerate}
\end{observation}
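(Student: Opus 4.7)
The plan is to derive Observation~\ref{ob:key} directly from Lemmas~\ref{lm:transfer:up:jump} and~\ref{lm:transfer:down:jump}, which have already been stated and do the heavy lifting. I will split into the two cases described in the observation and, in each, verify that the node-level/direction labels produced by the relevant lemma are dominated by $\gamma^*$ in the total order $\succ$, and that the total-potential inequality matches.

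First, suppose the event is an up-jump: an active, up-dirty node $x$ at level $k$ is moved to level $k+1$. Then $\gamma^* = (k, \uparrow)$ and $\delta^* = \delta(x, k, \rightarrow)$. By Lemma~\ref{lm:transfer:up:jump}(1), any node $v$ with $\delta(v, t, \leftarrow) > 0$ is down-dirty after the up-jump and satisfies $t \leq k$. Consequently, the absorbed potential at $(v, t)$ is of type $\gamma = (t, \downarrow)$. If $t < k$ then $\gamma^* = (k, \uparrow) \succ (t, \downarrow) = \gamma$ because the levels satisfy $k > t$; if $t = k$ then $\gamma^* = (k, \uparrow) \succ (k, \downarrow) = \gamma$ by the tie-breaking rule in the definition of $\succ$. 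In either case $\gamma^* \succ \gamma$, establishing part (1). Lemma~\ref{lm:transfer:up:jump}(2) immediately gives $\sum_{\gamma \in \Gamma} \delta_\gamma = \sum_v \sum_t \delta(v, t, \leftarrow) \leq \delta(x, k, \rightarrow) = \delta^*$, which is part (2).

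Next, suppose the event is a down-jump: an active, down-dirty node $x$ at level $k$ is moved to level $k-1$. Then $\gamma^* = (k, \downarrow)$ and $\delta^* = \delta(x, k, \rightarrow)$. By Lemma~\ref{lm:transfer:down:jump}(1), any absorbing $(v, t)$ pair has $v \in N_x \cup \{x\}$ and $t < k$. Regardless of whether $v$ is up-dirty or down-dirty after the jump, the type of the absorbed potential is some $\gamma = (t, \alpha)$ with $t < k$, hence $\gamma^* = (k, \downarrow) \succ (t, \alpha) = \gamma$ in the order $\succ$. This proves part (1). Part (2) follows at once from Lemma~\ref{lm:transfer:down:jump}(2).

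The ``hard part'' of the observation has been front-loaded into the two transfer lemmas, whose proofs occupy the appendix; the remaining work, which is what this proof plan covers, is purely a matter of checking that the level/direction bounds coming out of those lemmas line up correctly with the total order $\succ$ on $\Gamma$. The only subtlety worth flagging is the boundary case $t = k$ in the up-jump, which is precisely why $\succ$ was defined so that $(k, \uparrow) \succ (k, \downarrow)$: a neighbor $y$ of $x$ sitting at level $\ell(y) = k$ can absorb down-potential at its own level $k$, and this needs to be a strictly smaller element of $\Gamma$ than the up-potential of type $(k, \uparrow)$ being released by $x$. Once this convention is noted, the proof is a short and mechanical case analysis.
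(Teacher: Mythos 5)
Your proof is correct and follows essentially the same route as the paper: case-split on whether $x$ performs an up-jump or a down-jump, read off the level/direction restrictions from part~(1) of Lemma~\ref{lm:transfer:up:jump} (resp.\ Lemma~\ref{lm:transfer:down:jump}) to verify $\gamma^* \succ \gamma$ under the total order $\succ$, and invoke part~(2) of the same lemma for the total-potential bound. The only difference is that you make the $t=k$ tie-breaking case in the up-jump explicit, which the paper folds into ``from the way we have defined our total order''; the substance is identical.
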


\begin{proof}
We consider two  cases, depending on whether the node $x$ is moving up or moving down.

\medskip
\noindent {\em Case 1. The node $x$ is active and up-dirty, and it is moving up from level $k$ to level $k+1$.}

\smallskip
\noindent
Consider any node $v$ that absorbs nonzero units of potential at a level $t$ during the up-jump of $x$. Then part (1) of Lemma~\ref{lm:transfer:up:jump} implies that $v$ is down-dirty after the up-jump of $x$ and $t \leq k$. In other words, if the node $v$ absorbs nonzero units of potential of type $\gamma$, then  $\gamma$ is of the form $(t, \downarrow)$ for some $t \leq k$. Also, note that $\gamma^*$ is of the form $(k, \uparrow)$. From the way we have defined our total order $\succ$ on the set $\Gamma$, we infer that $\gamma^* \succ \gamma$ for any such $\gamma$. The concludes the proof of part (1) of the observation. Part (2) of the observation follows from part (2) of Lemma~\ref{lm:transfer:up:jump}.

\smallskip
\noindent {\em Case 2. The node $x$ is active and down-dirty, and it is moving down from level $k$ to level $k-1$.}

\smallskip
\noindent
Consider any node $v$ that absorbs nonzero units of potential at a level $t$ during the down-jump of $x$. Then part (1) of Lemma~\ref{lm:transfer:down:jump} implies that $t < k$. In other words, if the node $v$ absorbs nonzero units of potential of type $\gamma$, then  $\gamma$ is of the form $(t, \alpha)$ for some $\alpha \in \{\uparrow, \downarrow\}$ and some $t < k$. Also, note that $\gamma^*$ is of the form $(k, \downarrow)$. From the way we have defined our total order $\succ$ on the set $\Gamma$, we infer that $\gamma^* \succ \gamma$ for any such $\gamma$. The concludes the proof of part (1) of the observation.  Part (2) of the observation follows from part (2) of Lemma~\ref{lm:transfer:down:jump}.
\end{proof}

Property~\ref{prop} and Observation~\ref{ob:key} together give us a complete picture of the way the potential  stored by the nodes {\em flows within the system}. Specifically, there are two scenarios in which potential can be pumped into (i.e., absorbed by) the nodes. In scenario (a), potential gets pumped into the nodes {\em exogenously} by an adversary, due to the insertion or deletion of an edge in the graph. But, according to~(\ref{eq:th:0}), the total energy absorbed by the nodes under this scenario is already upper bounded by $O(\tau/\epsilon)$. On the other hand, in scenario (b), a node releases some $\delta^* \geq 0$ units of potential of type $\gamma^* \in \Gamma$ (say), and at the same time  some $0 \leq \delta \leq \delta^*$ units of potential get created. This newly created $\delta$ units of potential are then {\em split up} in some {\em chunks}, and  these chunks in turn get absorbed as potentials of (one or more) different types $\gamma$. We now note three key points about this process: (1) $\delta \leq \delta^*$. (2) If a chunk of this newly created $\delta$ units of potential gets absorbed as potential of type $\gamma$, then we must have $\gamma^* \succ \gamma$. (3) By Property~\ref{prop}, as we move down two hops in the total order $\succ$, the conversion rate between energy and potential {\em drops} at least by a factor of $(1+\epsilon)$. These three points together imply that the energy absorbed under scenario (b) is at most $\beta$ times the energy absorbed under scenario (a), where $\beta = 2 + 2 \cdot (1+\epsilon)^{-1} + 2 \cdot (1+\epsilon)^{-2} + \cdots = O(1/\epsilon)$. Hence, from~(\ref{eq:th:0}) we infer that the total energy absorbed by the nodes under scenario (b) is at most $O(1/\epsilon) \cdot O(\tau/\epsilon) = O(\tau/\epsilon^2)$. This concludes the proof of~(\ref{eq:th:1}).

\section{Bounding the amortized update time of our algorithm}
\label{sec:analyze:time}

We now describe the data structures that we use to implement our dynamic algorithm for maintaining a $(2+\epsilon)$-approximate minimum vertex cover. Each node $v \in V$ maintains its  level $\ell(v)$, its weight $W_v$, and the following doubly linked lists, which we refer to as the {\em neighborhood lists for $v$}.
\begin{itemize}
\item For every level $t > \ell(v)$, the set $E_v(t) = \{ (u, v) \in E : \ell(u) = t \}$ of its incident edges whose other endpoints are at level $t$.
\item The set $E_v^- = \{ (u, v ) \in E : \ell(u) \leq \ell(v) \}$ of its incident edges whose other endpoints are either at the same level as $v$ or at a level lower than $v$.
\end{itemize}
The set of edges incident on  a node  is partitioned into its neighborhood lists. Each node  also maintains the size of each of its neighborhood lists. Furthermore, each node $v \in V$ maintains the value of $\left(W_{v \rightarrow k} - W_{v \rightarrow k+1}\right) = \delta W_v(k)$ at every level $k \in \{ \ell(v), \ldots, L-1\}$. If a node $v$ is up-dirty, then using the values of $W_v$ and $\delta W_v(\ell(v))$, it can figure out in constant time whether or not it is active. On the other hand, if a node $v$ is down-dirty, then it can figure out if it is active or not only by looking up the value of $W_v$ (again in constant time). Finally, each edge $(u, v) \in E$ maintains two pointers -- one pointing to the position of  $(u, v)$ in the neighborhood lists of $v$ and the other pointing to the position of $(u,v)$ in the neighborhood lists of $u$. Using these pointers, we can insert or delete an edge from a neighborhood list in constant time.  

In Claim~\ref{cl:update:time:up} and Claim~\ref{cl:update:time:down}, we bound the time taken to update these data structures during a single iteration of the {\sc While} loop in Figure~\ref{fig:dirty}. From Corollary~\ref{cor:update:time:up} and Corollary~\ref{cor:update:time:down} we infer that the time spent on updating the relevant data structures is within a constant factor of the energy released by the nodes. This, along with Theorem~\ref{th:amortized:work}, implies the following result.

\begin{theorem}
\label{th:amortized:update:time}
There is a dynamic algorithm for maintaining a $(2+\epsilon)$-approximate minimum vertex cover. Starting from an empty graph $G = (V, E)$, the algorithm spends $O(\tau/\epsilon^2)$ total time to handle $\tau$ edge insertions/deletions. Hence, the amortized update time of the algorithm is $O(1/\epsilon^2)$.
\end{theorem}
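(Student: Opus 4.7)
The plan is to assemble the theorem from three ingredients already established: (i) the correctness of the approximation ratio via Corollary~\ref{cor:inv}, (ii) the per-iteration time bounds given by Claim~\ref{cl:update:time:up} and Claim~\ref{cl:update:time:down}, and (iii) the global energy bound from Theorem~\ref{th:amortized:work}. Since the \textsc{Fix-Dirty} subroutine in Figure~\ref{fig:dirty} terminates only when no node is active, Invariant~\ref{inv} is restored after every edge update, so Corollary~\ref{cor:inv} yields a $2(1+\epsilon)(1-\epsilon)^{-1}$-approximate minimum vertex cover; reparameterizing $\epsilon$ by a constant factor (e.g. running the algorithm with $\epsilon' := \epsilon/8$) turns this into the claimed $(2+\epsilon)$-approximation for any $0 < \epsilon < 1$.

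For the update time, I would decompose the cost of handling $\tau$ updates into (a) ambient bookkeeping outside the \textsc{While} loop, and (b) the work inside the loop. Part (a) costs $O(1)$ per update: when an edge $(u,v)$ is inserted or deleted, we update the neighborhood lists $E_u^-, E_u(\cdot), E_v^-, E_v(\cdot)$ at the single affected level, the scalars $W_u, W_v$, and the entries $\delta W_u(\ell(u)), \delta W_v(\ell(v))$; we then test each of $u$ and $v$ for activeness in $O(1)$ time using the stored $W$ and $\delta W$ values, pushing them onto an active-node worklist if necessary. This contributes $O(\tau)$ total.

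For part (b), Claim~\ref{cl:update:time:up} bounds the time of an up-move of $x$ from level $k$ to $k+1$ by $O\!\left((W_{x\rightarrow k}-W_{x\rightarrow k+1})\cdot\epsilon^{-1}\cdot(1+\epsilon)^k\right)$, which, by~(\ref{eq:potential:up:new})--(\ref{eq:energy:up:new}), is exactly $O(1)$ times the energy released by $x$ at level $k$ during that iteration. For a down-move of $x$ from level $k$ to $k-1$, Claim~\ref{cl:update:time:down} bounds the time by $O((1+\epsilon)^{k-1})$; since $x$ was active and down-dirty, $W_{x\rightarrow k} < 1-\epsilon$, so by~(\ref{eq:potential:down:new})--(\ref{eq:energy:down:new}) the energy released at level $k$ is at least $\epsilon\cdot(1+\epsilon)^{-1}\cdot(1+\epsilon)^k\cdot\epsilon^{-1} = (1+\epsilon)^{k-1}$, again matching the time up to a constant factor. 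Summing over all loop iterations and invoking Theorem~\ref{th:amortized:work}, the total time inside the \textsc{While} loop is $O(1)$ times the total energy released by the nodes, which is $O(\tau/\epsilon^2)$. Adding (a) gives $O(\tau/\epsilon^2)$ overall, and hence $O(1/\epsilon^2)$ amortized update time.

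The main obstacle I anticipate is the asymmetry between up-moves and down-moves: for an up-move, the data-structure time $|N_x(0,k)|$ coincides with the work $|N_x(0,k)|$ defined in Section~\ref{sec:analyze:work}, so Lemma~\ref{lm:work:jump} and Theorem~\ref{th:amortized:work} apply verbatim. For a down-move, however, the data-structure time is $\Theta(|N_x(0,k)|)$ (we must scan the full list $E_x^-$ to split it), whereas the work as defined is only $|N_x(0,k-1)|$. The resolution is that Claim~\ref{cl:update:time:down} bounds the down-move time directly against the conversion rate $(1+\epsilon)^{k-1}$ at the released level, using the activeness condition $W_{x\rightarrow k}<1-\epsilon$ to bound $|N_x(0,k)|$ by $(1-\epsilon)(1+\epsilon)^k$; the residual factor of $\epsilon$ is precisely what the level-$k$ conversion rate $(1+\epsilon)^k\epsilon^{-1}$ absorbs, so the per-iteration time is still $O(1)$ times the energy released, which is all we need for the amortized bound.
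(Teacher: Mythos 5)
Your proposal is correct and matches the paper's approach: the paper likewise assembles the theorem from Corollary~\ref{cor:inv} for the approximation ratio and from Corollaries~\ref{cor:update:time:up} and~\ref{cor:update:time:down} (which you re-derive inline from Claims~\ref{cl:update:time:up},~\ref{cl:update:time:down} and the potential/energy definitions) together with Theorem~\ref{th:amortized:work}, establishing that the per-iteration data-structure time is within an $O(1)$ factor of the energy released by the moving node. You correctly identify the up/down asymmetry and resolve it exactly as the paper does in Corollary~\ref{cor:update:time:down}, by charging the down-move time $O(|N_x(0,k)|) = O((1+\epsilon)^k)$ against the $\Omega(\epsilon)$ units of potential that an active down-dirty node releases at level $k$; the only cosmetic slip is that your constant $\epsilon(1+\epsilon)^{-1}$ for the released potential uses the Part~I threshold $W_x < 1/(1+\epsilon)$ rather than the full version's $W_x < 1-\epsilon$, but either way the energy released is $\Omega((1+\epsilon)^k)$ so the charge goes through. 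Your explicit accounting of the $O(1)$ bookkeeping per edge update and the reparameterization $\epsilon' = \Theta(\epsilon)$ to turn $2(1+\epsilon)(1-\epsilon)^{-1}$ into $2+\epsilon$ are left implicit in the paper but are both standard and correct.
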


\begin{claim}
\label{cl:update:time:up}
Consider a node $x$  moving up from level $k$ to  $k+1$ during an iteration of the {\sc While} loop in Figure~\ref{fig:dirty}. It takes $O(|N_x(0,k)|)$ time to update the relevant data structures during this event.
\end{claim}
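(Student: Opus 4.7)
The plan is to show that every data-structure update triggered by the up-move of $x$ from level $k$ to $k+1$ can be charged either to a neighbor $y \in N_x(0,k)$ at $O(1)$ cost per such $y$, or to the node $x$ itself at $O(1)$ total cost. The key structural observation is that the only edges whose weights actually change are $\{(x,y) : y \in N_x(0,k)\}$: indeed, $w(x,y) = (1+\epsilon)^{-\max(\ell(x),\ell(y))}$ is unaffected when $\ell(y) \geq k+1$, and drops from $(1+\epsilon)^{-k}$ to $(1+\epsilon)^{-(k+1)}$ exactly when $\ell(y) \leq k$. This restricts all subsequent bookkeeping to the edges incident on $N_x(0,k)$.

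First I would enumerate $N_x(0,k)$ by scanning the list $E_x^-$; since $\ell(x) = k$ just before the move, this list holds exactly one entry per element of $N_x(0,k)$, so the scan costs $O(|N_x(0,k)|)$. For each such edge $(x,y)$, three $O(1)$ updates are then performed: (i) using the cross-pointer stored on the edge record, relocate $(x,y)$ in $y$'s neighborhood lists --- from $E_y(k)$ to $E_y(k+1)$ when $\ell(y) < k$, and from $E_y^-$ to $E_y(k+1)$ when $\ell(y) = k$; (ii) decrement $W_y$ by $\epsilon \cdot (1+\epsilon)^{-(k+1)}$, the exact drop in $w(x,y)$; and (iii) decrement $\delta W_y(k)$ by the same amount.

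Step (iii) is the main point requiring care and is the step I expect to be the principal technical obstacle, since a priori many of the $L$ stored differences $\delta W_y(j) = W_{y\to j} - W_{y\to j+1}$ at $y$ might seem to need recomputation. The resolution is a short telescoping calculation: the contribution of the edge $(x,y)$ to $W_{y\to j}$ decreases by $\epsilon \cdot (1+\epsilon)^{-(k+1)}$ for every $j \leq k$ and is unchanged for $j \geq k+1$. Consequently the decrement cancels in the difference $W_{y\to j} - W_{y\to j+1}$ at every level except $j = k$, and so only the single entry $\delta W_y(k)$ genuinely needs to be touched per neighbor $y \in N_x(0,k)$.

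Finally, for the node $x$ itself the work is $O(1)$: splice $E_x(k+1)$ onto the doubly linked list $E_x^-$ to obtain the new $E_x^-$ appropriate to level $k+1$; subtract the (now obsolete) stored value $\delta W_x(k)$ from $W_x$ to obtain the new weight $W_x = W_{x\to k+1}$; discard $\delta W_x(k)$ from $x$'s table, while leaving the remaining entries $\delta W_x(j)$ for $j \geq k+1$ untouched (the same telescoping argument applies at $x$); and set $\ell(x) \leftarrow k+1$. Summing the per-neighbor $O(1)$ cost over $y \in N_x(0,k)$ with the $O(1)$ overhead for $x$ yields the claimed $O(|N_x(0,k)|)$ bound.
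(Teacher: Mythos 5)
Your proof is correct and takes essentially the same approach as the paper: scan $E_x^-$ to enumerate $N_x(0,k)$, perform $O(1)$ pointer, weight, and $\delta W$ updates per such neighbor, and handle $x$ itself in $O(1)$ via a list splice. Your telescoping calculation showing that only the single entry $\delta W_y(k)$ changes is a nice explicit justification of a step the paper's proof leaves implicit, and your case split on $\ell(y)<k$ versus $\ell(y)=k$ for relocating $(x,y)$ in $y$'s lists is in fact slightly more careful than the paper's phrasing.
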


\begin{proof}
To update the relevant data structures, we need to perform the following operations. 
\begin{enumerate}
\item {\em Inform} the other endpoint $v$ of every edge $(v, x) \in E_x^-$ that the node $x$ has moved up one level. The other endpoint $v$  deletes the edge $(x, v)$ from the list $E_v(k)$ and inserts it back into the list $E_v(k+1)$, and  then updates its weight $W_v$ and the value of $\delta W_v(k)$. This takes $O(|N_x(0,k)|)$ time.
\item In $O(1)$ time, update the weight $W_x$ by setting $W_x = W_x - |N_x(0,k)| \cdot \left( (1+\epsilon)^{-k} - (1+\epsilon)^{-k-1}\right)$. 
\item Add all the edges in  $E_x(k+1)$ to  $E_x^-$. This takes $O(1)$ time, since the neighborhood lists are implemented as doubly linked lists.
\item Update the level of $v$ by setting $\ell(v) = \ell(v) + 1$. This takes $O(1)$ time.
\end{enumerate}
Thus, it takes $O(|N_x(0,k)|)$ time to update all the relevant data structures during this event. 
\end{proof}

\begin{corollary}
\label{cor:update:time:up}
Consider an event where a node $x \in V$ is moving  up from level $k$ to  $(k+1)$ during an iteration of the {\sc While} loop in Figure~\ref{fig:dirty}. During this event, the time spent to update the relevant data structures is within a $O(1)$ factor of  the energy released by the node  $x$.
\end{corollary}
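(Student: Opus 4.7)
The plan is to combine Claim~\ref{cl:update:time:up} with a direct computation of the potential (and hence energy) released by $x$ at level $k$ during this event, mirroring the discussion that already accompanies the definition (\ref{eq:potential:up:new}).

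First I would apply Claim~\ref{cl:update:time:up} to conclude that the time spent in the data-structure updates is $O(|N_x(0,k)|)$. Next, exactly as in the proof of Claim~\ref{cl:motivate:potential:up}, I would rewrite this quantity in terms of the weight change: each of the $|N_x(0,k)|$ edges incident on $x$ whose other endpoint lies in $N_x(0,k)$ drops in weight from $(1+\epsilon)^{-k}$ to $(1+\epsilon)^{-k-1}$, so
\[
W_{x\rightarrow k} - W_{x\rightarrow k+1} \;=\; |N_x(0,k)| \cdot \epsilon \cdot (1+\epsilon)^{-k-1},
\]
which gives $|N_x(0,k)| = (W_{x\rightarrow k} - W_{x\rightarrow k+1}) \cdot (1+\epsilon)^{k+1} \cdot \epsilon^{-1}$.

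Then I would lower-bound the energy released by $x$ during this event by the energy it releases at level $k$ alone. Since step~3 of Figure~\ref{fig:dirty} is executed only when $x$ is active and up-dirty at level $k$, Corollary~\ref{cor:active} gives $W_{x\rightarrow k+1} \ge 1$, so $\ell(x,\uparrow) \ge k+1 > \ell(x)$ just before the event. Case three of (\ref{eq:potential:up:new}) then yields $\Phi(x,k) = W_{x\rightarrow k} - W_{x\rightarrow k+1}$ before the event; immediately after, $x$ is still up-dirty (by Corollary~\ref{cor:active}) and sits at level $k+1$, so case four of (\ref{eq:potential:up:new}) gives $\Phi(x,k) = 0$. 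Because $x$ does not switch between up-dirty and down-dirty across this event, situation~1 of the convention applies, and the potential released by $x$ at level $k$ equals $W_{x\rightarrow k} - W_{x\rightarrow k+1}$. Multiplying by the conversion rate $(1+\epsilon)^k \cdot \epsilon^{-1}$ from (\ref{eq:energy:up:new}) yields an energy release at level $k$ of $(W_{x\rightarrow k}-W_{x\rightarrow k+1})\cdot(1+\epsilon)^k\cdot \epsilon^{-1}$.

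Finally, I would take the ratio of the time bound to this lower bound on released energy, obtaining $O(1+\epsilon) = O(1)$, which is the desired conclusion. I do not foresee any real obstacle: both quantities scale linearly with $W_{x\rightarrow k} - W_{x\rightarrow k+1}$, and the only factor that differs is $(1+\epsilon)$ between the conversion rates at levels $k$ and $k+1$. (It is worth noting, but not necessary for the bound, that the potentials stored by $x$ at levels other than $k$ may also change; since all such potentials and energies are nonnegative, restricting attention to the release at level $k$ is a valid lower bound.)
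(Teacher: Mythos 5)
Your proof is correct and follows essentially the same route as the paper's, which simply invokes Lemma~\ref{lm:work:jump} (whose proof for the up-dirty case is precisely the potential-release computation you carry out explicitly) together with Claim~\ref{cl:update:time:up}. The two arguments agree in substance; you have merely inlined the lemma's computation, arriving at the same $(1+\epsilon)$ factor between the time bound $O(|N_x(0,k)|)$ and the energy released at level $k$.
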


\begin{proof}
During this event, the only edges that change their levels are of the form $(x, v) \in E$ with $v \in N_x(0,k)$. Hence, the work done on the node $x$  is given by $|N_x(0,k)|$. By Lemma~\ref{lm:work:jump},  the node $x$ releases at least $(1+\epsilon)^{-1} \cdot |N_x(0,k)|$ units of energy during this event. The corollary now follows from Claim~\ref{cl:update:time:up}.
\end{proof}

\begin{claim}
\label{cl:update:time:down}
Consider a  node $x$  moving down from level $k$ to $k-1$ during an iteration of the {\sc While} loop in Figure~\ref{fig:dirty}. During this event, it takes $O(|N_x(0,k)|)$ time to update the  data structures.
\end{claim}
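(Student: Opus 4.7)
The plan is to enumerate every data-structure change caused by moving $x$ from level $k$ down to $k-1$ and verify that their total cost is $O(|N_x(0,k)|)$. The starting observation is that the only edges whose weights actually change are those of the form $(x,y)$ with $y \in N_x(0,k-1)$: their weight rises from $(1+\epsilon)^{-k}$ to $(1+\epsilon)^{-(k-1)}$. For any neighbor $y$ with $\ell(y) \geq k$, the edge weight $w(x,y) = (1+\epsilon)^{-\max(\ell(x),\ell(y))} = (1+\epsilon)^{-\ell(y)}$ is unchanged, so we will argue that such neighbors require no work at all.

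First I would handle $x$'s own neighborhood lists. Before the move, $E_x^-$ contains exactly the edges $(x,y)$ with $\ell(y) \leq k$, so $|E_x^-| = |N_x(0,k)|$. I scan through $E_x^-$ once; whenever I encounter an edge $(x,y)$ with $\ell(y) = k$, I remove it from $E_x^-$ and splice it into $E_x(k)$ using the cross-pointer stored with the edge. This scan costs $O(|N_x(0,k)|)$ and, as a by-product, yields $|N_x(0,k-1)|$. I then set $\ell(x) \leftarrow k-1$, recompute $W_x = W_{x \rightarrow k-1}$, and create the new entry $\delta W_x(k-1) = |N_x(0,k-1)| \cdot \epsilon \cdot (1+\epsilon)^{-k}$, all in $O(1)$ additional time. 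Importantly, the previously stored values $\delta W_x(j)$ for $j \geq k$ need no adjustment, because $\delta W_x(j)$ depends only on the neighbors' levels, not on $\ell(x)$.

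Next I would visit each neighbor $y \in N_x(0,k-1)$ that was flagged during the scan above. For each such $y$, I relocate $(x,y)$ from $E_y(k)$ to $E_y(k-1)$ (if $\ell(y) < k-1$) or to $E_y^-$ (if $\ell(y) = k-1$); each relocation is $O(1)$ via the cross-pointers. I then increase $W_y$ by $(1+\epsilon)^{-(k-1)} - (1+\epsilon)^{-k}$ and patch $\delta W_y(k-1)$. A short case analysis will confirm that no other $\delta W_y(j)$ needs updating: for $j \geq k$ we have $\max(\ell(x),j) = j$ both before and after, while for $j < k-1$ both $W_{y \rightarrow j}$ and $W_{y \rightarrow j+1}$ shift by identical amounts so their difference is preserved. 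Summing over all flagged neighbors gives $O(|N_x(0,k-1)|) \leq O(|N_x(0,k)|)$.

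The main obstacle I expect is the careful bookkeeping around the boundary case $\ell(y) = k$. Here the edge $(x,y)$ sits in $E_y^-$ both before the move (because $\ell(x) = k = \ell(y)$) and after it (because now $\ell(x) = k-1 < \ell(y)$), so no relocation from $y$'s side is required; but we must still verify that neither $W_y$ nor any $\delta W_y(j)$ actually changes, which follows from $\max(\ell(x),\ell(y)) = k$ being preserved. Once this corner is handled, the cost contributions above combine to give the claimed $O(|N_x(0,k)|)$ bound, matching the structure of Claim~\ref{cl:update:time:up} for the up-jump case.
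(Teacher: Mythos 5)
Your proof is correct and follows essentially the same approach as the paper: enumerate the data-structure operations required by the down-jump and verify that each group of operations costs either $O(1)$ or $O(|N_x(0,k)|)$. You are somewhat more explicit than the paper in two respects --- first, you verify by a case analysis that only $\delta W_y(k-1)$ (and no other $\delta W_y(j)$) must be patched at each affected neighbor $y$, and second, you spell out the boundary case $\ell(y)=k$, showing that the edge $(x,y)$ stays in $E_y^-$ and neither $W_y$ nor any $\delta W_y(j)$ changes. The paper's proof elides both points (its case split in step~1 only lists $\ell(v)<k-1$ and $\ell(v)=k-1$), so your version is a slightly more careful rendition of the same argument rather than a different route.
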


\begin{proof}
To update the data structures, we need to perform the following operations. 
\begin{enumerate}
\item {\em Inform} the other endpoint $v$ of every edge $(v, x) \in E_x^-$ that  $x$ has moved down one level. If $\ell(v) < k-1$, then the other endpoint $v$ deletes the edge $(x, v)$ from $E_v(k)$ and inserts it back into $E_v(k-1)$, and appropriately  updates its weight $W_v$ and the value of $\delta W_v(k-1)$. If $\ell(v) = k-1$, then the other endpoint $v$ deletes the edge $(x, v)$ from $E_v(k)$ and inserts it back into $E_v^-$, and appropriately  updates its weight $W_v$ and the value of $\delta W_v(k-1)$. This takes $O(|N_x(0,k)|)$ time.
\item For every edge  $(x, v) \in E_x^-$, if $\ell(v) = k$,  then remove the edge $(x, v)$ from $E_x^-$ and insert it back into $E_x(k)$. This takes $O(|N_x(0,k)|)$ time. At the end of these operations, the list $E_x^-$ consists only of the edges $(x, v) \in E$ with $u \in N_x(0,k-1)$.
\item Update the weight $W_x$ as follows: $W_x = W_x + |N_x(0,k-1)| \cdot \left( (1+\epsilon)^{-k+1} - (1+\epsilon)^{-k}\right)$, and also set $\delta W_x(k-1) = |N_x(0,k-1)| \cdot \left( (1+\epsilon)^{-k+1} - (1+\epsilon)^{-k} \right)$. This  can be done in $O(1)$ time since at this stage $|N_x(0,k-1)| = |E^-_x|$, and since  $x$ maintains the size of each neighborhood lists.
\end{enumerate}
Thus,  it takes $O(|N_x(0,k)|)$ time to update all the relevant data structures during this event. 
\end{proof}

\begin{corollary}
\label{cor:update:time:down}
Consider an event where a  node $x \in V$ is moving down from level $k$ to  $(k-1)$ during an iteration of the {\sc While} loop in Figure~\ref{fig:dirty}. During this event, the time spent to update the relevant data structures is within a $O(1)$ factor of  the energy released by  $x$ at level $k$.
\end{corollary}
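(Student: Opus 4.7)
The plan is to combine Claim~\ref{cl:update:time:down} with the definition of an active, down-dirty node and equations~(\ref{eq:potential:down:new})~--~(\ref{eq:energy:down:new}) defining the stored potential/energy. First I would invoke Claim~\ref{cl:update:time:down} directly to bound the data-structure update time by $O(|N_x(0,k)|)$. The remaining task is to show that both $|N_x(0,k)|$ is small and the energy released by $x$ at level $k$ is large -- with matching dependence on $(1+\epsilon)^k$.

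To upper bound $|N_x(0,k)|$, I would observe that every edge $(x, v) \in E$ with $v \in N_x(0,k)$ currently has weight $(1+\epsilon)^{-k}$, and all such edges contribute to $W_x = W_{x \rightarrow k}$. Since step 4 of Figure~\ref{fig:dirty} only moves $x$ down when $x$ is active and down-dirty, the definition of an active down-dirty node gives $\ell(x) = k > 0$ and $W_x < 1 - \epsilon$. Hence $|N_x(0,k)| \cdot (1+\epsilon)^{-k} \leq W_x < 1$, so $|N_x(0,k)| < (1+\epsilon)^k$ and the update time is $O((1+\epsilon)^k)$.

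To lower bound the energy released by $x$ at level $k$, I would use~(\ref{eq:potential:down:new}): immediately before the jump, $\Phi(x, k) = 1 - W_{x \rightarrow k} = 1 - W_x > \epsilon$, while $\Phi(x, j) = 0$ for $j \neq k$. Immediately after the jump $x$ sits at level $k-1$ and its weight becomes $W_{x\rightarrow k-1}$; a one-line calculation shows $W_{x\rightarrow k-1} \leq (1+\epsilon)W_x < (1+\epsilon)(1-\epsilon) < 1$, so $x$ remains down-dirty at level $k-1$ and~(\ref{eq:potential:down:new}) now forces $\Phi(x,k) = 0$. Thus $x$ releases exactly $1 - W_x > \epsilon$ units of potential at level $k$, which by~(\ref{eq:energy:down:new}) amounts to strictly more than $\epsilon \cdot (1+\epsilon)^k \cdot \epsilon^{-1} = (1+\epsilon)^k$ units of energy at level $k$.

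Combining the two bounds gives an update time of $O((1+\epsilon)^k)$ against an energy release exceeding $(1+\epsilon)^k$, which is the desired $O(1)$ ratio. No significant obstacle is anticipated; the one bookkeeping point to be careful about is that the potential released at level $k$ is correctly read off as $1 - W_x$ regardless of what $x$'s status is after the jump, but the calculation $W_{x\rightarrow k-1} < 1$ above shows that $x$ stays down-dirty, so the ``state-change'' convention in the paragraph preceding Lemma~\ref{lm:work:jump} is not even triggered.
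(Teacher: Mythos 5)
Your proposal is correct and follows essentially the same two-step argument as the paper: bound the update time by $O(|N_x(0,k)|) = O((1+\epsilon)^k)$ via Claim~\ref{cl:update:time:down} and the fact that $W_{x\rightarrow k} < 1-\epsilon < 1$, then lower-bound the energy released at level $k$ by $\epsilon\cdot(1+\epsilon)^k\cdot\epsilon^{-1} = (1+\epsilon)^k$ from~(\ref{eq:potential:down:new})--(\ref{eq:energy:down:new}). The extra care you take to verify that $x$ remains down-dirty after the jump (so that the released potential at level $k$ really is $1-W_{x\rightarrow k}$) is a point the paper also makes, via Corollary~\ref{cor:active}, in the discussion following~(\ref{eq:potential:down:new}).
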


\begin{proof}
Since the node $x$ is moving down from level $k$ to level $k-1$, we have $W_{v \rightarrow k} < 1-\epsilon$. Hence, from~(\ref{eq:potential:down:new}) we infer that the node $x$ releases at least $\epsilon$ units of potential at level $k$ during this event. Since the {\em conversion rate} between energy and potential is $(1+\epsilon)^k \cdot \epsilon^{-1}$ at level $k$, the node $x$ releases at least $\epsilon \cdot (1+\epsilon)^k \cdot \epsilon^{-1} = (1+\epsilon)^k$ units of energy during this event. 

Next,  note that each edge $(v, x) \in E$ with $v \in N_x(0,k)$ contributes $(1+\epsilon)^{-k}$ units of weight towards $W_{x \rightarrow k}$. Since $W_{x \rightarrow k} < 1-\epsilon \leq 1$, we have $|N_x(0,k)| \cdot (1+\epsilon)^{-k} \leq 1$, which implies that $|N_x(0,k)| \leq (1+\epsilon)^k$. Thus, from Claim~\ref{cl:update:time:down} we conclude that it takes $O((1+\epsilon)^k)$ time to update the relevant data structures during this event, and this is clearly within a $O(1)$ factor of the energy released by  $x$.
\end{proof}

\section{Proof of Lemma~\ref{lm:insertion:deletion}}
\label{sec:lm:insertion:deletion}
Without any loss of generality, suppose that $\ell(u) = i \leq \ell(v) = j$ just before the insertion/deletion of the edge $(u, v)$. We will prove the lemma by separately considering two possible cases. 
\subsection{Case 1: The edge $(u, v)$ is being deleted.}

\noindent While analyzing this case, we use the term {\em event} to refer to the deletion of the edge $(u,v)$. We emphasize that this event does not include the execution of the {\sc While} loop in Figure~\ref{fig:dirty}. In particular, this event does not change the level of any node. We consider four possible scenarios, depending on whether the endpoint $x \in \{u, v\}$ is up-dirty or down-dirty  {\em before} and/or {\em after} the event. We will show that under all these scenarios the total energy absorbed by any endpoint $x \in \{u, v\}$ due to this event is at most $3\epsilon^{-1}$. 

\medskip
\noindent  {\bf Scenario 1:} {\em The node $x \in \{u, v\}$ is  down-dirty both before and after the event.} In this scenario, from~(\ref{eq:potential:down:new}), we conclude that the node $x$ absorbs $w(u, v) = (1+\epsilon)^{-\max(\ell(u), \ell(v))} = (1+\epsilon)^{-j}$ units of potential at level $\ell(x)$, and absorbs/releases zero unit of potential at every other level. Hence, the total energy absorbed by the node $x \in \{u, v\}$ due to this event is  $= (1+\epsilon)^{-j} \cdot (1+\epsilon)^{\ell(x)}  \cdot \epsilon^{-1} \leq \epsilon^{-1}$.

\medskip
\noindent
{\bf Scenario 2:} {\em The node $x \in \{u, v\}$ is up-dirty both before and after the event.} For every level $0 \leq k \leq L$, we observe that the weight $W_{x \rightarrow k}$ decreases by $(1+\epsilon)^{-\max(k, j)}$ due to this event. In particular, for every level $0 \leq k < L$, the decrease in the value $W_{x \rightarrow k}$ is  at least the decrease in the value of $W_{x \rightarrow k+1}$. This observation, along with~(\ref{eq:potential:up:new}), implies that due to this event the value of $\Phi(x, k)$ can only decrease at every level $0 \leq k \leq L$. In particular, the node $x$ absorbs zero potential and zero energy due to this event.

\medskip
\noindent
{\bf Scenario 3:} {\em The node $x \in \{u, v\}$ is up-dirty before the event and down-dirty after the event.} From~(\ref{eq:potential:down:new}), we conclude that after the event we have $\Phi(x, k) = 0$ at every level $k \neq \ell(x)$ and $\Phi(x, \ell(x)) = 1 - W_{x \rightarrow \ell(x)} = 1 - W_x$. Accordingly, due to this event the node $x$ absorbs  zero potential and zero energy at every level $k \neq \ell(x)$. Furthermore, due to this event, the potential absorbed by the node $x$ at level $\ell(x)$  is upper bounded by the decrease in the weight $W_x$, which is turn is equal to the weight $w(u, v) = (1+\epsilon)^{-\max(\ell(u), \ell(v))} = (1+\epsilon)^{-j}$. Hence, the total energy absorbed by  $x$ due to this event is at most $(1+\epsilon)^{-j} \cdot (1+\epsilon)^{\ell(x)} \cdot \epsilon^{-1} \leq \epsilon^{-1}$. 
 
\medskip
\noindent
{\bf Scenario 4:} {\em The node $x \in \{u, v\}$ is down-dirty before the event and up-dirty after the event.} As the event  decreases the weight $W_x$, this scenario  never takes place.

\subsection{Case 2: The edge $(u, v)$ is being inserted.}

\noindent While analyzing this case, we use the term {\em event} to refer to the insertion of the edge $(u,v)$. We emphasize that this event does not include the execution of the {\sc While} loop in Figure~\ref{fig:dirty}. In particular, this event does not change the level of any node.  We consider four possible scenarios, depending on whether the endpoint $x \in \{u, v\}$ is up-dirty or down-dirty  {\em before} and/or {\em after} the event. We will show that under all these scenarios the total energy absorbed by any endpoint $x \in \{u, v\}$ due to this event is at most $3\epsilon^{-1}$. 

\medskip
\noindent {\bf Scenario 1:} {\em The node $x \in \{u, v\}$ is down-dirty both before and after the event.} From~(\ref{eq:potential:down:new}), we conclude that the node $x$ releases $w(u, v) = (1+\epsilon)^{-\max(\ell(u), \ell(v))} = (1+\epsilon)^{-j}$ units of potential at level $\ell(x)$, and absorbs/releases zero  potential at every other level. Thus, according to our convention,   the node $x$ absorbs zero  potential  at every level, which means that the node $x$ absorbs zero  energy overall due to this event.
 
\medskip
\noindent {\bf Scenario 2:} {\em The node $x \in \{u, v\}$ is up-dirty both before and after the event.} In this scenario, from Invariant~\ref{inv} we first recall that the node $x$ is passive before the event. Hence, by definition we have $W_{x \rightarrow k} \leq W_{x \rightarrow \ell(x)+1} < 1$ at every level $k > \ell(x)$ before the event.  From~(\ref{eq:potential:up:new}), we infer that:
\begin{eqnarray}
\label{eq:veryimp:1}
\text{Just before the event, we have } \Phi(x, k) = \begin{cases} 0 & \text{ at every level } k < \ell(x); \\ 
  W_x - 1 & \text{ at level } k = \ell(x); \\
  0 & \text{ at every level } k > \ell(x).
  \end{cases}
\end{eqnarray}
Just after the event, we can no longer claim that the node $x$ remains passive. Accordingly, let $k_x$ denote the largest level  $k \in \{\ell(x), \ldots, L\}$ where $W_{x \rightarrow k} \geq 1$ after the event. Note that $k_x$ is the value of $\ell(x, \uparrow)$ after the event. Hence, from~(\ref{eq:potential:up:new}) we infer that:
\begin{eqnarray}
\label{eq:veryimp:2}
\text{Just after the event, we have } \Phi(x, k) = \begin{cases} 0 & \text{ at every level } k < \ell(x); \\ 
  W_{k} - W_{k+1} & \text{ at level }  \ell(x) \leq k < k_x; \\
  W_{k_x} - 1 & \text{ at level } k = k_x; \\
  0 & \text{ at every level } k > k_x.
  \end{cases}
\end{eqnarray}
From~(\ref{eq:veryimp:1}) and~(\ref{eq:veryimp:2}), we infer that $\Phi(x, k) = 0$ at every level $k < \ell(x)$ and at every level $k > k_x$ both before and after the event. In other words, due to the event the node $x$ absorbs nonzero potential only at levels $\ell(x) \leq k \leq k_x$. From Corollary~\ref{cor:total:potential:up:new} we infer that the total potential absorbed by the node $x$ across all the levels is at most the increase in the value of $W_x$ due to this event, which in turn is equal to the weight $w(u, v) = (1+\epsilon)^{-\max(\ell(u), \ell(v))} = (1+\epsilon)^{-j}$. Since the node $x$ absorbs this potential  only at levels $\ell(x) \leq k \leq k_x$, the total energy absorbed by the node $x$ due to this event is upper bounded by $(1+\epsilon)^{-j} \cdot (1+\epsilon)^{k_x} \cdot \epsilon^{-1} \leq (1+\epsilon)^{-\ell(x)} \cdot (1+\epsilon)^{k_x} \cdot \epsilon^{-1}$. We  will show that:
\begin{equation}
\label{eq:veryimp:3}
(1+\epsilon)^{k_x - \ell(x)} \leq 3
\end{equation}
The above inequality implies that the total energy absorbed by the node $x$ due to this event is at most $(1+\epsilon)^{-\ell(x)} \cdot (1+\epsilon)^{k_x} \cdot \epsilon^{-1} \leq 3 \epsilon^{-1}$. Hence, it now remains to prove~(\ref{eq:veryimp:3}). Towards this end, we first observe that the proof trivially holds if $k_x = \ell(x)$. Thus, for the rest of the proof we assume that $k_x \geq \ell(x)+1$. Note that there must be at least one edge $(x, y) \in E$ with $\ell(y) \leq \ell(x)$ just before the event, for otherwise the value of $W_x$ will not change even if we increase the level of $x$ by one and hence the node $x$ will not be passive before the event (thereby contradicting Invariant~\ref{inv}). Now, consider any edge $(x, y) \in E$ with $\ell(y) \leq \ell(x)$. If we were to move the node $x$ up from level $\ell(x)+1$ to level $k_x$, then the value of $w(x, y)$ will decrease by $(1+\epsilon)^{-\ell(x)-1} - (1+\epsilon)^{-k_x}$. Further, no edge incident on $x$ would increase its weight if we were move the node $x$ up from one level to another. These two observations together imply that:
\begin{eqnarray}
\label{eq:veryimp:4}
W_{x \rightarrow k_x}   \leq W_{x \rightarrow \ell(x)+1}  - \left((1+\epsilon)^{-\ell(x)-1} - (1+\epsilon)^{-k_x}\right)    \text{ before the event.}
\end{eqnarray}
Since the node $x$ is up-dirty and passive just before the event (by  Invariant~\ref{inv}), we have $W_{x \rightarrow \ell(x)+1} < 1$ at that time.  Combining this observation with~(\ref{eq:veryimp:4}), we get:
\begin{eqnarray}
\label{eq:veryimp:5}
W_{x \rightarrow k_x} \leq 1 - \left((1+\epsilon)^{-\ell(x)-1} - (1+\epsilon)^{-k_x}\right) \text{ before the event.}
\end{eqnarray}
The event (insertion of an edge incident on $x$) increases the values of $W_{x \rightarrow k_x}$ by at most $(1+\epsilon)^{-k_x}$. Combining this observation with~(\ref{eq:veryimp:5}), we get:
\begin{eqnarray}
\label{eq:veryimp:6}
W_{x \rightarrow k_x} \leq 1 - \left((1+\epsilon)^{-\ell(x)-1} - (1+\epsilon)^{-k_x}\right) + (1+\epsilon)^{-k_x} \text{ after the event.}
\end{eqnarray}
By definition of the level $k_x$, we have $1 \leq W_{x \rightarrow k_x}$ after the event. Hence, from~(\ref{eq:veryimp:6}) we infer that:
$$1 \leq 1 + 2 (1+\epsilon)^{-k_x} - (1+\epsilon)^{-\ell(x)-1}.$$ 
Rearranging the terms in the above inequality, we get $(1+\epsilon)^{-\ell(x)-1} \leq  2 (1+\epsilon)^{-k_x}$, which implies that $(1+\epsilon)^{k_x-\ell(x)} \leq  2(1+\epsilon) \leq 3$. This concludes the proof of~(\ref{eq:veryimp:3}).

\medskip
\noindent {\bf Scenario 3:} {\em The node $x \in \{u, v\}$ is down-dirty before the event and up-dirty after the event.}  The argument here is very similar to the argument under Scenario 2 and is therefore omitted.

\medskip
\noindent
{\bf Scenario 4:} {\em The node $x \in \{u, v\}$ is up-dirty before the event and down-dirty after the event.} As the event (the insertion of an edge incident on $x$) increases  $W_x$, this scenario never takes place.

\section{Proof of Lemma~\ref{lm:transfer:up:jump}}
\label{sec:lm:transfer:up:jump}

We prove part (1) and part (2) of the lemma respectively in Section~\ref{sec:lm:transfer:up:jump:part1} and in Section~\ref{sec:lm:transfer:up:jump:part2}.

\subsection{Proof of part (1) of Lemma~\ref{lm:transfer:up:jump}}
\label{sec:lm:transfer:up:jump:part1}

The node $x$ is active and up-dirty just before it moves up from level $k$ to level $k+1$. By Corollary~\ref{cor:active}, we have $W_{x \rightarrow k} \geq W_{x \rightarrow k+1} \geq 1$. From~(\ref{eq:potential:up:new}), we infer that during its up-jump from level $k$ to level $k+1$, the node $x$ releases $\left(W_{x \rightarrow k} - W_{x \rightarrow k+1}\right)$ units of potential  at level $k$ and absorbs/releases zero unit of potential at every other level $t \neq k$. Thus, we have $\delta(x, t, \leftarrow) = 0$ at every level $0 \leq t \leq L$. In particular, the node $x$ satisfies the condition stated in part (1) of Lemma~\ref{lm:transfer:up:jump}.

Next, consider any node $v \notin  \{x \} \cup N_x$. For such a node $v$, note that the value of $W_{v \rightarrow t}$ remains unchanged for all $0 \leq t \leq L$ as the node $x$ changes its level. Hence, from~(\ref{eq:potential:up:new}) and~(\ref{eq:potential:down:new}) we infer that $\delta(v, t, \leftarrow) = 0$ for all $0 \leq t \leq L$. In particular, such a node $v$ satisfies part (1) of Lemma~\ref{lm:transfer:up:jump}.

For the rest of the proof, we fix a node $v \in N_x$ and consider two possible cases depending on the state of $v$ after the up-jump of $x$. We show that in each of these cases the node $v$  satisfies part (1) of Lemma~\ref{lm:transfer:up:jump}.

\medskip
\noindent {\bf Case 1.} {\em The node $v$ is up-dirty after the up-jump of $x$.}

\medskip
\noindent In this case,  as the node $x$ moves up from level $k$ to level $k+1$, it can only decrease the weights $w(x,v)$ and $W_v$. Hence, the node $v$ must also have been up-dirty just before the up-jump of $x$. Next, we make the following observations. Consider the up-jump of $x$ from level $k$ to level $k+1$. Due to this event:
\begin{itemize}
\item  The value of $W_{v \rightarrow t}$ remains unchanged for every $t \in \{k+1, \ldots, L\}$.
\item  The value of $W_{v \rightarrow t}$ decreases by exactly the same amount for every $t \in \{0, \ldots, k\}$.
\end{itemize}
We now fork into two possible sub-cases depending on the level of $v$. 

\medskip
\noindent {\em Case 1(a). We have $\ell(v) \geq k+1$.} Here, the above two observations, along with~(\ref{eq:potential:up:new}), imply that during the up-jump of $x$ the node $v$ absorbs/releases zero unit of potential at every level $0 \leq t \leq L$.

\medskip
\noindent {\em Case 1(b). We have $\ell(v) \leq k$.} Here, the above two observations, along with~(\ref{eq:potential:up:new}), imply that during the up-jump of $x$ the node $v$ absorbs zero unit of potential at every level $0 \leq t \leq L$. But unlike in Case 1(a), here it is possible that the node $v$ might {\em release} nonzero units of potentials at level $t = k$.

\medskip
\noindent
In both Case 1(a) and Case 1(b), we derived  that $\delta(v, t, \leftarrow) = 0$ at every level $t \in \{0, \ldots, L\}$. In particular, this means that  node $v$ satisfies part (1) of Lemma~\ref{lm:transfer:up:jump}.

\medskip
\noindent {\bf Case 2.} {\em The node $v$ is down-dirty after the up-jump of $x$.}

\medskip
\noindent
Consider any level $0 \leq t \leq L$ such that $\delta(v, t, \leftarrow) > 0$.  Since the node $v$ does not change its own level during the up-jump of $x$ and since it is down-dirty after the up-jump of $x$, from~(\ref{eq:potential:down:new}) we conclude that $t = \ell(v)$. Next,   if it were the case that  $t = \ell(v) \geq k+1$, then the value of $W_{v \rightarrow t}$  would remain unchanged as the node $x$ moves up from level $k$ to level $k+1$, and since the node $v$ is down-dirty after the up-jump of $x$, from~(\ref{eq:potential:down:new}) we would be able to derive that $\delta(v, t, \leftarrow) = 0$.  Accordingly, we get $k \geq t = \ell(v)$, and derive that the node $v$ satisfies part (1) of Lemma~\ref{lm:transfer:up:jump}.

\subsection{Proof of part (2) of Lemma~\ref{lm:transfer:up:jump}}
\label{sec:lm:transfer:up:jump:part2}

Let $A_x \subseteq N_x(0,k)$ denote the subset of neighbors $v$ of $x$ such that  $\ell(v) \leq k$ and  $v$ is down-dirty immediately after the up-jump of $x$ under consideration. Part (1) of  Lemma~\ref{lm:transfer:up:jump} implies that $A_x$ is precisely the set of nodes $v$ with $\delta(v, \leftarrow) = \sum_{t=0}^L \delta(v,t, \leftarrow) > 0$. Note that  for all nodes $v \in A_x$, each of the weights $W_v$ and $w(x, v)$ decreases  by  $(1+\epsilon)^{-k} - (1+\epsilon)^{-(k+1)}$ as  $x$ moves up from level $k$ to level $k+1$. Since the node $v \in A_x$ is down-dirty after the up-jump of $x$, from~(\ref{eq:potential:down:new}) we infer that $\delta(v, \leftarrow) = \delta(v, \ell(v), \leftarrow) \leq (1+\epsilon)^{-k} - (1+\epsilon)^{-(k+1)}$. Using this observation, we now get:
\begin{eqnarray}
\nonumber 
\sum_{v, t} \delta(v, t, \leftarrow) & = & \sum_{v \in A_x} \delta(v, \leftarrow)  \leq  |A_x| \cdot  \left( (1+\epsilon)^{-k} - (1+\epsilon)^{-(k+1)}\right) \nonumber  \\
& \leq & |N_x(0,k)| \cdot \left( (1+\epsilon)^{-k} - (1+\epsilon)^{-(k+1)}\right)  =  W_{x \rightarrow k} - W_{x \rightarrow k+1} \label{eq:imp:1}
\end{eqnarray}
In the above derivation, the last step holds for the following reason. As the node $x$ moves up from level $k$ to level $k+1$, the only edges incident to $x$ that change their weights are the ones whose other endpoints belong to $N_x(0, k)$. Furthermore, each such edge $(x, v) \in E$ with $v \in N_x(0,k)$ changes its weight from $(1+\epsilon)^{-k}$ to $(1+\epsilon)^{-(k+1)}$ as the node $x$ moves up from level $k$ to level $k+1$. 

Since the node $x$ was an active and up-dirty node at level $k$ immediately before moving up to level $k+1$, by Corollary~\ref{cor:active} we have $W_{x \rightarrow k} \geq W_{x \rightarrow k+1} \geq 1$. Thus, from~(\ref{eq:potential:up:new}) we infer that the  potential released by $x$ at level $k$ during its up-jump is equal to $\delta(x, k, \rightarrow) = W_{v \rightarrow k} -  W_{v \rightarrow k+1}$. Hence, using~(\ref{eq:imp:1}) we now get   $\sum_{v, t} \delta(v, t, \leftarrow) \leq \delta(x, k, \rightarrow)$.  This concludes the proof.

\section{Proof of Lemma~\ref{lm:transfer:down:jump}}
\label{sec:lm:transfer:down:jump}

We prove part (1) and part (2) of the lemma respectively in Section~\ref{sec:lm:transfer:down:jump:part1} and in Section~\ref{sec:lm:transfer:down:jump:part2}.

\subsection{Proof of part (1) of Lemma~\ref{lm:transfer:down:jump}}
\label{sec:lm:transfer:down:jump:part1}

Since the node $x$ is {\em active} and down-dirty just before it moves down from level $k$ to level $k-1$,  Corollary~\ref{cor:active} implies that $W_{x \rightarrow k} \leq W_{x \rightarrow k-1} < 1$. From~(\ref{eq:potential:down:new}), we infer that while moving from level $k$ to level $k-1$, the node $x$ {\em releases} $\left(1 - W_{x \rightarrow k}\right)$ units of potential at level $k$ and absorbs $(1-W_{x \rightarrow k-1})$ units of potential at level $k-1$. Furthermore, the node $x$ releases/absorbs zero potential at every other level during this event. We conclude that $\delta(x, t, \leftarrow) > 0$ only if $t = k-1$. In particular, the node $x$ satisfies part (1) of Lemma~\ref{lm:transfer:down:jump}. 
  
 Next, consider any node $v \notin  \{x \} \cup N_x$. For such a node $v$, note that the value of $W_{v \rightarrow t}$ remains unchanged for all $0 \leq t \leq L$ as the node $x$ changes its level. Hence, from~(\ref{eq:potential:up:new}) and~(\ref{eq:potential:down:new}) we infer that $\delta(v, t, \leftarrow) = 0$ for all $0 \leq t \leq L$. In particular, such a node $v$ satisfies part (1) of Lemma~\ref{lm:transfer:down:jump}.
  
For the rest of the proof, we fix a node $v \in N_x$. We now consider four possible cases depending on the state and the level of $v$ just before the down-jump of $x$, and we show that in each of these cases the node $v$ satisfies part (1) of Lemma~\ref{lm:transfer:down:jump}.

\medskip
\noindent {\em Case 1. The node $v$ is up-dirty and $\ell(v) \geq k$ just before the down-jump of $x$.} 

\medskip
\noindent
In this case, for every level $t \geq k$ the value of $W_{v \rightarrow t}$ remains unchanged during the down-jump of $x$. Hence, from~(\ref{eq:potential:up:new}) we infer that $\delta(v,t, \leftarrow) = 0$ at every level $0 \leq t \leq L$. In particular, such a node $v$ satisfies part (1) of Lemma~\ref{lm:transfer:down:jump}.

\medskip
\noindent {\em Case 2. The node $v$ is up-dirty and $\ell(v) < k$ just before the down-jump of $x$.} 

\medskip
\noindent 
In this case, we first note that the node $v$ remains up-dirty just after the down-jump of $x$, since the weights  $w(x, v)$ and $W_v$ can only increase as the node $x$ moves down one level. We also note that the value of $W_{v \rightarrow t}$ remains unchanged at every level $t \geq k > \ell(v)$ during the down-jump of $x$. Hence, from~(\ref{eq:potential:up:new})  we infer that $\delta(v, t, \leftarrow) = 0$ at every level $t \geq k$. In particular, such a node $v$ satisfies part (1) of Lemma~\ref{lm:transfer:down:jump}.

\medskip
\noindent {\em Case 3. The node $v$ is down-dirty and $\ell(v) \geq k$ just before the down-jump of $x$.} 

\medskip
\noindent In this case, the value of $W_v$ remains unchanged as the node $x$ moves down from level $k$ to level $k-1$. Hence, from~(\ref{eq:potential:down:new}) we infer that $\delta(v, t, \leftarrow) = 0$ at every level $0 \leq t \leq L$. In particular, such a node $v$ satisfies part (1) of Lemma~\ref{lm:transfer:down:jump}.

\medskip
\noindent {\em Case 4. The node $v$ is down-dirty and $\ell(v) < k$ just before the down-jump of $x$.} 

\medskip
\noindent We consider two possible sub-cases, depending on the state of $v$ after the down-jump of $x$.

\medskip
\noindent {\em Case 4(a). The node $v$ is down-dirty before and up-dirty after the down-jump of $x$, and $\ell(v) < k$.}

\medskip
\noindent In this sub-case, we note that the value of $W_{v \rightarrow t}$ remains unchanged at every level $t \geq k$ as the node $x$ moves down from level $k$ to level $k-1$. Also, note that just before the down-jump of $x$, at every level $t \geq k > \ell(v)$ we have $W_{v \rightarrow t} \leq W_{v \rightarrow \ell(v)} < 1$. These two observations together imply that at every level $t \geq k > \ell(v)$, we have $W_{v \rightarrow t} < 1$ both before and after the down-jump of $x$. Since the node $v$ is down-dirty before and up-dirty after the down-jump of $x$, from~(\ref{eq:potential:up:new}) we infer that $\delta(v, t, \leftarrow) = 0$ at every level $t \geq k$. Thus, such a node $v$ satisfies part (1) of Lemma~\ref{lm:transfer:down:jump}.

\medskip
\noindent {\em Case 4(b). The node $v$ is down-dirty both before and after the down-jump of $x$, and $\ell(v) < k$.}

\medskip
\noindent In this sub-case, as the node $x$ moves down from level $k$ to level $k-1$, the weights $w(u,v)$ and $W_v = W_{v \rightarrow \ell(v)}$ increase. However, since the node $v$ remains down-dirty even after the down-jump of $x$, we conclude that the weight $W_v$ does not exceed $1$ during this event. From~(\ref{eq:potential:down:new}), we now infer that the node $v$ absorbs nonzero units of potential at level $\ell(v) < k$ and absorbs/releases zero unit of potential at every other level. Specifically, we have $\delta(v, t, \rightarrow) > 0$ at level $t = \ell(v) < k$ and $\delta(v, t, \rightarrow) = 0$ at every other level $t \neq \ell(v)$.  Such a node $v$ clearly satisfies part (1) of Lemma~\ref{lm:transfer:down:jump}.

\subsection{Proof of part (2) of Lemma~\ref{lm:transfer:down:jump}}
\label{sec:lm:transfer:down:jump:part2}

For every node $v \in V$, define $\delta(v, \leftarrow) = \sum_{t=0}^L \delta(v, t, \leftarrow)$ to be the total potential absorbed by $v$  during the down-jump of $x$. We first show that every node $v \in V \setminus \{x\}$ with $\delta(v, \leftarrow) > 0$ must be at a level $\ell(v) < k$.

\begin{claim}
\label{cl:lm:transfer:down:jump:1}
For every node $v \in V \setminus \{x\}$, we have $\delta(v, \leftarrow) > 0$ only if $v \in N_x(0, k-1)$.
\end{claim}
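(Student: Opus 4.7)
The plan is to derive this claim as a direct consequence of the case analysis already carried out in the proof of part (1) of Lemma~\ref{lm:transfer:down:jump}, together with one extra observation about which levels store nonzero potential for a node $v$ with $\ell(v) \geq k$. Fix any $v \in V \setminus \{x\}$ with $\delta(v, \leftarrow) > 0$; I must show that $v \in N_x$ and $\ell(v) \leq k-1$. First, if $v \notin N_x \cup \{x\}$, then $\ell(x)$ does not enter the formula for $W_{v \rightarrow t}$ at any level $t$, so none of $v$'s potentials can change during the down-jump of $x$, contradicting $\delta(v, \leftarrow) > 0$. Hence $v \in N_x$, and it remains to rule out $\ell(v) \geq k$.

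The crucial observation is that when $x$ moves from level $k$ to level $k-1$, the weight $w(x, v) = (1+\epsilon)^{-\max(\ell(x), \ell(v))}$ equals $(1+\epsilon)^{-\ell(v)}$ both before and after the event (since $\ell(v) \geq k > k-1$), so $W_v$ is unchanged. More generally, for every level $t \geq \ell(v) \geq k$, the contribution of $x$ to $W_{v \rightarrow t}$ is $(1+\epsilon)^{-\max(\ell(x), t)} = (1+\epsilon)^{-t}$ both before and after, so $W_{v \rightarrow t}$ is unchanged at every such level. Because $W_v$ itself does not change, the node $v$ cannot switch between being up-dirty and being down-dirty during this event either.

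Now I inspect~(\ref{eq:potential:up:new}) and~(\ref{eq:potential:down:new}) to see that the potentials $\Phi(v, t)$ of such a $v$ (whether up-dirty or down-dirty) are supported only at levels $t \geq \ell(v)$, and moreover depend only on the values $\{W_{v \rightarrow t'} : t' \geq \ell(v)\}$, all of which we have just observed to be unaffected by the down-jump of $x$. Consequently $\Phi(v, t)$ is unchanged at every level, which forces $\delta(v, \leftarrow) = 0$, the desired contradiction. Therefore $\ell(v) \leq k-1$, and $v \in N_x(0, k-1)$ as claimed.

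The only subtle point — and really the one place where the argument could go wrong — is noticing that although $W_{v \rightarrow t}$ does change at levels $t \leq k-1$ (the contribution of $x$ there jumps from $(1+\epsilon)^{-k}$ to $(1+\epsilon)^{-(k-1)}$), these changes are invisible to the node potentials precisely because, when $\ell(v) \geq k$, no potential of $v$ is stored below $\ell(v)$. This is exactly the structural feature of the potential definitions~(\ref{eq:potential:up:new}) and~(\ref{eq:potential:down:new}) that was implicitly exploited in Cases 1 and 3 of the proof of part (1); the present claim is just the clean statement of what those two cases cover.
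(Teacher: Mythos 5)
Your proof is correct and follows essentially the same route as the paper's: both arguments reduce to the observation that the potentials of a node $v$ are supported only at levels $t \geq \ell(v)$, while the down-jump of $x$ leaves $W_{v \rightarrow t}$ unchanged at every level $t \geq k$. The paper packages this slightly differently (it cites part (1) of Lemma~\ref{lm:transfer:down:jump} to get that absorption occurs only at a level $t' < k$ and then argues $t' \geq \ell(v)$ by the support structure of~(\ref{eq:potential:up:new}) and~(\ref{eq:potential:down:new})), whereas you argue the contrapositive directly for a neighbor at level $\geq k$ — including the necessary check that such a $v$ cannot change state since $W_v$ is unchanged — but the substance is identical.
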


\begin{proof}
Consider any node $v \in V \setminus \{x\}$ with $\delta(v, \leftarrow) > 0$. Since $\delta(v, \leftarrow) = \sum_{t=0}^L \delta(v, t, \leftarrow)$, there must be a level $0 \leq t' \leq L$ such that $\delta(v, t', \leftarrow) > 0$. Part (1) of Lemma~\ref{lm:transfer:down:jump} implies that $k > t'$. To prove the claim, all we need to show is that $t' \geq \ell(v)$. To see why this inequality holds, for the sake of contradiction suppose that $\ell(v) > t'$. Then from~(\ref{eq:potential:up:new}) and~(\ref{eq:potential:down:new}) we get $\Phi(v, t') = 0$ both before and after the down-jump of $x$. This implies that $\delta(v, t',\leftarrow) = 0$, thereby leading to a  contradiction.
\end{proof}

For notational convenience,  for the rest of this section we define  $\lambda_{(x, v)} = (1+\epsilon)^{-(k-1)} - (1+\epsilon)^{-k}$ for every node $v \in N_x(0, k-1)$. Now, consider any such node $v \in N_x(0, k-1)$ with $\delta(v, \leftarrow) > 0$.  As the node $x$ moves down from level $k$ to level $k-1$, the weight  $w(x, v)$ increases by  $\lambda_{(x,v)}$.   Since the node $x$ was active and down-dirty just before making the down-jump from level $k$,  by Corollary~\ref{cor:active} we have $W_{x \rightarrow k} \leq W_{x \rightarrow k-1} < 1$.  Hence, from~(\ref{eq:potential:down:new}) we infer that the  potential released by $x$ at level $k$ during its down-jump is equal to  $\delta(x, k, \rightarrow) = 1 - W_{x \rightarrow k}$. During the same down-jump, the  potential absorbed by   the node  $x$ at level $k-1$ is equal to $\delta(x, k-1, \leftarrow) = 1 - W_{x \rightarrow k-1}$. Thus, we get: 
\begin{equation}
\label{eq:imp:11}
\delta(x, k, \rightarrow) - \delta(x, k-1, \leftarrow) = W_{x \rightarrow k-1} - W_{x \rightarrow k} = \sum_{v \in N_x(0, k-1)} \lambda_{(x, v)}
\end{equation} 
In the above derivation, the last equality holds since  only the edges $(x, v) \in E$ with $v \in N_x(0,k-1)$ change their weights as $x$ moves down from level $k$ to level $k-1$, and since $W_x = \sum_{v \in N_x} w(x, v)$. 

Since the node $x$ is active and down-dirty just before its down-jump from level $k$ to  $k-1$, by Corollary~\ref{cor:active} we have $W_{x \rightarrow k} \leq W_{x \rightarrow k-1} < 1$.  Thus, from~(\ref{eq:potential:down:new}) we infer  that while moving down from level $k$ to level $k-1$  the node $x$ absorbs nonzero units of potential only at level $k-1$. Specifically, we have $\delta(x, \leftarrow) = \sum_{t} \delta(x, t, \leftarrow) = \delta(x, k-1, \leftarrow)$. Hence, from~(\ref{eq:imp:11}) we get:
\begin{equation}
\label{eq:imp:2}
\delta(x, k, \rightarrow) - \delta(x, \leftarrow)  = \sum_{v \in N_x(0, k-1)} \lambda_{(x, v)}
\end{equation}
We will prove the following inequality.
\begin{equation}
\label{eq:cl:burn:down:1}
\lambda_{(x,v)} \geq \delta(v, \leftarrow) \text{ for every node } v \in N_x(0,k-1).
\end{equation}
Note that if we sum~(\ref{eq:cl:burn:down:1}) over all the nodes in $N_x(0,k-1)$, then~(\ref{eq:imp:2}) and Claim~\ref{cl:lm:transfer:down:jump:1} would give us:
$$\delta(x, k, \rightarrow) - \delta(x,  \leftarrow) = \sum_{v \in N_x(0,k-1)} \lambda_{(x,v)}  \geq \sum_{v \in N_x(0,k-1)} \delta(v, \leftarrow) = \sum_{v \in N_x: \delta(v, \leftarrow) > 0}  \delta(v, \leftarrow) = \sum_{v \in N_x} \delta(v, \leftarrow)$$
Rearranging the terms in the above inequality, we would get: 
$$\delta(x, k, \rightarrow) \leq \delta(x, \leftarrow) + \sum_{v \in N_x} \delta(v, \leftarrow) = \sum_{v \in N_x \cup \{x\}} \delta(v, \leftarrow) =\sum_{v \in N_x \cup \{x\}} \sum_{t=0}^L \delta(v, t, \leftarrow)  = \sum_{v, t} \delta(v, t, \leftarrow).$$ 
In the above derivation, the last step follows from part (1) of Lemma~\ref{lm:transfer:down:jump}. Note that this derivation leads us to  the proof of part (2) of Lemma~\ref{lm:transfer:down:jump}. Thus, it remains to show  why inequality~(\ref{eq:cl:burn:down:1}) holds, and accordingly, we focus on justifying inequality~(\ref{eq:cl:burn:down:1}) for the rest  this section. Towards this end, we fix any node $v \in N_x(0,k-1)$, and consider three possible cases. We show that the node $v$ satisfies inequality~(\ref{eq:cl:burn:down:1}) in each of these cases.

\subsubsection{Case 1:  $v$ is down-dirty  after the down-jump of $x$.} In this case, we infer that the node $v$ was down-dirty immediately before the down-jump of $x$ as well, for as the node $x$ moves down from level $k$ to level $k-1$, it can only increase the weight $W_v$.  Since the down-jump of $x$ can only increase the weight $W_v$, and since $v$ remains down-dirty both before and after the down-jump of $x$, from~(\ref{eq:potential:down:new}) we infer that $\delta(v, \leftarrow) = 0 \leq \lambda_{(x,v)}$.

\subsubsection{Case 2:  $v$ is up-dirty  before and up-dirty after the down-jump of $x$.} For notational convenience, we use the symbols $\tau_0$ and $\tau_1$ to respectively denote the time-instants immediately before and immediately after the down-jump of $x$. So the node $v$ is up-dirty both at time $\tau_0$ and at time $\tau_1$. Recall that $\ell(v) < k$. We  bound the sum $\delta(v, \leftarrow) = \sum_{t = 0}^{L} \delta(v, t, \leftarrow)$ by considering four possible ranges of values for $t$.
\begin{itemize}
\item (I) $0 \leq t < \ell(v)$. Here, from~(\ref{eq:potential:up:new}) we infer that $\Phi(v, t) = 0$ both at time $\tau_0$ and at time $\tau_1$. Hence, it follows that $\delta(v, t, \leftarrow) = 0$.
\item (II) $\ell(v) \leq t < k-1$. Here, we observe that the values of $W_{v \rightarrow t}$ and $W_{v \rightarrow t+1}$ {\em increase} by exactly the same amount as the node $x$ moves down from level $k$ to level $k-1$. Thus, from~(\ref{eq:potential:up:new}) we derive that $\delta(v, t, \leftarrow) = 0$. 
\item (III) $t=k-1$. Here, the value of $W_{v \rightarrow t}$ increases {\em exactly} by $\lambda_{(x, v)}$ as the node $x$ moves down from level $k$ to level $k-1$, but the value of $W_{v \rightarrow t+1}$ remains unchanged.  Thus, from~(\ref{eq:potential:up:new}) we derive that $\delta(v, t, \leftarrow) \leq \lambda_{(x,v)}$.
\item (IV) $k \leq t < L$. Here,  the values of $W_{v \rightarrow t}$ and $W_{v \rightarrow t+1}$ do not change as the node $x$ moves down from level $k$ to level $k-1$.  Thus, from~(\ref{eq:potential:up:new}) we derive that  $\delta(v, t, \leftarrow) = 0$.
\end{itemize}
To summarize, the above four observations imply that $\delta(v, \leftarrow) = \sum_{t=0}^L \delta(v, t, \leftarrow) \leq \lambda_{(x, v)}$.

\subsubsection{Case 3:   $v$ is down-dirty   before and up-dirty  after the down-jump of $x$.}

As in Case 2, for notational convenience we use the symbols $\tau_0$ and $\tau_1$ to respectively denote the time-instants immediately before and immediately after the down-jump of $x$. So the node $v$ is down-dirty  at time $\tau_0$ and up-dirty at time $\tau_1$. 

We use the superscripts $(0)$ and $(1)$  to respectively denote the values of any quantity at times $\tau_0$ and $\tau_1$. Consider the down-jump of $x$ from level $k$ to level $k-1$. Since the node $v$ is down-dirty before and up-dirty after this event,  it absorbs $\Phi^{(1)}(v, t)$ units of potential at every level $t$ due to this event. Hence, we get: 
\begin{equation}
\label{eq:cl:burn:down:7}
\delta(v, \leftarrow) = \sum_{t = 0}^{L} \delta(v, t, \leftarrow) = \sum_{t=0}^L \Phi^{(1)}(v, t) = \Phi^{(1)}(v) = W^{(1)}_{v} - 1 < W^{(1)}(v) - W^{(0)}(v).
\end{equation}
In the above derivation, the third step  follows from Corollary~\ref{cor:total:potential:up:new} and the fact that the node $v$ is up-dirty at time $\tau_1$. The last step holds  since the node $v$ is down-dirty at time $\tau_0$. In words, the value of $\delta(v, \leftarrow)$ is at most the increase in the weight  $W_v$ due to the down-jump of $x$. Note that the latter quantity is  equal to the increase in the weight $w(x, v)$ due to the down-jump of $x$, which in turn  is equal to $\lambda_{(x, v)}$ by definition. Thus, we conclude that $\delta(v, \leftarrow) \leq \lambda_{(x, v)}$.

\end{document}